\DeclareMathOperator{\cU}{\mathcal{U}}
\DeclareMathOperator{\cR}{\mathcal{R}}
\DeclareMathOperator{\cX}{\mathcal{X}}
\DeclareMathOperator{\cY}{\mathcal{Y}}
\DeclareMathOperator{\cH}{\mathcal{H}}
\DeclareMathOperator{\cZ}{\mathcal{Z}}
\newtheorem{theorem}{Theorem}[section]
\newtheorem{corollary}[theorem]{Corollary}
\newtheorem{lemma}[theorem]{Lemma}
\newtheorem{definition}[theorem]{Definition}
\newenvironment{claim}[1]{\par\noindent\underline{Claim:}\space#1}{}
\newenvironment{claimproof}[1]{\par\noindent\underline{Proof:}\space#1}{\hfill $\blacksquare$}
\newcommand{\E}{\mathbb{E}} 	
\newcommand{\intg}{\mathbb{Z}}
\newcommand{\inProd}[2]{\langle{#1},{#2}\rangle}
\newcommand{\poly}{\text{poly}}
\newcommand{\DGauss}[2]{D_{#1\ifthenelse{\equal{#1}{}}{}{,#2}}}
\newcommand{\eps}{\varepsilon}
\renewcommand{\epsilon}{\varepsilon}
\newcommand{\norm}[1]{\left\lVert#1\right\rVert}
\title{Quantum Algorithms and Lower Bounds\\ for Linear Regression with Norm Constraints}
\author{Yanlin Chen\thanks{QuSoft, CWI, the Netherlands. {\tt yanlin.chen@cwi.nl}} 
\and 
Ronald de Wolf\thanks{QuSoft, CWI and University of Amsterdam, the Netherlands. Partially supported by the Dutch Research Council (NWO) through Gravitation-grant Quantum Software Consortium, 024.003.037, and through QuantERA ERA-NET Cofund project QuantAlgo 680-91-034. {\tt rdewolf@cwi.nl}} }
\date{}
\begin{document}

\maketitle

\begin{abstract}
Lasso and Ridge are important minimization problems in machine learning and statistics. They are versions of linear regression with squared loss where the vector $\theta\in\mathbb{R}^d$ of coefficients is constrained in either $\ell_1$-norm (for Lasso) or in $\ell_2$-norm (for Ridge).  We study the complexity of quantum algorithms for finding $\eps$-minimizers for these minimization problems.  We show that for Lasso we can get a quadratic quantum speedup in terms of $d$ by speeding up the cost-per-iteration of the Frank-Wolfe algorithm, while for Ridge the best quantum algorithms are linear in $d$, as are the best classical algorithms.  As a byproduct of our quantum lower bound for Lasso, we also prove the first classical lower bound for Lasso that is tight up to polylog-factors.
\end{abstract}

\section{Introduction}

\subsection{Linear regression with norm constraints}

One of the simplest, most useful and best-studied problems in machine learning and statistics is \emph{linear regression}. We are given $N$ data points $\{(x_i,y_i)\}_{i=0}^{N-1}$ where $x\in\mathbb{R}^d$ and $y\in\mathbb{R}$, and want to fit a line through these points that has small error. In other words, we want to find a vector $\theta\in\mathbb{R}^d$ of coefficients such that the inner product $\inProd{\theta}{x}=\sum_{j=1}^d\theta_jx_j$ is a good predictor for the $y$-variable.
There are different ways to quantify the error (``loss'') of such a $\theta$-vector, the most common being the squared error $(\inProd{\theta}{x}-y)^2$, averaged over the $N$ data points (or over an underlying distribution~$\mathcal{D}$ that generated the data).
If we let $X$ be the $N\times d$ matrix whose $N$ rows are the $x$-vectors of the data, then we want to find a $\theta\in\mathbb{R}^d$ that minimizes $\norm{X\theta-y}_2^2$.
This minimization problem has a well-known closed-form solution: $\theta=(X^T X)^+X^T y$, where the superscript `$+$' indicates the Moore-Penrose pseudoinverse.

In practice, unconstrained least-squares regression sometimes has problems with overfitting and often yields solutions $\theta$ where all entries are non-zero, even when only a few of the $d$ coordinates in the $x$-vector really matter and one would really hope for a sparse vector~$\theta$~\cite[see Chapters~2 and~13]{SB14}. This situation may be improved by ``regularizing'' $\theta$ via additional constraints. The most common constraints are to require that the $\ell_1$-norm or $\ell_2$-norm of $\theta$ is at most some bound $B$.\footnote{For ease of presentation we will set $B=1$. However, one can also set $B$ differently or even do a binary search over its values, finding a good $\theta$ for each of those values and selecting the best one at the end. Instead of putting a hard upper bound~$B$ on the norm, one may also include it as a penalty term in the objective function itself, by just minimizing the function $\norm{X\theta-y}_2^2+\lambda\norm{\theta}$, where $\lambda$ is a Lagrange multiplier and the norm of $\theta$ could be $\ell_1$ or $\ell_2$ (and could also be squared). This amounts to basically the same thing as our setup.}
Linear regression with an $\ell_1$-constraint is called \emph{Lasso} (due to Tibshirani~\cite{Tib96}), while with an $\ell_2$-constraint it is called \emph{Ridge} (due to Hoerl and Kennard~\cite{HK70}).

Both Lasso and Ridge are widely used for robust regression and sparse estimation in ML problems and elsewhere~\cite{Vin78,BG11}.
Consequently, there has been great interest in finding the fastest-possible algorithms for them. 
For reasons of efficiency, algorithms typically aim at finding not the exactly optimal solution but an $\eps$-minimizer, i.e., a vector $\theta$ whose loss is only an additive $\eps$ worse than the minimal-achievable loss. The best known results on the time complexity of classical algorithms for Lasso are an upper bound of $\tilde{\mathcal{O}}(d/\epsilon^2)$~\cite{HK11} and a lower bound of $\Omega(d/\epsilon)$~\cite{CSS11} (which we actually improve to a tight lower bound in this paper, see below);
for Ridge the best bound is $\tilde{\Theta}(d/\epsilon^2)$~\cite{HK11}, which is tight up to logarithmic factors.\footnote{For such bounds involving additive error $\eps$ to be meaningful, one has to put certain normalization assumptions on $X$ and $y$, which are given in the body of the paper. The $\mathcal{\tilde{O}}$ and $\tilde{\Theta}$-notation hides polylogarithmic factors. It is known that $N=\mathcal{O}((\log d)/\eps^2)$ data points suffice for finding an $\eps$-minimizer, which explains the absence of $N$ as a separate variable in these bounds.}

\subsection{Our results}
We focus on the \emph{quantum} complexity of Lasso and Ridge, investigating to what extent quantum algorithms can solve these problems faster.  Table~\ref{tab:linear_regression} summarizes the results.
The upper bounds are on time complexity (total number of elementary operations and queries to entries of the input vectors) while the lower bounds are on query complexity (which itself lower bounds time complexity).

\begin{table}[hbt]
\centering
\setlength{\tabcolsep}{20pt}
\renewcommand{\arraystretch}{1.5}
\begin{tabular}{|l| l| l| l|} 
 \hline
 \textbf{}  & \textbf{Upper bound} & \textbf{Lower bound} \\ [0.5ex] 
 \hline
\textbf{Lasso} & Classical~\cite{HK11}: $\tilde{\mathcal{O}}(d/\epsilon^2)$ & Classical [\emph{this work}]: $\tilde{\Omega}(d/\epsilon^2)$  \\
       &    Quantum [\emph{this work}]:   $\tilde{\mathcal{O}}(\sqrt{d}/\epsilon^2)$               &  Quantum [\emph{this work}]: $\Omega(\sqrt{d}/\epsilon^{1.5})$ \\
 \hline
 \textbf{Ridge} & Classical~\cite{HK11}: $\tilde{\mathcal{O}}(d/\epsilon^2)$& Classical~\cite{HK11}: $\Omega(d/\epsilon^2)$ \\
        &                    &  Quantum [\emph{this work}]: $\Omega(d/\epsilon)$\\
  \hline
\end{tabular}
\caption{Classical and quantum upper and lower bounds for Lasso and Ridge}
\label{tab:linear_regression}
\end{table}

\subsubsection{Lasso}
We design a quantum algorithm that finds an $\epsilon$-minimizer for Lasso in time  $\mathcal{\tilde{\mathcal O}}(\sqrt{d}/\epsilon^2)$. 
This gives a quadratic quantum speedup over the best-possible classical algorithm in terms of $d$, while the $\eps$-dependence remains the same as in the best known classical algorithm.

Our quantum algorithm is based on the Frank-Wolfe algorithm, a well-known iterative convex optimization method~\cite{FW56}. Frank-Wolfe, when applied to a Lasso instance, starts at the all-zero vector $\theta$ and updates this in $\mathcal{O}(1/\epsilon)$ iterations to find an $\epsilon$-minimizer.
Each iteration looks at the gradient of the loss function at the current point $\theta$ and selects the best among $2d$ directions for changing $\theta$ (each of the $d$ coordinates can change positively or negatively, whence $2d$ directions). The new $\theta$ will be a convex combination of the previous $\theta$ and this optimal direction of change. Note that Frank-Wolfe automatically generates \emph{sparse} solutions: only one coordinate of $\theta$ can change from zero to nonzero in one iteration, so the number of nonzero entries in the final $\theta$ is at most the number of iterations, which is $\mathcal{O}(1/\epsilon)$.

Our quantum version of Frank-Wolfe does not reduce the number of iterations, which remains $\mathcal{O}(1/\epsilon)$, but it does reduce the cost per iteration. In each iteration it selects the best among the $2d$ possible directions for changing $\theta$ by using a version of quantum minimum-finding on top of a quantum approximation algorithm for entries of the gradient (which in turn uses amplitude estimation). Both this minimum-finding and our approximation of entries of the gradient will result in approximation errors throughout.  Fortunately Frank-Wolfe is a very robust method which still converges if we carefully ensure those quantum-induced approximation errors are sufficiently small. 

Our quantum algorithm assumes coherent quantum query access to the entries of the data points $(x_i,y_i)$, as well as a relatively small QRAM (quantum-readable classical-writable classical memory). We use a variant of a QRAM data structure developed by Prakash and Kerenidis~\cite{Pra14,KP16}, to store the nonzero entries of our current solution $\theta$ in such a way that we can (1) quickly generate $\theta$ as a quantum state, and (2) quickly incorporate the change of $\theta$ incurred by a Frank-Wolfe  iteration.\footnote{Each iteration will actually change all nonzero entries of $\theta$ because the new $\theta$ is a convex combination of the old $\theta$ and a vector with one nonzero entry. Our data structure keeps track of a global scalar, which saves us the cost of separately adjusting all nonzero entries of $\theta$ in the data structure in each iteration.} Because our $\theta$ is $\mathcal{O}(1/\epsilon)$-sparse throughout the algorithm, we only need $\tilde{\mathcal{O}}(1/\epsilon)$ bits of QRAM.

We also prove a lower bound of $\Omega(\sqrt{d}/\epsilon^{1.5})$ quantum queries for Lasso, showing that the $d$-dependence of our quantum algorithm is essentially optimal, while our $\eps$-dependence might still be slightly improvable. Our lower bound strategy ``hides'' a subset of the columns of the data matrix~$X$ by letting those columns have slightly more $+1$s than $-1$, and observes that an approximate minimizer for Lasso allows us to recover this hidden set. 
We then use the composition property of the adversary lower bound~\cite{BL20} together with a worst-case to average-case reduction to obtain a quantum query lower bound for this hidden-set-finding problem, and hence for Lasso. 

Somewhat surprisingly, no tight \emph{classical} lower bound was known for Lasso prior to this work. To the best of our knowledge, the previous-best classical lower bound was  $\Omega(d/\epsilon)$, due to Cesa-Bianchi, Shalev-Shwartz, and Shamir~\cite{CSS11}.
As a byproduct of our quantum lower bound, we use the same set-hiding approach to prove for the first time the optimal (up to logarithmic factors) lower bound of $\tilde{\Omega}(d/\epsilon^2)$ queries for classical algorithms for Lasso (Appendix~\ref{app:classicalLBforLasso}).

\subsubsection{Ridge}
What about Ridge? Because $\ell_2$ is a more natural norm for quantum states than $\ell_1$, one might hope that Ridge is more amenable to quantum speedup than Lasso. 
Unfortunately this turns out to be wrong: we prove a quantum lower bound of $\Omega(d/\epsilon)$ queries for Ridge, using a similar strategy as for Lasso. This shows that the classical linear dependence of the runtime on $d$ cannot be improved on a quantum computer. Whether the $\eps$-dependence can be improved remains an open question.

\subsection{Related work}
As already cited in Table~\ref{tab:linear_regression}, Hazan and Koren~\cite{HK11} obtained an optimal classical algorithm for Ridge, and the best known classical algorithm for Lasso. Cesa-Bianchi, Shalev-Shwartz, and Shamir~\cite{CSS11} provided a non-optimal classical lower bound for Lasso, and their idea inspired us to hide a subset among the column of the data matrix and to use a Lasso solver to find that subset (our lower bound also benefited from the way composition of the adversary bound was used in~\cite{AW20}).

Du, Hsieh, Liu, You, and Tao~\cite{DHLYT20} also showed a quantum upper bound for Lasso based on quantizing parts of Frank-Wolfe, though their running time $\tilde{\mathcal{O}}(N^{3/2}\sqrt{d})$ is substantially worse than ours. The main goal of their paper was to establish differential privacy, not so much to obtain the best-possible quantum speedup for Lasso.
They also claim an $\Omega(\sqrt{d})$ lower bound for quantum algorithms for Lasso~\cite[Corollary~1]{DHLYT20}, without explicit dependence on $\eps$, but we do not fully understand their proof, which goes via a claimed equivalence with quantum SVMs.
Bellante and Zanero~\cite{BZ:qmpursuit} recently and independently used similar techniques as we use here for our Lasso upper bound (KP-trees and amplitude estimation) to give a polynomial quantum speedup for the classical matching-pursuit algorithm, which is a heuristic algorithm for the NP-hard problem of linear regression with a sparsity constraint, i.e., with an $\ell_0$-regularizer.

Another quantum approach for solving (unregularized) least-squares linear regression is based on the linear-systems algorithm of Harrow, Hassidim, and Lloyd~\cite{HHL09}. In this type of approach, the quantum algorithm very efficiently generates a solution vector $\theta$ \emph{as a quantum state} $\frac{1}{\norm{\theta}_2}\sum_i\theta_i\ket{i}$
(which is incomparable to our goal of returning $\theta$ as a classical vector). Chakraborty, Gily\'en, and Jeffery~\cite{CGJ18} used the framework of block-encodings to achieve this. Subsequently Gily\'en, Lloyd, and Tang~\cite{GLT18} obtained a ``dequantized'' classical algorithm for (unregularized) least-squares linear regression assuming length square sampling access to the input data, which again is incomparable to our setup.
The quantum algorithm was very recently improved with an $\ell_2$-regularizer by Chakraborty, Morolia, and Peduri~\cite{CMP:qlinereg}, thought still producing the final output as a quantum state rather than as a classical solution.

Norm-constrained linear regression is a special case of convex optimization. Quantum algorithms for various convex optimization problems have received much attention recently. For example, there has been a sequence of quantum algorithms for solving linear and semidefinite programs starting with Brand{\~a}o and Svore~\cite{brandao2016QSDPSpeedup,vAGGdW17,brandao2017QSDPSpeedupsLearning,vAG19,AG19zerosumgame}.
There have also been some polynomial speedups for matrix scaling~\cite{AGLNWW:scaling,gribling&nieuwboer:scaling} and for boosting in machine learning~\cite{arunachalam&maity:qboosting,izdebski&wolf:qboosting},
as well as some general speedups for converting membership oracles for a convex feasible set to separation oracles and optimization oracles~\cite{chakrabarti2018QuantumConvexOpt,apeldoorn2018ConvexOptUsingQuantumOracles,vApel20}.
On the other hand Garg, Kothari, Netrapalli, and Sherif~\cite{GKNS21noquantumspeedup} showed that the number of iterations for first-order algorithms for minimizing non-smooth convex functions cannot be significantly improved on a quantum computer; recently they generalized this result to higher-order algorithms~\cite{GKNS21porder}. Finally, there has also been work on quantum speedups for \emph{non-convex} problems, for instance on escaping from saddle points~\cite{ZLL:saddlepoints}.

\section{Preliminaries}


Throughout the paper, $d$ will always be the dimension of the ambient space $\mathbb{R}^d$, and $\log$ without a base will be the binary logarithm. It will be convenient for us to index entries of vectors starting from 0, so the entries~$x_i$ of a $d$-dimensional vector $x$ are indexed by $i\in\{0,\ldots,d-1\}=\intg_d$. $\cU_N=\mathcal{U}\{0,\ldots,N-1\}$ is the discrete uniform distribution over integers $0,1,2,\ldots,N-1$. 

\subsection{Computational model and quantum algorithms}
Our computational model is a classical computer (a classical random-access machine) that can invoke a quantum computer as a subroutine. The input is stored in quantum-readable read-only memory (a QROM), whose bits can be queried. The classical computer can also write bits to a quantum-readable classical-writable classical memory (a QRAM). The classical computer can send a description of a quantum circuit to the quantum computer; the quantum computer runs the circuit (which may include queries to the input bits stored in QROM and to the bits stored by the computer itself in the QRAM), measures the full final state in the computational basis, and returns the measurement outcome to the classical computer. In this model, an algorithm has time complexity $T$ if it uses at most $T$ elementary classical operations and quantum gates, quantum queries to the input bits stored in QROM, and quantum queries to the QRAM. The query complexity of an algorithm only measures the number of queries to the input stored in QROM. We call a (quantum) algorithm \emph{bounded-error} if (for every possible input) it returns a correct output with probability at least $9/10$. 

We will represent real numbers in computer memory using a number of bits of precision that is polylogarithmic in $d$, $N$, and $1/\eps$ (i.e., $\tilde{\mathcal{O}}(1)$ bits). This ensures all numbers are represented throughout our algorithms with negligible approximation error and we will ignore those errors later on for ease of presentation.

Below we state some important quantum algorithms that we will use as subroutines, starting with (an exact version of) Grover search and amplitude estimation.

\begin{theorem}[\cite{Grover96,BrassardHT98}]\label{thm:Grover}
    Let $f:\intg_d \rightarrow\{0, 1\}$ be a function that marks a set of elements $F = \{j \in \intg_d : f(j) = 1\}$ of known size $|F|$. Suppose that we have a quantum oracle $O_f$ such that $O_f:\ket{j}\ket{b} \rightarrow \ket{j}\ket{b\oplus f(j)}$. Then there exists a quantum algorithm that finds an index $j \in F$ with probability~1, using $\frac{\pi}{4}\sqrt{d/|F|}$ queries to $O_f$.
\end{theorem}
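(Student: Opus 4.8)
The plan is to run Grover's algorithm for a number of iterations tuned to $|F|$, and then invoke the standard ``exact search'' refinement to push the success probability all the way to $1$. First I would set up the usual two-dimensional invariant subspace. Let $\ket{\psi}=\frac{1}{\sqrt{d}}\sum_{j\in\intg_d}\ket{j}$ be the uniform superposition (preparable with no queries), and write $\ket{\psi}=\sin\theta\,\ket{F}+\cos\theta\,\ket{\bar F}$, where $\ket{F}=\frac{1}{\sqrt{|F|}}\sum_{j\in F}\ket{j}$, $\ket{\bar F}=\frac{1}{\sqrt{d-|F|}}\sum_{j\notin F}\ket{j}$, and $\sin\theta=\sqrt{|F|/d}$. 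Using $O_f$ together with phase kickback (apply $O_f$ to $\ket{j}\tfrac{\ket{0}-\ket{1}}{\sqrt2}$) we implement the reflection $I-2\ket{F}\bra{F}$ with a single query to $O_f$, while the reflection $2\ket{\psi}\bra{\psi}-I$ needs no queries. Their product, the Grover iterate $G$, acts on $\lspan\{\ket{F},\ket{\bar F}\}$ as a rotation by angle $2\theta$, so after $t$ iterations the state is $\sin((2t+1)\theta)\ket{F}+\cos((2t+1)\theta)\ket{\bar F}$, a computational-basis measurement returns some $j\in F$ with probability $\sin^2((2t+1)\theta)$, and the number of queries equals the number of iterations $t$.

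Next I would choose the iteration count. We want $(2t+1)\theta=\pi/2$, i.e.\ $t^\star=\tfrac{\pi}{4\theta}-\tfrac12$; since $\theta\ge\sin\theta=\sqrt{|F|/d}$, rounding $t^\star$ to the nearest integer $t$ gives $t\le t^\star+\tfrac12\le\tfrac{\pi}{4}\sqrt{d/|F|}$, so the query budget is respected by construction. If $t^\star$ happens to lie in $\nat$ we are already done: exactly $t^\star$ ordinary Grover iterations give success probability $1$. In general $t^\star\notin\nat$, plain Grover only reaches probability $\sin^2((2t+1)\theta)<1$, and the delicate part is eliminating this residual error without exceeding the budget.

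For that last step I would invoke the exact-amplitude-amplification trick of Brassard--H\o yer--Tapp~\cite{BrassardHT98}: replace the fixed $-1$ phases in the two reflections by suitably chosen phases $e^{i\phi_1},e^{i\phi_2}$ (equivalently, start the rotation from a slightly perturbed angle $\theta_0\in[0,2\theta]$ chosen so that $\theta_0+2t\theta=\pi/2$), so that after the appropriate integer number of generalized iterations the state lies \emph{exactly} in $\lspan\{\ket{j}:j\in F\}$; knowing $|F|$ is precisely what lets us compute $\theta$ and hence both $t$ and the corrective phases. Measuring in the computational basis then returns an index $j\in F$ with certainty. I expect the main obstacle to be exactly the bookkeeping here: one must check that this corrective (partial) rotation can be realized with at most one extra query and without pushing the total past $\tfrac{\pi}{4}\sqrt{d/|F|}$, which is where the slack in $t^\star=\tfrac{\pi}{4\theta}-\tfrac12$ — coming from $\arcsin$ being superlinear, $\theta\ge\sqrt{|F|/d}$ — gets used. (One should also note $|F|\ge1$ for the statement to be meaningful.)
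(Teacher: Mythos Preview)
The paper does not give its own proof of this theorem; it is simply quoted as a known result from~\cite{Grover96,BrassardHT98} in the preliminaries. Your sketch is exactly the standard argument from those references: the two-dimensional rotation picture for the Grover iterate together with the exact-amplitude-amplification refinement of Brassard--H\o yer--Tapp that uses knowledge of $|F|$ (and hence of $\theta$) to land precisely on $\ket{F}$. The bookkeeping concern you flag is legitimate and is handled in~\cite{BrassardHT98} either by replacing the phases in one of the iterations (rather than appending an extra one) or, equivalently, by adjusting the initial angle via an ancilla qubit at no query cost, so the total number of oracle calls stays within the stated bound.
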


Note that we can use the above ``exact Grover'' repeatedly to find all elements of $F$ with probability~1, removing in each search the elements of $F$ already found in earlier searches. This even works if we only know an upper bound on $|F|$.

\begin{corollary}[\cite{bcwz:qerror}]\label{thm:findallsolutions}
    Let $f:\intg_d \rightarrow\{0, 1\}$ be a function that marks a set of elements $F = \{j \in \intg_d : f(j) = 1\}$. Suppose we know an upper bound $u$ on the size of $F$ and we have a quantum oracle $O_f$ such that $O_f:\ket{j}\ket{b} \rightarrow \ket{j}\ket{b\oplus f(j)}$. Then there exists a quantum algorithm that finds $F$ with probability~1, using $\frac{\pi}{2}\sqrt{du} + u$ queries to $O_f$.
\end{corollary}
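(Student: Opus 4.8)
The plan is to apply the exact Grover algorithm of Theorem~\ref{thm:Grover} repeatedly, each time peeling off one newly found element of $F$ until the set is exhausted, and to account carefully for the fact that we do not know $|F|$ exactly but only an upper bound $u$. The one wrinkle is that exact Grover requires knowing the size of the marked set in advance, so I will run it for each candidate size. Concretely, maintain a set $S$ of elements of $F$ found so far (initially empty) and work with the modified oracle $O_{f_S}$ that marks $F\setminus S$, where $f_S(j)=f(j)\wedge[j\notin S]$; one query to $O_{f_S}$ costs one query to $O_f$ plus some internal gates (the test $j\notin S$ uses the QRAM, not the input). At each stage we want to find a fresh element of $F\setminus S$, but $|F\setminus S|$ is unknown.

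First I would handle a single ``find one new element'' step as follows. Guess the value $k=|F\setminus S|$ by trying $k=u-|S|,\,u-|S|-1,\,\ldots,\,1$ in decreasing order; for each guess run the exact Grover algorithm of Theorem~\ref{thm:Grover} with parameter $|F|:=k$, which uses $\frac{\pi}{4}\sqrt{d/k}$ queries, then check with one more query to $O_f$ whether the returned index $j$ is actually in $F\setminus S$. As soon as the guess $k$ matches the true value of $|F\setminus S|$, the algorithm returns a genuine element with probability~$1$; for larger guesses it may fail, but the verification query catches this and we move to the next guess. When $F\setminus S$ becomes empty, every guess fails and we terminate. The total number of queries to $O_f$ spent on all the Grover invocations across the whole procedure is $\sum$ over all (stage, guess) pairs of $\frac{\pi}{4}\sqrt{d/k}$, and the verification queries contribute one per guess.

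The key step is bounding this sum. For a fixed true remaining size $m=|F\setminus S|$ at some stage, we try guesses $k=u-|S|,\ldots,m$ before succeeding, i.e.\ guesses $k\ge m$; and the first successful element removal then decreases the remaining size from $m$ to $m-1$. Reorganizing the double sum, each value $k\in\{1,\ldots,u\}$ is used as a guess at most once per ``level'', and summing $\frac{\pi}{4}\sqrt{d/k}$ over $k=1$ to $u$ gives $\frac{\pi}{4}\sqrt{d}\sum_{k=1}^{u}1/\sqrt{k}\le \frac{\pi}{4}\sqrt{d}\cdot 2\sqrt{u}=\frac{\pi}{2}\sqrt{du}$, using $\sum_{k=1}^u k^{-1/2}\le 2\sqrt{u}$. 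The verification queries number at most one per guess; since there are at most $u$ successful removals and the failed guesses at the very end number at most $u-|S|\le u$, the verification overhead is $\mathcal{O}(u)$, and a more careful count (one verification per distinct guess value used, plus the terminating round) gives the clean bound $u$ stated in the corollary. Adding the two contributions yields the claimed $\frac{\pi}{2}\sqrt{du}+u$ queries to $O_f$, and correctness is with probability~$1$ because every individual exact-Grover call that is run with the correct size parameter succeeds deterministically and the verification is exact.

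I expect the only real subtlety to be the bookkeeping of \emph{which} guesses $k$ are tried at which stages so that the harmonic-type sum $\sum\sqrt{d/k}$ telescopes to $\frac{\pi}{2}\sqrt{du}$ rather than something larger; the rest is a direct invocation of Theorem~\ref{thm:Grover} together with the elementary inequality $\sum_{k=1}^{u}k^{-1/2}\le 2\sqrt{u}$. (An alternative, essentially equivalent, route is to guess the sizes in \emph{increasing} order $k=1,2,4,\ldots$ doubling each time, which also gives $\mathcal{O}(\sqrt{du})$ but with worse constants; the decreasing-order schedule above is what produces the sharp constant.)
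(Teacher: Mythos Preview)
Your proposal has a genuine gap in the query-counting step. The two-loop structure you describe --- an outer loop over ``stages'' (one per element of $F$ to be found) and an inner loop that at each stage tries guesses $k=u-|S|,\,u-|S|-1,\,\ldots$ --- does \emph{not} use each value of $k$ at most once. Consider $|F|=u/2$. At stage~$0$ (worst case: every wrong guess fails verification) you try $k=u,u-1,\ldots,u/2$ before succeeding; at stage~$1$ you restart at $k=u-1$ and go down to $u/2-1$; and so on. The guess $k=u/2$ is then used at every one of the stages $0,1,\ldots,u/2-1$, and in general a middle value of $k$ is used $\Theta(u)$ times. The resulting total is $\Theta(\sqrt{d}\,u^{3/2})$, not $\frac{\pi}{2}\sqrt{du}$. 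Your sentence ``each value $k$ is used as a guess at most once per `level'\,'' does not hold for the algorithm you wrote down, and there is no telescoping here.

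The paper's proof avoids this by collapsing the two loops into one: a single loop $k=u,u-1,\ldots,1$, with each $k$ used exactly once. After each call you verify and, if successful, add the found index to~$S$; then you decrement $k$ \emph{regardless} of success. The key invariant is that $k\geq |F\setminus S|$ throughout: if the call succeeded, both sides drop by~$1$; if it failed, then since exact Grover (Theorem~\ref{thm:Grover}) would have succeeded had $k=|F\setminus S|$, we learn that $k>|F\setminus S|$, so decrementing $k$ preserves the invariant. Hence by the time $k$ reaches $1$ all of $F$ has been found, and the total cost is exactly $\sum_{k=1}^{u}\bigl(\tfrac{\pi}{4}\sqrt{d/k}+1\bigr)\le \tfrac{\pi}{2}\sqrt{du}+u$. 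Your algorithm becomes this one if, at each new stage, you start the inner loop at $k_{\text{prev}}-1$ (where $k_{\text{prev}}$ is the guess that just succeeded) rather than at $u-|S|$; that is the missing idea.
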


\begin{proof}
Use the following algorithm: 
\begin{enumerate}
\item Set $S=\emptyset$
\item For $k=u$ downto 1 do:\\
\hspace*{1em}use Theorem~\ref{thm:Grover} on a modification $g$ of $f$, where $g(j)=0$ for all $j\in S$, assuming $|F|=k$;\\
\hspace*{1em}check that the returned value $j$ satisfies $f(j)=1$ by one more query; if so, add $j$ to $S$.
\end{enumerate}
Since we don't know $|F|$ exactly at the start, we are not guaranteed that each run of Grover finds another solution.
However, $k$ will always be an upper bound on the number of not-yet-found elements of $F$: either we found a new solution $j$ and we can reduce $k$ by 1 for that reason, or we did not find a new solution and then we know (by the correctness of the algorithm of Theorem~\ref{thm:Grover}) that the actual number of not-yet-found solutions was $<k$ and we are justified in reducing $k$ by~1. Hence at the end of the algorithm all elements of $F$ were found ($S=F$) with probability~1. The total number of queries is
$\displaystyle
\sum_{k=1}^u(\frac{\pi}{4}\sqrt{d/k}+1)\leq \frac{\pi}{4}\sqrt{d}\int_0^u \frac{1}{\sqrt{x}}dx \, +u=\frac{\pi}{2}\sqrt{du}+u$.
\end{proof}

\begin{theorem}[\cite{BHMT00}, Theorem 12]\label{thm:amplitude_estimation}
Given a natural number $M$ and access to an $(n + 1)$-qubit unitary $U$ satisfying
\[
U\ket{0^n}\ket{0}= \sqrt{a}\ket{\phi_1}\ket{1} +\sqrt{1-a}\ket{\phi_0}\ket{0},
\]
where $\ket{\phi_1}$ and $\ket{\phi_0}$ are arbitrary $n$-qubit states and $0 < a < 1$,
there exists a quantum algorithm that uses $\mathcal{O}(M)$ applications of $U$ and $U^\dagger$ and $\tilde{\mathcal{O}}(M)$ elementary gates, and outputs a state $\ket{\Lambda}$ such that after measuring that state, with probability $\geq 9/10$, the first register $\lambda$ of the outcome satisfies
\[
|a-\lambda| \leq \frac{\sqrt{(1-a)a}}{M}+\frac{1}{M^2}.
\]
\end{theorem}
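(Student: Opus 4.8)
The plan is to reduce the task to quantum phase estimation applied to a Grover-type iterate, following Brassard, Høyer, Mosca, and Tapp. Write $a=\sin^2\theta$ for the unique $\theta\in[0,\pi/2]$, and split the state as $U\ket{0^n}\ket{0}=\sqrt{a}\,\ket{\psi_{\mathrm g}}+\sqrt{1-a}\,\ket{\psi_{\mathrm b}}$ with $\ket{\psi_{\mathrm g}}=\ket{\phi_1}\ket{1}$, $\ket{\psi_{\mathrm b}}=\ket{\phi_0}\ket{0}$ lying in the ``good'' and ``bad'' subspaces singled out by the last qubit. Define the iterate $Q=-U R_0 U^\dagger R_{\mathrm g}$, where $R_0=\id-2\ket{0^{n+1}}\bra{0^{n+1}}$ and $R_{\mathrm g}$ applies a phase $-1$ exactly to basis states whose last qubit is $\ket 1$ (so $R_{\mathrm g}$ is essentially a single-qubit gate costing $\mathcal{O}(1)$, and one application of $Q$ costs one $U$, one $U^\dagger$, and $\mathcal{O}(n)$ extra gates). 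Since $Q$ is the product of the two reflections $\id-2\Pi_{\mathrm g}$ (about the bad axis) and $2\ket{\psi_0}\bra{\psi_0}-\id$ (about $\ket{\psi_0}:=U\ket{0^n}\ket{0}$), whose axes span an angle $\theta$, the operator $Q$ preserves the two-dimensional subspace $\lspan\{\ket{\psi_{\mathrm g}},\ket{\psi_{\mathrm b}}\}$ and acts there as a rotation by $2\theta$; hence it has eigenvalues $e^{\pm 2i\theta}$ with eigenvectors $\ket{\psi_\pm}$, and one checks that $\ket{\psi_0}=\tfrac{1}{\sqrt2}(e^{-i\theta}\ket{\psi_+}+e^{i\theta}\ket{\psi_-})$, i.e.\ $\ket{\psi_0}$ has squared overlap $1/2$ with each eigenvector.

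Next I would run quantum phase estimation on $Q$ with input state $\ket{\psi_0}$, using a control register of $t=\log M+\mathcal{O}(1)$ qubits; the controlled powers $Q^{2^0},\dots,Q^{2^{t-1}}$ together use $\mathcal{O}(2^t)=\mathcal{O}(M)$ applications of $Q$, hence $\mathcal{O}(M)$ applications of $U$ and $U^\dagger$, while the inverse QFT ($\mathcal{O}(t^2)$ gates) and the arithmetic below add only $\tilde{\mathcal{O}}(1)$, for $\tilde{\mathcal{O}}(M)$ elementary gates total. Because $\ket{\psi_0}$ is an equal-weight superposition of the eigenvectors with eigenphases $2\theta$ and $2\pi-2\theta$, measuring the control register yields an integer $y$ for which $\tilde\theta:=\pi y/2^t$ is, with probability at least $8/\pi^2$, within $\mathcal{O}(1/2^t)$ of $\theta$ or of $\pi-\theta$; replacing $\tilde\theta$ by $\pi-\tilde\theta$ in the latter case (which leaves $\sin^2\tilde\theta$ unchanged) we may assume $|\tilde\theta-\theta|\le\delta$ for $\delta=\mathcal{O}(1/2^t)$. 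Choosing the hidden constant in $t$ large enough forces $\delta\le 1/(2M)$, and taking the median of $\mathcal{O}(1)$ independent runs boosts the success probability from $8/\pi^2$ to $9/10$ (changing only constants, not the $\mathcal{O}(M)$ bound). Coherently computing $\lambda:=\sin^2\tilde\theta$ into an output register with $\tilde{\mathcal{O}}(1)$-bit precision, the promised state $\ket{\Lambda}$ is the full state just before the final measurement.

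Finally, for the error bound I would use the identity $\sin^2 x-\sin^2 y=\sin(x-y)\sin(x+y)$ with $x=\tilde\theta$, $y=\theta$, together with $|\sin\delta|\le|\delta|$, $|\sin(2\theta+\delta)|\le|\sin 2\theta|+|\delta|$, and $\sin 2\theta=2\sqrt{a(1-a)}$, to get
\[
|\lambda-a|=\bigl|\sin(\tilde\theta-\theta)\,\sin(\tilde\theta+\theta)\bigr|\le|\delta|\bigl(|\sin 2\theta|+|\delta|\bigr)=2\delta\sqrt{a(1-a)}+\delta^2 .
\]
Since $\delta\le 1/(2M)$, the right-hand side is at most $\sqrt{(1-a)a}/M+1/M^2$, which is exactly the claimed accuracy. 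The verification that $Q$ is the stated rotation and the gate/query accounting are routine; the steps needing the most care are (i) correctly tracking the two-eigenvalue structure through phase estimation together with the $\tilde\theta\mapsto\pi-\tilde\theta$ ambiguity, and (ii) converting the additive error $\delta$ in the \emph{angle} into the sharp $\sqrt{(1-a)a}/M+1/M^2$ form of the error in $a$, which is precisely what produces the extra accuracy of amplitude estimation when $a$ is close to $0$ or $1$.
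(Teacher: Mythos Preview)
The paper does not give its own proof of this statement: it is quoted verbatim as Theorem~12 of Brassard, H{\o}yer, Mosca, and Tapp~\cite{BHMT00} and used as a black box. Your proposal correctly reconstructs the standard BHMT argument---building the Grover iterate $Q$, identifying its action as a rotation by $2\theta$ on the two-dimensional span of the good and bad states, running phase estimation on $Q$ with $\ket{\psi_0}$ as input, and converting the additive angle error into the stated amplitude bound via $\sin^2 x-\sin^2 y=\sin(x-y)\sin(x+y)$---so there is nothing to compare against here beyond noting that you have essentially reproduced the proof from the cited reference.

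One small remark: the theorem as stated in the paper asks for a \emph{state} $\ket{\Lambda}$ (not a classical output), so when you boost the success probability from $8/\pi^2$ to $9/10$ via ``the median of $\mathcal{O}(1)$ independent runs,'' you should make explicit that this median is computed coherently (run the phase-estimation circuit $\mathcal{O}(1)$ times in parallel registers and apply a reversible median circuit), so that the final pre-measurement state still has the desired form. Alternatively, you can avoid the median entirely by adding $\mathcal{O}(1)$ extra control qubits in phase estimation, which already pushes the success probability above $9/10$ while keeping the query count $\mathcal{O}(M)$.
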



The following is a modified version of quantum minimum-finding, which in its basic form is due to H{\o}yer and D\"urr~\cite{DH96}. Our proof uses a result from~\cite{vAGGdW17}, see Appendix~\ref{app:proof_approx_min}.

\begin{theorem}[min-finding with an approximate unitary]\label{thm:min_finding_approx} Let $\delta_1, \delta_2, \epsilon \in (0,1)$, $v_0,\ldots,v_{d-1}\in\mathbb{R}$. Suppose we have a unitary $\tilde{A}$ that maps $\ket{j}\ket{0}\rightarrow\ket{j}\ket{\Lambda_j}$ such that for every $j\in\intg_d$, after measuring the state $\ket{\Lambda_j}$, with probability $\geq 1-\delta_2$ the first register $\lambda$ of the measurement outcome satisfies $|\lambda-v_j|\leq \epsilon$. There exists a quantum algorithm that finds an index $j$ such that $v_j\leq \min_{k\in\intg_d}v_k+2\epsilon$ with probability $\geq 1-\delta_1-1000\log(1/\delta_1)\cdot\sqrt{2d\delta_2}$, using $1000\sqrt{d}\cdot \log(1/\delta_1)$ applications of $\tilde{A}$ and $\tilde{A}^\dagger$, and $\mathcal{\tilde{O}}(\sqrt{d})$ elementary gates. In particular, if $\delta_2 \leq \delta_1^2/(2000000d\log(1/\delta_1))$, then the above algorithm finds such a $j$ with probability $\geq 1-2\delta_1$.
\end{theorem}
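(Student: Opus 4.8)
The plan is to reduce the statement to a \emph{robust} version of the quantum minimum-finding algorithm of H{\o}yer and D\"urr~\cite{DH96}: the ``robust'' part --- that minimum finding still works when its comparison oracle is correct only with bounded error --- is the ingredient imported from~\cite{vAGGdW17}, and the remaining work is to (i) feed that machinery a comparison oracle built out of $\tilde A$, and (ii) propagate the two error parameters $\epsilon$ and $\delta_2$ through to the claimed query count and success probability.

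Concretely, the minimum-finder maintains a current ``threshold'' index $j^\star$ together with a classical number $\tau$ (an estimate of $v_{j^\star}$ obtained by running and measuring $\tilde A$ on $\ket{j^\star}\ket{0}$), and its inner loop is a Grover search over $j\in\intg_d$ for an index whose estimate beats $\tau$; I implement the marking step by applying $\tilde A$ to $\ket{j}\ket{0}$, comparing the value register against $\tau$ into a flag qubit, and uncomputing with $\tilde A^\dagger$. This marking oracle is ``$\delta_2$-reliable outside a window of width $2\epsilon$'': for $j$ with $v_j<\tau-\epsilon$ it flags $j$ with probability $\ge 1-\delta_2$, for $j$ with $v_j>\tau+\epsilon$ it does so with probability $\le\delta_2$, and for $|v_j-\tau|\le\epsilon$ it behaves arbitrarily. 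This is exactly the kind of oracle the robust minimum-finder can drive, and a successful run returns an index $j_{\mathrm{final}}$ with $v_{j_{\mathrm{final}}}\le\min_k v_k+2\epsilon$: the threshold only ever decreases, so $v_{j_{\mathrm{final}}}\le\tau_{\mathrm{final}}+\epsilon$, while a run that converges drives $\tau_{\mathrm{final}}$ down to at most $\min_k v_k+\epsilon$. The crucial point is that the $\epsilon$-errors do not accumulate over the $\Theta(\sqrt d)$ iterations of a run, because every comparison is made against the \emph{same} stored number $\tau$; a borderline index may be mis-flagged, but any such $j$ already satisfies $v_j\le\tau+\epsilon\le\min_k v_k+2\epsilon$ once $\tau$ is small, so it does no harm.

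For the quantitative bounds I would run $\Theta(\log(1/\delta_1))$ independent runs of $\Theta(\sqrt d)$ Grover iterations each and keep the best candidate; modelling the comparison oracle as \emph{exact}, standard amplification makes this succeed with probability $\ge 1-\delta_1$ using $1000\sqrt d\log(1/\delta_1)$ oracle calls and $\tilde{\mathcal{O}}(\sqrt d)$ extra gates --- and no $\log(1/\delta_2)$ factor appears because $\tilde A$ itself is \emph{not} boosted. To pass from the exact model to the real one, write $\ket{\Lambda_j}=\sqrt{1-p_j}\,\ket{\mathrm{good}_j}+\sqrt{p_j}\,\ket{\mathrm{bad}_j}$ with $p_j\le\delta_2$ and $\ket{\mathrm{good}_j}$ supported only on values within $\epsilon$ of $v_j$; then on every computational-basis input $\tilde A$ is within $\ell_2$-distance $\sqrt{2\delta_2}$ of an ideal unitary $A$ to which the exact analysis applies. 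Using that substituting an $\eta$-close unitary in a circuit with $T$ queries moves the output state by at most $T\eta$, each run's output distribution lies within $1000\sqrt d\cdot\sqrt{2\delta_2}=1000\sqrt{2d\delta_2}$ of its ideal counterpart, and a union bound over the $\log(1/\delta_1)$ runs yields the extra failure probability $1000\log(1/\delta_1)\sqrt{2d\delta_2}$ on top of $\delta_1$; the ``in particular'' clause is then just arithmetic with the assumed bound on $\delta_2$.

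I expect the one genuinely delicate point to be this last error-propagation step carried out honestly: one must handle the uncomputation step $\tilde A^\dagger$ (which is applied to states not of the clean form $\ket\phi\ket{0}$), make sure the globally corrupted state produced by the bounded-error oracle cannot conspire with the arbitrary behaviour of the comparison oracle in the borderline window to push $v_{j_{\mathrm{final}}}$ past $\min_k v_k+2\epsilon$, and squeeze the perturbation bound down to $\sqrt d$ (rather than $d$) inside the square root with the stated constants. For all of this I would invoke the robust-minimum-finding statement of~\cite{vAGGdW17} as a black box rather than re-derive it here.
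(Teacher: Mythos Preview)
Your approach is correct and shares the paper's two core ingredients: (i) invoke a minimum-finding primitive from~\cite{vAGGdW17} on an \emph{ideal} oracle, and (ii) pass to the real oracle $\tilde A$ by constructing an ideal unitary within operator-norm distance $\sqrt{2\delta_2}$ of it and propagating the error $K\sqrt{2\delta_2}$ through the $K=1000\sqrt d\log(1/\delta_1)$ calls.

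The difference is in how (i) is set up. You run an explicit D\"urr--H{\o}yer loop with a classically stored threshold $\tau$ and a marking oracle built from $\tilde A,\tilde A^\dagger$, and then argue separately that the $\epsilon$-window does no harm and that uncomputation garbage is absorbed by the perturbation bound. The paper instead feeds the state-preparation unitary $U=A(H^{\otimes\log d}\otimes I)$ directly into the ``generalized minimum-finding'' black box of~\cite{vAGGdW17} (their Theorem~49), which already returns a pair $(j,\lambda)$ with $\lambda\le v^\ast+\epsilon$ from the superposition $\tfrac{1}{\sqrt d}\sum_j\ket{j}\ket{\Lambda_j^\epsilon}$; the $2\epsilon$ conclusion is then immediate (one $\epsilon$ from the threshold $m=v^\ast+\epsilon$, one from $|\lambda-v_j|\le\epsilon$), and no threshold bookkeeping or uncomputation analysis is needed. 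This also streamlines step~(ii): since the entire algorithm only requires $U$ to act correctly on the single input $\ket{0}\ket{0}$, the ideal $U$ is obtained from $\tilde U$ by a single two-dimensional rotation $V$ in $\mathrm{span}\{\ket{\psi},\ket{\tilde\psi}\}$, giving $\|U-\tilde U\|_{op}=\|I-V\|_{op}\le\sqrt{2\delta_2}$ directly --- whereas your marking-oracle route must control $A$ on all $d$ inputs $\ket{j}\ket{0}$, which needs a block-diagonal rotation $V=\sum_j\ket{j}\bra{j}\otimes V_j$ before the same bound goes through. Both routes reach the stated constants; the paper's is shorter because more is pushed into the cited black box.
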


\subsection{Expected and empirical loss}\label{sec:loss}
Let sample set $S = \{(x_i, y_i)\}_{i=0}^{N-1}$ be a set of i.i.d.\ samples from $\mathbb{R}^d \times \mathbb{R}$, drawn according to an unknown distribution~$\mathcal{D}$. A \emph{hypothesis} is a function $h : \mathbb{R}^d \rightarrow \mathbb{R}$, and $\mathcal{H}$ denotes a set of hypotheses.
To measure the performance of the prediction, we use a convex loss function $\ell:\mathbb{R}^2\rightarrow \mathbb{R}$. 
The \emph{expected loss} of $h$ with respect to $\mathcal{D}$ is denoted by $L_\mathcal{D}(h)=\E_{(x,y)\sim \mathcal{D}}[\ell(h(x),y)]$, and the \emph{empirical loss} of $h$ with respect to $S$ is denoted by $L_S(h)=\frac{1}{N}\sum\limits_{i\in\intg_N}\ell(h(x_i),y_i)$.

\begin{definition}
Let $\epsilon >0$. An $h\in \cH$ is an $\epsilon$-minimizer over $\cH$ with respect to distribution $\mathcal{D}$ if
\[
L_\mathcal{D}(h)-\min_{h'\in \cH} L_\mathcal{D}(h') \leq \epsilon.
\]
\end{definition}

\begin{definition}
Let $\epsilon >0$. An $h\in \cH$ is an $\epsilon$-minimizer over $\cH$ with respect to sample set $S$ if
\[
L_S(h)-\min_{h'\in \cH} L_S(h') \leq \epsilon.
\]
\end{definition}

\subsection{Linear regression problems and their classical and quantum setup}\label{sec:reg_model}

In linear regression problems, the hypothesis class is the set of linear functions on $\mathbb{R}^d$. The goal is to find a vector $\theta$ for which the corresponding hypothesis $\langle \theta, x\rangle$ provides a good prediction of the target $y$. One of the most natural choices for regression problems is the squared loss
$$
\ell(\hat{y},y)=(\hat{y}-y)^2.
$$ 
We can instantiate the expected and empirical losses as a function of $\theta$ using the squared loss:
$$
L_{\mathcal{D}}(\theta) = \E_{(x, y)\sim \mathcal{D}}[\ell(\inProd{\theta}{x}, y)]=\E_{(x, y)\sim \mathcal{D}}[(\inProd{\theta}{x}-y)^2],
$$ 
$$
L_S(\theta)=\frac{1}{N}\sum\limits_{i\in\intg_N} \ell(\inProd{\theta}{x},y_i)=\frac{1}{N}\sum\limits_{i\in\intg_N} (\inProd{\theta}{x}-y_i)^2.
$$ 
We also write the empirical loss as $L_S(\theta)=\frac{1}{N}\|X\theta-y\|^2_2$, where matrix entry $X_{ij}$ is the $j$th entry of the vector $x_i$, and $y$ is the $N$-dimensional vector with entries $y_i$.
As we will see below, if the instances in the sample set are chosen i.i.d.\ according to $\mathcal{D}$, and $N$ is sufficiently large, then 
$L_S(\theta)$ and $L_{\mathcal{D}}(\theta)$ are typically close by the law of large numbers. 

In the quantum case, we assume the sample set $S$ is stored in a QROM, which we can access by means of queries to the oracles $O_X:\ket{i}\ket{j}\ket{0}\rightarrow \ket{i}\ket{j}\ket{X_{ij}}$ and $O_y:\ket{i}\ket{0}\rightarrow \ket{i}\ket{y_{i}}$. 

\subsubsection{Lasso}

The \emph{least absolute shrinkage and selection operator}, or \emph{Lasso}, is a special case of linear regression with a norm constraint on the vector $\theta$: it restricts solutions to the unit $\ell_1$-ball, which we denote by $B_1^d$. For the purpose of normalization, we require that every sample $(x,y)$ satisfies $\|x\|_\infty \leq 1$ and $|y|\leq 1$.%
\footnote{Note that if $\theta\in B_1^d$ and $\|x\|_\infty \leq 1$, then $|\inProd{\theta}{x}|\leq 1$  by H{\"o}lder's inequality.}
The goal is to find a $\theta\in B^d_1$ that (approximately) minimizes the expected loss. Since the expected loss is not directly accessible, we instead find an approximate minimizer of the empirical loss. Mohri, Rostamizadeh, and Talwalkar~\cite{MRT18} showed that with high probability, an approximate minimizer for \emph{empirical} loss is also a good approximate minimizer for \emph{expected} loss.

\begin{theorem}[\cite{MRT18}, Theorem~11.16]\label{thm:Rademacher_Lasso}
Let $\mathcal{D}$ be an unknown distribution over $[-1,1]^d\times [-1,1]$ and $S=\{(x_i,y_i)\}_{i=0}^{N-1}$ be a sample set containing $N$ i.i.d.\ samples from $\mathcal{D}$. Then, for each $\delta > 0$, with probability $\geq 1 -\delta$ over the choice of $S$, the following holds for all $\theta\in B^d_1$: 
 \begin{align*}
L_\mathcal{D}(\theta)-L_S(\theta)\leq 4\sqrt{\frac{2\log (2d)}{N}}+4\sqrt{\frac{\log (1/\delta)}{2N}}.
 \end{align*}
\end{theorem}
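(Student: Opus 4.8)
The plan is to prove this as a standard uniform-convergence bound obtained from Rademacher complexity, specialized to the $\ell_1$-ball. Let $\mathcal{G}=\{g_\theta:(x,y)\mapsto(\inProd{\theta}{x}-y)^2 \mid \theta\in B_1^d\}$ be the class of loss functions and, for a sample set $S$, set $\Phi(S)=\sup_{\theta\in B_1^d}\big(L_\mathcal{D}(\theta)-L_S(\theta)\big)$. First I would record the two elementary facts that make everything quantitative: since $\theta\in B_1^d$ and $\|x\|_\infty\le1$ give $|\inProd{\theta}{x}|\le1$ by Hölder, and $|y|\le1$, every $g_\theta$ takes values in $[0,4]$; and replacing a single sample $(x_i,y_i)$ changes $\Phi(S)$ by at most $4/N$. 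A one-sided McDiarmid inequality then yields, with probability $\ge1-\delta$, that $\Phi(S)\le\E_S[\Phi(S)]+4\sqrt{\log(1/\delta)/(2N)}$; this is exactly the second term of the claimed bound, and the one-sidedness (we only need $L_\mathcal{D}-L_S$, not $|L_\mathcal{D}-L_S|$) is why a single $\sqrt{\log(1/\delta)}$, rather than $\sqrt{\log(2/\delta)}$, appears.

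It remains to bound $\E_S[\Phi(S)]$. The classical symmetrization argument gives $\E_S[\Phi(S)]\le2\mathfrak{R}_N(\mathcal{G})$, where $\mathfrak{R}_N(\mathcal{G})=\E_{S,\sigma}\!\big[\sup_{\theta\in B_1^d}\tfrac1N\sum_{i\in\intg_N}\sigma_i\,g_\theta(x_i,y_i)\big]$ is the Rademacher complexity of the loss class. To remove the square I would invoke Talagrand's contraction lemma: for each fixed $y\in[-1,1]$ the map $u\mapsto(u-y)^2$ is Lipschitz on the interval $[-1,1]$ containing all the values $\inProd{\theta}{x_i}$, so $\mathfrak{R}_N(\mathcal{G})$ is bounded by (a constant times) the Rademacher complexity $\mathfrak{R}_N(\mathcal{F})$ of the linear class $\mathcal{F}=\{x\mapsto\inProd{\theta}{x}\mid\theta\in B_1^d\}$. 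The crucial step — the one that produces $\log d$ rather than $d$ — is to bound $\mathfrak{R}_N(\mathcal{F})$: by $\ell_1/\ell_\infty$ duality, $\sup_{\theta\in B_1^d}\sum_i\sigma_i\inProd{\theta}{x_i}=\|\sum_i\sigma_ix_i\|_\infty=\max_{j\in\intg_d}|\sum_i\sigma_iX_{ij}|$, and each of the $2d$ random variables $\pm\sum_i\sigma_iX_{ij}$ is a mean-zero sum of terms in $[-1,1]$, hence subgaussian with variance proxy $\sum_iX_{ij}^2\le N$; Massart's finite-class lemma bounds the expected maximum of these $2d$ variables by $\sqrt{2N\log(2d)}$, so $\widehat{\mathfrak{R}}_S(\mathcal{F})\le\sqrt{2\log(2d)/N}$ for every $S$, and hence $\mathfrak{R}_N(\mathcal{F})\le\sqrt{2\log(2d)/N}$ as well. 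Chaining these inequalities gives a bound of the form $c\sqrt{2\log(2d)/N}+4\sqrt{\log(1/\delta)/(2N)}$, where the constant $c$ comes from the exact forms of the Rademacher generalization theorem and the contraction lemma in~\cite{MRT18}.

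I expect the only real care to be in the constant bookkeeping of the last step — i.e.\ tracking how the Lipschitz factor from contraction and the factor $2$ from symmetrization combine, against the precise constants in \cite{MRT18}, into the stated leading factor $4$; the conceptual content (McDiarmid $\to$ symmetrization $\to$ contraction $\to$ Massart) is entirely routine. Two small points are worth flagging: the bound is phrased with the sample-size-dependent Rademacher complexity $\mathfrak{R}_N$ (not the empirical $\widehat{\mathfrak{R}}_S$), which is legitimate and convenient because the $\ell_1$-ball estimate $\widehat{\mathfrak{R}}_S(\mathcal{F})\le\sqrt{2\log(2d)/N}$ holds for every fixed $S$ using only $|X_{ij}|\le1$, so no extra concentration step for the Rademacher term is needed; and the statement is exactly the one-sided bound on $L_\mathcal{D}-L_S$ that the later Lasso argument requires, so nothing is lost to union-bounding over the two tails.
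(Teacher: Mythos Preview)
The paper does not give its own proof of this statement: it is quoted verbatim as Theorem~11.16 of~\cite{MRT18} and used as a black box. So there is nothing in the paper to compare your argument against, and your proposal is in fact supplying a proof where the authors chose to cite one.

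That said, your sketch is the standard and correct route (McDiarmid for the deviation term, symmetrization, Talagrand contraction to peel off the squared loss, then Massart's lemma via $\ell_1/\ell_\infty$ duality for the linear class). One point to watch if you actually carry it out: the map $u\mapsto(u-y)^2$ is $4$-Lipschitz on $[-1,1]$ when $|y|\le1$, so the naive chain ``$2$ from symmetrization times $4$ from contraction'' gives a leading constant $8$, not the stated $4$. Getting down to $4$ requires the specific form of the generalization bound used in~\cite{MRT18} (and/or the version of contraction without the extra factor of~$2$), exactly as you anticipated in your remark about constant bookkeeping; you will need to consult the source to pin this down rather than leave it as a caveat.
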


This theorem implies that if  $N=c\log(d/\delta)/\eps^2$ for sufficiently large constant $c$, then finding (with error probability $\leq \delta$) an $\eps$-minimizer for the \emph{empirical} loss $L_S$, implies finding (with error probability $\leq 2\delta$ taken both over the randomness of the algorithm and the choice of the sample~$S$) a $2\eps$-minimizer for the \emph{expected} loss~$L_\mathcal{D}$.

\subsubsection{Ridge}

Another special case of linear regression with a norm constraint is \emph{Ridge}, which restricts solutions to the unit $\ell_2$-ball $B_2^d$. For the purpose of normalization, we now require that every sample $(x,y)$ satisfies $\|x\|_2\leq 1$ and $|y|\leq 1$. Similarly to the Lasso case, Mohri, Rostamizadeh, and Talwalkar~\cite{MRT18} showed that with high probability, an approximate minimizer for the empirical loss is also a good approximate minimizer for the expected loss.

\begin{theorem}[\cite{MRT18}, Theorem~11.11]\label{thm:Rademacher_Ridge}
Let $\mathcal{D}$ be an unknown distribution over $B_2^d\times [-1,1]$ and $S=\{(x_i,y_i)\}_{i=0}^{N-1}$ be a sample set containing $N$ i.i.d.\ samples from $\mathcal{D}$. Then, for each $\delta  > 0$, with probability $\geq 1 -\delta$ over the choice of $S$, the following holds for all $\theta\in B^d_2$: 
 \begin{align*}
L_\mathcal{D}(\theta)-L_S(\theta)\leq 8\sqrt{\frac{1}{N}}+4\sqrt{\frac{\log (1/\delta)}{2N}}.
 \end{align*}
\end{theorem}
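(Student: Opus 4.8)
The plan is to follow the standard Rademacher-complexity route for uniform convergence, specialized to the $\ell_2$-ball (this is essentially a reconstruction of the proof of \cite[Theorem~11.11]{MRT18}). Write $\mathcal{G}=\{(x,y)\mapsto(\inProd{\theta}{x}-y)^2:\theta\in B_2^d\}$ for the induced loss class and $\mathcal{H}=\{x\mapsto\inProd{\theta}{x}:\theta\in B_2^d\}$ for the linear hypothesis class. Since every sample satisfies $\|x\|_2\le1$ and $\theta\in B_2^d$, Cauchy--Schwarz gives $|\inProd{\theta}{x}|\le1$, and combined with $|y|\le1$ this puts the residual $\inProd{\theta}{x}-y$ in $[-2,2]$, so every function in $\mathcal{G}$ takes values in $[0,4]$. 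The first step is to invoke the general uniform-convergence bound for bounded loss classes (the $[0,M]$-valued version of \cite[Theorem~3.3]{MRT18}, with $M=4$): with probability $\ge1-\delta$ over the draw of $S$, for all $\theta\in B_2^d$ simultaneously,
\[
L_\mathcal{D}(\theta)-L_S(\theta)\le 2\,\cR_N(\mathcal{G})+4\sqrt{\frac{\log(1/\delta)}{2N}},
\]
where $\cR_N$ denotes the (expected) Rademacher complexity; the $4$ in the last term is exactly the range of the loss.

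The second step is to peel off the squared loss. For each fixed label $y_i$, the map $z\mapsto(z-y_i)^2$ is $4$-Lipschitz on $[-2,2]$ (its derivative $2(z-y_i)$ is bounded by $4$ there), so Talagrand's contraction lemma (in the form allowing a different Lipschitz function per coordinate) gives $\cR_N(\mathcal{G})\le 4\,\cR_N(\mathcal{H})$. The third step bounds the Rademacher complexity of the linear class directly:
\[
\cR_N(\mathcal{H})=\frac{1}{N}\E_\sigma\Big[\sup_{\theta\in B_2^d}\big\langle\theta,\sum_{i}\sigma_i x_i\big\rangle\Big]
=\frac{1}{N}\E_\sigma\Big[\big\|\sum_i\sigma_i x_i\big\|_2\Big]
\le\frac{1}{N}\sqrt{\E_\sigma\big\|\sum_i\sigma_i x_i\big\|_2^2}
=\frac{1}{N}\sqrt{\sum_i\|x_i\|_2^2}\le\frac{1}{\sqrt{N}},
\]
using that $\sup_{\theta\in B_2^d}\inProd{\theta}{v}=\|v\|_2$, Jensen's inequality, pairwise independence of the $\sigma_i$ (so the cross terms vanish), and $\|x_i\|_2\le1$. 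Chaining the three steps gives $L_\mathcal{D}(\theta)-L_S(\theta)\le 2\cdot4\cdot\frac{1}{\sqrt{N}}+4\sqrt{\frac{\log(1/\delta)}{2N}}=8\sqrt{\frac{1}{N}}+4\sqrt{\frac{\log(1/\delta)}{2N}}$, which is the claim.

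The only part requiring genuine care is the bookkeeping of constants: identifying that the residual lies in $[-2,2]$ (this is precisely where the normalization hypotheses $\|x\|_2\le1$ and $|y|\le1$ are used), reading off the Lipschitz constant $4$ of the squared loss on that interval, and matching the range-$4$ factor in the deviation term. Everything else is a routine composition of textbook lemmas, so the main "obstacle" is bookkeeping rather than any conceptual difficulty; the structure is identical to the Lasso case (Theorem~\ref{thm:Rademacher_Lasso}), with the $\ell_1$/$\ell_\infty$ duality there replaced by $\ell_2$ self-duality here, which is what changes the Rademacher bound from the $\sqrt{(\log d)/N}$-type term to the dimension-free $1/\sqrt{N}$.
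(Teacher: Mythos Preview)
The paper does not actually prove this theorem; it is quoted verbatim from \cite{MRT18} without proof, so there is nothing in the paper to compare against. Your reconstruction is the standard Rademacher-complexity argument from that reference and is correct, with one cosmetic slip: you say the map $z\mapsto(z-y_i)^2$ is $4$-Lipschitz ``on $[-2,2]$'', but $z=\inProd{\theta}{x_i}$ actually ranges over $[-1,1]$ (it is the residual $z-y_i$ that lives in $[-2,2]$); on $[-1,1]$ the derivative $2(z-y_i)$ is indeed bounded by $4$, so the constant and the conclusion are unaffected.
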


\subsection{The KP-tree data structure and efficient state preparation}
Kerenidis and Prakash~\cite{Pra14,KP16} gave a quantum-accessible classical data structure to store a vector $\theta$ with support $t$ (i.e., $t$ nonzero entries) to enable efficient preparation of the state 
\[
\ket{\theta}=\sum\limits_{j\in \mathbb{Z}_d}\sqrt{\frac{|\theta_j|}{\|\theta\|_1}}\ket{j}\ket{sign(\theta_j)}. 
\]
In this subsection, we modify their data structure such that for arbitrary $a,b \in \mathbb{R}$ and $j\in \mathbb{Z}_d$, we can  efficiently update a data structure for the vector $\theta$ to a data structure for the vector $a\theta+be_j$, without having to individually update all nonzero entries of the vector. We call this data structure a ``KP-tree'' (or $KP_\theta$ if we're storing vector $\theta$) in order to credit Kerenidis and Prakash. 

\begin{definition}[KP-tree]\label{def:KPtree}
Let $\theta\in\mathbb{R}^d$ have support $t$. We define a KP-tree $KP_\theta$ of $\theta$ as follows:
\begin{itemize}
    \item $KP_\theta$ is a rooted binary tree with depth $\lceil\log d\rceil$ and with $\mathcal{O}(t\log d)$ vertices.
    \item The root stores a scalar $A\in \mathbb{R}\setminus\{0\}$ and the support $t$ of $\theta$.
    \item Each edge of the tree is labelled by a bit.
    \item For each $j\in supp(\theta)$, there is one corresponding leaf storing $\frac{\theta_j}{A}$. The number of leaves is $t$.
    \item The bits on the edges of the path from the root to the leaf corresponding to the $j^{th}$ entry of $\theta$, form the binary description of $j$.
    \item Each intermediate node stores the sum of its children's absolute values.
\end{itemize}
For $\ell\in \intg_{\lceil\log d\rceil}$ and $j\in \intg_{2^\ell}$, we define $KP_\theta(\ell,j)$ as the value of the $j^{th}$ node in the $\ell^{th}$ layer, i.e., the value stored in the node that we can reach by the path according to the binary representation of $j$ from the root. Also, we let $KP_\theta(0,0)$ be the sum of all absolute values stored in the leaves.
If there is no corresponding $j^{th}$ node in the $\ell^{th}$ layer (that is, we cannot reach a node by the path according to the binary representation of $j$ from the root), then $KP_\theta(\ell,j)$ is defined as $0$. Note that both the numbering of the layer and the numbering of nodes start from $0$. In the special case where $\theta$ is the all-0 vector, the corresponding tree will just have a root node with $t=0$.
\end{definition}

\begin{figure}[!htb]
   \begin{minipage}{0.49\textwidth}
     \centering
\tikzset{every tree node/.style={minimum width=0.2em,draw,circle},
         blank/.style={draw=none},
         edge from parent/.style=
         {draw, edge from parent path={(\tikzparentnode) -- (\tikzchildnode)}},
         level distance=1cm}
\begin{tikzpicture}[scale=0.90]
\Tree
[.20,3     
    [.1 
    \edge[blank]; \node[blank]{};
    \edge[]; [.1
             \edge[]; {1}
             \edge[blank]; \node[blank]{};
         ]
    ]
    [.5  
    \edge[blank]; \node[blank]{};
    \edge[]; [.5
             \edge[]; {2}  \edge[]; {-3}
         ]
    ]
]
\end{tikzpicture}
\qquad
\tikzset{every tree node/.style={minimum width=0.2em,draw,circle},
         blank/.style={draw=none},
         edge from parent/.style=
         {draw, edge from parent path={(\tikzparentnode) -- (\tikzchildnode)}},
         level distance=1cm}
\begin{tikzpicture}[scale=0.90]
\Tree
[.10,3     
    [.2 
    \edge[blank]; \node[blank]{};
    \edge[]; [.2
             \edge[]; {2}
             \edge[blank]; \node[blank]{};
         ]
    ]
    [.10  
    \edge[blank]; \node[blank]{};
    \edge[]; [.10
             \edge[]; {4}  \edge[]; {-6}
         ]
    ]
]
\end{tikzpicture}
\caption{Each of the above two binary trees represents the vector $\theta=20e_2+40e_6-60e_7$. If we see the second layer of KP$_\theta$ on the right-hand side, $KP_\theta(2,0)=0$, $KP_\theta(2,1)=2$, $KP_\theta(2,2)=0$, and $KP_\theta(2,3)=10$.} \label{fig:KPTree}
   \end{minipage}\hfill
   \begin{minipage}{0.49\textwidth}
     \centering
\tikzset{every tree node/.style={minimum width=0.2em,draw,circle},
         blank/.style={draw=none},
         edge from parent/.style=
         {draw, edge from parent path={(\tikzparentnode) -- (\tikzchildnode)}},
         level distance=1cm}
\begin{tikzpicture}[scale=0.90]
\Tree
[.10,3     
    [.2 
    \edge[blank]; \node[blank]{};
    \edge[]; [.2
             \edge[]; {2}
             \edge[blank]; \node[blank]{};
         ]
    ]
    [.5  
    \edge[blank]; \node[blank]{};
    \edge[]; [.5
             \edge[]; {2}  \edge[]; {-3}
         ]
    ]
]
\end{tikzpicture}
\qquad
\tikzset{every tree node/.style={minimum width=0.2em,draw,circle},
         blank/.style={draw=none},
         edge from parent/.style=
         {draw, edge from parent path={(\tikzparentnode) -- (\tikzchildnode)}},
         level distance=1cm}
\begin{tikzpicture}[scale=0.90]
\Tree
[.8,4     
    [.4 
    \edge[blank]; \node[blank]{};
    \edge[]; [.4
             \edge[]; {2}
             \edge[]; {2}
         ]
    ]
    [.5  
    \edge[blank]; \node[blank]{};
    \edge[]; [.5
             \edge[]; {2}  \edge[]; {-3}
         ]
    ]
]
\end{tikzpicture}
\caption{The update rule: we update the vector $\theta=20e_2+20e_6-30e_7$ to the new vector $\frac{4}{5}\theta+16e_4$ by updating the scalar in the root to $\frac{4}{5}\cdot 10=8$, adding a new leaf with value $16/8=2$, recomputing the values of the intermediate nodes between the root and the leaf, and updating the support number to $4$.} \label{fig:KPTreeUpdate}
   \end{minipage}
\end{figure}

\begin{theorem}For each $j\in \mathbb{Z}_d$, one can read the number $\theta_j$ by reading at most $\poly\log d$ nodes of $KP_\theta$ and by using $\poly\log d$ many (classical) elementary operations.
\end{theorem}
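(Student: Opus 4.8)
The plan is to exploit the defining property of $KP_\theta$ that the bit-labels along any root-to-leaf path spell out the binary representation of the index stored at that leaf. To recover $\theta_j$ I would simply walk down the tree from the root: having written $j$ in binary as $j=(j_1 j_2 \cdots j_{\lceil\log d\rceil})$, at step $\ell$ I follow the edge out of the current node whose label equals $j_\ell$. Along the way I also read, once, the scalar $A$ stored at the root.

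Two cases arise. If at some step $\ell$ the required child does not exist — equivalently, the prefix $(j_1,\ldots,j_\ell)$ does not correspond to a node of $KP_\theta$, so that $KP_\theta(\ell, \lfloor j/2^{\lceil\log d\rceil-\ell}\rfloor)=0$ — then by Definition~\ref{def:KPtree} the index $j$ is not in $\supp(\theta)$, hence $\theta_j=0$, and I output $0$. Otherwise the walk terminates at the unique leaf associated with $j$, which by definition stores $\theta_j/A$ (including its sign, so nothing extra is needed to recover $\mathrm{sign}(\theta_j)$); multiplying by the value $A$ read from the root gives $\theta_j = A\cdot(\theta_j/A)$.

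For the complexity: the walk visits exactly one node per layer, so at most $\lceil\log d\rceil+1=\mathcal{O}(\log d)$ nodes are read in total (the root, from which I also read $A$ and $t$, plus one node per layer). At each node I perform a constant number of bit extractions and comparisons, plus one final multiplication, each on numbers represented with $\tilde{\mathcal{O}}(1)$ bits of precision, so the overall cost is $\poly\log d$ nodes and $\poly\log d$ elementary operations. There is essentially no obstacle here; the only points worth noting are the treatment of the "missing child" case (which cleanly encodes $\theta_j=0$) and the tacit assumption that, given a node, each of its children can be located in $\poly\log d$ time, which holds for any standard pointer-based or addressed representation of the tree.
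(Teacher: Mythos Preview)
Your proof is correct and follows essentially the same approach as the paper: read the scalar $A$ at the root, descend along the path dictated by the binary representation of $j$, and either output $A$ times the leaf value or output $0$ if the path terminates prematurely. The only difference is that you spell out a few implementation details (missing-child detection, per-node cost) that the paper leaves implicit.
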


\begin{proof}
Read the scalar $A$ stored in the root. Choose the path according to the binary representation of $j$. If the chosen path reaches a leaf, then read the value $v$ at that leaf and output $vA$. If it does not reach a leaf, output $0$. The total cost is at most $\poly\log d$ because the depth of $KP_\theta$ is $\lceil\log d\rceil$.

From the fourth bullet of Definition~\ref{def:KPtree}, if $j\in supp(\theta)$, then the corresponding leaf $j$ stores $\theta_j/A$ and hence the output is $(\theta_j/A)\cdot A=\theta_j$. On the other hand, if $j\notin supp(\theta)$, then we do not reach a leaf and know $\theta_j=0$.
\end{proof}

\begin{theorem}\label{thm:KP_update_cost}
Given a KP-tree $KP_\theta$, $j\in\mathbb{Z}_d$,  and numbers $a\in \mathbb{R}\setminus \{0\}$ and $b\in \mathbb{R}$, we can update $KP_{\theta}$ to  $KP_{a\theta+be_j}$ by using $\poly\log d$ elementary operations and by modifying $\poly\log d$ many values stored in the nodes of $KP_\theta$.
\end{theorem}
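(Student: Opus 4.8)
The plan is to exploit the single global scalar $A$ stored at the root of the KP-tree: by folding the multiplicative factor $a$ into $A$, every leaf other than the one for coordinate $j$ will keep exactly the value it stored before, so only one root-to-leaf path has to be touched. Write $\theta' := a\theta + b e_j$ and let $A$ be the scalar currently at the root of $KP_\theta$. I would build $KP_{\theta'}$ by declaring its new root scalar to be $A' := aA$, which is nonzero because $a\neq 0$ and $A\neq 0$. For every $i\neq j$ we have $\theta'_i = a\theta_i$, so the leaf for coordinate $i$ ought to store $\theta'_i/A' = a\theta_i/(aA) = \theta_i/A$, which is exactly what it already stores; likewise every internal node whose subtree avoids the leaf of $j$ has a stored sum of absolute values that is unchanged. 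Hence it suffices to overwrite the root scalar and then repair only the leaf of $j$ together with its at most $\lceil\log d\rceil$ ancestors.

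Concretely, I would walk down from the root following the bits of the binary expansion of $j$, reading at most $\lceil\log d\rceil$ nodes, and distinguish cases. If the path ends at a leaf (so $j\in\supp(\theta)$), it stores $v=\theta_j/A$; recover $\theta_j=vA$, set $\theta'_j=a\theta_j+b$, and if $\theta'_j\neq 0$ overwrite the leaf with $\theta'_j/A'$ while keeping the support count, whereas if $\theta'_j=0$ delete that leaf (pruning any subtree that becomes empty) and decrease the support by $1$. If the path does not reach a leaf (so $j\notin\supp(\theta)$, hence $\theta_j=0$) and $b\neq 0$, create the $\le\lceil\log d\rceil$ missing nodes extending the path down to depth $\lceil\log d\rceil$, store $b/A'$ in the new leaf, and increase the support by $1$; if instead $b=0$ there is nothing more to do, since $\theta'=a\theta$ already. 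In every branch, finish by walking back up from the deepest affected node to the root, recomputing at each internal node the sum of the absolute values of its (at most two) children, and if it is stored explicitly also update the root value $KP_{\theta'}(0,0)=\|\theta'\|_1/|A'|$. If $\theta'$ happens to be the all-zero vector (which can only occur when $\supp(\theta)=\{j\}$ and $a\theta_j+b=0$), output instead the degenerate tree of Definition~\ref{def:KPtree}, namely a root alone with support $0$.

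This procedure reads, creates, or overwrites $O(\log d)$ nodes and performs $\poly\log d$ elementary arithmetic operations on numbers of $\tilde{\mathcal{O}}(1)$ bits, which gives the claimed cost. Correctness follows from the case analysis: $A'\neq 0$; every surviving leaf $i$ stores $\theta'_i/A'$; the edge labels along each root-to-leaf path still spell the corresponding index in binary; each internal node again stores the sum of the absolute values of its children, since only the nodes on the single modified path changed and those were recomputed; and the stored support equals $|\supp(\theta')|$. There is no genuine obstacle here beyond the edge-case bookkeeping; the one point that makes the theorem nontrivial — and the reason a naive update would cost $\Theta(t)$ rather than $\poly\log d$ — is the observation in the first paragraph that setting $A'=aA$ leaves the stored value $\theta_i/A$ of every untouched leaf invariant, so the scaling by $a$ is carried out implicitly by a single change at the root.
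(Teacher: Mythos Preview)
Your proposal is correct and follows essentially the same approach as the paper: fold the scaling factor into the root scalar by setting $A'=aA$, then touch only the single root-to-leaf path for coordinate $j$, updating or creating/deleting that leaf and recomputing the $O(\log d)$ ancestor sums. Your write-up is in fact somewhat more careful than the paper's about the edge cases (the all-zero vector, the $b=0$ case when $j\notin\supp(\theta)$) and more explicit about the key invariance $\theta_i/A=\theta'_i/A'$ for $i\neq j$, but the underlying procedure is identical.
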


\begin{proof}
Read the scalar $A$ and support $t$ stored in the root. 

If there does not exist a leaf for the entry $j$, then add a new leaf for the entry $j$ and a path according to its binary representation. Now update the stored value in the leaf $j$ to $b/(aA)$. After that, update the stored values for all nodes on the path from the root to the leaf for the entry $j$. Update the scalar in the root to $aA$, and if $b \neq 0$, update the support value in the root to $t+1$.

If, instead, there already existed a leaf for the entry $j$, then read the value $v$ stored in the leaf~$j$, update the value stored in the leaf~$j$ to $v'=v+b/(aA)$, and then update the stored values for all nodes on the path from the root to the leaf $j$ for the entry $j$, and update the scalar to $aA$. After that, check the value $v'$ stored in the leaf for the entry $j$; if $v'=0$, then remove all nodes storing the value $0$ from the leaf $j$ to the root, and update the support value at the root to $t-1$.
\end{proof}

\begin{theorem}\label{thm:StatePrepare}
Suppose we have a KP-tree $KP_\theta$ of vector $\theta$, and suppose  we can make quantum queries to a unitary $O_{KP_\theta}$ that maps $\ket{\ell,k}\ket{0}\rightarrow \ket{\ell,k}\ket{KP_\theta(\ell,k)}$. Then one can prepare the state $\ket{\theta}=\sum\limits_{j\in \mathbb{Z}_d}\sqrt{\frac{|\theta_j|}{\|\theta\|_1}}\ket{j}\ket{sign(\theta_j)}$ up to negligible error\footnote{By this we mean an error smaller than an arbitrary polynomial in the input length.} by using $\poly\log d$ queries to $O_{KP_\theta}$ and $O^\dagger_{KP_\theta}$, and $\mathcal{\tilde{O}}(1)$ elementary gates.
\end{theorem}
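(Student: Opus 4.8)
The plan is to use the standard Kerenidis--Prakash layer-by-layer state-preparation procedure, implemented using only queries to $O_{KP_\theta}$ and $O_{KP_\theta}^\dagger$, followed by a step that writes in the signs. Write $L=\lceil\log d\rceil$ for the depth, and for each node $(\ell,k)$ put $w(\ell,k)=|KP_\theta(\ell,k)|$. The only properties we need are: $w(0,0)=\|\theta\|_1/|A|$; $w(L,j)=|\theta_j|/|A|$ for each $j\in\intg_d$; and $w(\ell-1,k)=w(\ell,2k)+w(\ell,2k+1)$ for all $1\le\ell\le L$ (immediate from the bullet ``each intermediate node stores the sum of its children's absolute values'', since every intermediate-node value is already nonnegative and each leaf contributes $|\theta_j/A|$ to its parent). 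Each such quantity is obtained by one application of $O_{KP_\theta}$ on the index $\ket{\ell,k}$; for $\ell=L$ and $k=j$ the same query also returns the \emph{signed} leaf value $\theta_j/A$. We may assume $\theta\neq 0$ (otherwise $\ket{\theta}$ is undefined), and we handle the sign of the root scalar $A$ at the very end using the fact that it is known classically.

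First I would build $\sum_{j\in\intg_d}\sqrt{|\theta_j|/\|\theta\|_1}\,\ket{j}$ in an $L$-qubit register $R$ by processing layers $\ell=1,\dots,L$ in order, maintaining the invariant that after step $\ell-1$ the register $R$ (with its last $L-\ell+1$ qubits still in $\ket{0}$) holds $\sum_{k\in\intg_{2^{\ell-1}}}\sqrt{w(\ell-1,k)/w(0,0)}\,\ket{k}$. To perform step $\ell$: (i) read the value $k$ of the first $\ell-1$ qubits of $R$ and use $\mathcal{O}(1)$ applications of $O_{KP_\theta}$ to load $w(\ell,2k)$ and $w(\ell,2k+1)$ into fresh ancillas, so that $p:=w(\ell,2k)+w(\ell,2k+1)=w(\ell-1,k)$ is available; (ii) reversibly compute the rotation angle $\phi_{\ell,k}=\arccos\!\big(\sqrt{w(\ell,2k)/p}\big)$ into an ancilla register, with the convention $\phi_{\ell,k}=0$ whenever $p=0$ (this is exactly when the amplitude on $\ket{k}$ already vanishes, so the choice is immaterial, and it avoids dividing by zero); (iii) apply, controlled on this angle ancilla, the single-qubit rotation $\ket{0}\mapsto\cos\phi_{\ell,k}\,\ket{0}+\sin\phi_{\ell,k}\,\ket{1}$ to the $\ell$-th qubit of $R$; (iv) uncompute the angle ancilla and the loaded values via $O_{KP_\theta}^\dagger$. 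This sends $\ket{k}\ket{0}$ to $\ket{k}\big(\sqrt{w(\ell,2k)/w(\ell-1,k)}\,\ket{0}+\sqrt{w(\ell,2k+1)/w(\ell-1,k)}\,\ket{1}\big)$ (the new qubit carrying the next, less significant, bit of $j$), which re-establishes the invariant at level $\ell$. After step $L$, substituting $w(L,j)=|\theta_j|/|A|$ and $w(0,0)=\|\theta\|_1/|A|$ shows that $R$ holds $\sum_{j\in\intg_d}\sqrt{|\theta_j|/\|\theta\|_1}\,\ket{j}$.

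Then I would append a fresh sign qubit $\ket{0}$, apply $O_{KP_\theta}$ with index $(L,j)$ to load the signed leaf value $\theta_j/A$ into an ancilla, copy the single bit $\mathrm{sign}(\theta_j/A)$ into the sign qubit, XOR in the classically known bit $[\mathrm{sign}(A)=-1]$ so that the sign qubit now holds $\mathrm{sign}(\theta_j)$, and uncompute the leaf value with $O_{KP_\theta}^\dagger$. This yields exactly $\ket{\theta}=\sum_{j\in\intg_d}\sqrt{|\theta_j|/\|\theta\|_1}\,\ket{j}\ket{\mathrm{sign}(\theta_j)}$.

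For the cost, each of the $L$ layer steps and the final sign step uses $\mathcal{O}(1)$ applications of $O_{KP_\theta}$ and $O_{KP_\theta}^\dagger$ plus $\tilde{\mathcal{O}}(1)$ elementary gates for the arithmetic (taking absolute values, computing $\arccos\!\sqrt{\cdot}$, and performing the uniformly-controlled rotation to $\tilde{\mathcal{O}}(1)$ bits of precision), so the whole procedure uses $\poly\log d$ queries and $\tilde{\mathcal{O}}(1)$ gates. For the error, with $\tilde{\mathcal{O}}(1)$ bits of precision the computed angles differ from their ideal values by less than any polynomial in the input length, so each step is within negligible trace distance of its ideal unitary, and summing over the $L+1$ steps keeps the total error negligible. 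I expect the only genuine subtlety to be the bookkeeping around nonexistent nodes (where $O_{KP_\theta}$ returns $0$) and the sign of $A$; both are handled as above, and I do not anticipate any obstacle beyond this routine case analysis.
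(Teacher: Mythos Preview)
Your proposal is correct and follows essentially the same approach as the paper: a layer-by-layer Kerenidis--Prakash traversal that, at each level, loads the two child values via $O_{KP_\theta}$, performs a controlled rotation whose amplitudes are $\sqrt{w(\ell,2k)/w(\ell-1,k)}$ and $\sqrt{w(\ell,2k+1)/w(\ell-1,k)}$, and uncomputes, followed by a final sign step. The only cosmetic differences are that the paper packages the rotation as a single ``two-controlled rotation'' gate $U_{2CR}$ acting directly on the pair $(a,b)$ rather than computing $\arccos\sqrt{\cdot}$ explicitly, and that you are a bit more careful than the paper about the sign of the root scalar $A$ and about the $p=0$ convention for absent nodes.
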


\begin{proof}
For simplicity and without loss of the generality, we assume $\log d$ is a natural number. Define the two-controlled rotation unitary as for each $a,b\in\mathbb{R}$
\[
U_{2CR}: \ket{a}\ket{b} \ket{0} \rightarrow 
    \begin{dcases}
    \ket{a}\ket{b}(\frac{1}{\sqrt{2}}\ket{0}+\frac{1}{\sqrt{2}}\ket{1}), & \text{if } a=b=0,\\
    \ket{a}\ket{b}(\sqrt{\frac{|a|}{|a|+|b|}}\ket{0}+\sqrt{\frac{|b|}{|a|+|b|}}\ket{1}), &\text{otherwise,}
\end{dcases}
\]
which can be implemented up to negligibly small error by $\mathcal{\tilde{O}}(1)$ elementary gates. Also, define the children-reading gate as $U_C:\ket{\ell}\ket{k}\ket{0}^{\otimes 2} \rightarrow \ket{\ell}\ket{k}\ket{l_{\ell,k}}\ket{r_{\ell,k}}$, where the left child $l_{\ell,k}=KP_\theta(\ell+1,2k)$ and the right child $r_{\ell,k}=KP_\theta(\ell+1,2k+1)$; this can be implemented by using two queries to $O_{KP_\theta}$ and $\mathcal{\tilde{O}}(1)$ elementary gates. Last, define the sign gate $U_s:\ket{j}\ket{0}\rightarrow \ket{j}\ket{sign(\theta_{j})}$, which can be implemented by using two queries to $O_{KP_\theta}$, $O^\dagger_{KP_\theta}$, and $\mathcal{{O}}(1)$ elementary gates. 

To prepare $\ket{\theta}$, we first prepare the state $\ket{KP_\theta^0}=\ket{0}$, and
for the purpose of induction, suppose we can prepare the state 
$$
\ket{KP_\theta^{\ell}}=\frac{1}{\sqrt{KP_\theta(0,0)}}\sum\limits_{k=0}^{2^{\ell}-1} \sqrt{|KP_\theta(\ell,k)|}\ket{k},
$$
where $KP_\theta(0,0)$ is the sum of all absolute values stored in the leaves and hence $KP_\theta(0,0)=\sum\limits_{k=0}^{2^{\ell}-1} |KP_\theta(\ell,k)|$. We prepare the state $\ket{\ell}\ket{KP_\theta^{\ell}}\ket{0}^{\otimes 2}\ket{0}$, apply $U_C$ on the first four registers, and apply $U_{2CR}$ on the last three registers to get
\begin{align*}
&\ket{\ell}\frac{1}{\sqrt{KP_\theta(0,0)}}\sum\limits_{k=0}^{2^{\ell}-1} \sqrt{|KP_\theta(\ell,k)|}\ket{k}\ket{l_{\ell,k}}\ket{r_{\ell,k}}\big(\frac{\sqrt{|l_{\ell,k}|}}{\sqrt{|l_{\ell,k}|+|r_{\ell,k}|}}\ket{0}+\frac{\sqrt{|r_{\ell,k}|}}{\sqrt{|r_{\ell,k}|+|r_{\ell,k}|}}\ket{1}\big)\\
=& \ket{\ell}\frac{1}{\sqrt{KP_\theta(0,0)}}\sum\limits_{k=0}^{2^{\ell}-1}\ket{k}\ket{l_{\ell,k}}\ket{r_{\ell,k}}\big(\sqrt{|l_{\ell,k}|}\ket{0}+\sqrt{|r_{\ell,k}|}\ket{1}\big),
\end{align*}
where the equation holds because $|KP_\theta(\ell,k)|=|KP_\theta(\ell+1,2k)|+|KP_\theta(\ell+1,2k+1)|=|l_{\ell,k}|+|r_{\ell,k}|$, from the sixth bullet of Definition~\ref{def:KPtree}.
Uncomputing the third and fourth registers, and discarding the first, third, and fourth registers, we get
\begin{align*}
&\frac{1}{\sqrt{KP_\theta(0,0)}}\sum\limits_{k=0}^{2^{\ell}-1}\ket{k}\big(\sqrt{|l_{\ell,k}|}\ket{0}+\sqrt{|r_{\ell,k}|}\ket{1}\big)\\
=& \frac{1}{\sqrt{KP_\theta(0,0)}}\sum\limits_{k=0}^{2^{\ell}-1}\big(\sqrt{|l_{\ell,k}|}\ket{k}\ket{0}+\sqrt{|r_{\ell,k}|}\ket{k}\ket{1}\big)\\
=& \frac{1}{\sqrt{KP_\theta(0,0)}}\sum\limits_{k=0}^{2^{\ell+1}-1}\sqrt{|KP_\theta(\ell+1,k)|}\ket{k}=\ket{KP_\theta^{\ell+1}}.
\end{align*}
Therefore, iterating the above process for $\log d$ times, we can prepare the state \[
\ket{KP_\theta^{\log d}}=\frac{1}{\sqrt{KP_\theta(0,0)}}\sum\limits_{k=0}^{d-1} \sqrt{|KP_\theta(\log d,k)|}\ket{k}=\sum\limits_{j\in \mathbb{Z}_d}\sqrt{\frac{|\theta_j|}{\|\theta\|_1}}\ket{j},
\]
where the last equation follows from the fourth and sixth bullets of Definition~\ref{def:KPtree}. To obtain $\ket{\theta}$, we prepare $\sum\limits_{j\in \mathbb{Z}_d}\sqrt{\frac{|\theta_j|}{\|\theta\|_1}}\ket{j}\ket{0}$ and apply $U_s$.

 There are $\log d$ layers, and each layer only uses $\mathcal{O}(1)$ queries to $O_{KP_\theta}$, $O^\dagger_{KP_\theta}$ and $\mathcal{\tilde{O}}(1)$ elementary gates. Hence $\mathcal{O}(\log d)$ queries to $O_{KP_\theta}$, $O^\dagger_{KP_\theta}$, and $\mathcal{\tilde{O}}(1)$ other gates suffice to prepare $\ket{\theta}$.
\end{proof}

To implement $O_{KP_\theta}$ and $O^\dagger_{KP_\theta}$ in the above theorem, we use QRAM to store $KP_\theta$, then we can make quantum queries to the bits of the data structure directly. Or, if we want to avoid QRAM altogether, then we can use the following theorem with $\tilde{\mathcal{O}}(s)$ extra cost in circuit size for each query, where $s$ is the sparsity of the bitstring that represents $KP_\theta$. From Definition~\ref{def:KPtree} we can see that the number of bits is $s=\mathcal{\tilde{O}}(t\log d)$, where $t$ is the sparsity of $\theta$.

\begin{theorem}\label{thm:QRAM_free}
Suppose $p,s\in\mathbb{N}$ and $D\in\{0,1\}^p$ is a bit string with sparsity $s$ (i.e., the number of 1s in $D$ is $\leq s$), then for each $b\in\{0,1\}$ and $k\in \intg_p$, we can implement the unitary $U_D:\ket{k,b}\rightarrow \ket{k,b\oplus D_k}$ using ${\mathcal{O}}(s\log p)$ elementary gates.
\end{theorem}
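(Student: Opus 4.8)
The plan is to hard-code the string $D$ directly into the circuit, exploiting that the classical algorithm constructing the circuit knows $D$. Let $i_1,\ldots,i_{s'}\in\intg_p$ with $s'\le s$ be the positions at which $D$ is $1$. The target unitary should satisfy $U_D\ket{k}\ket{b}=\ket{k}\ket{b\oplus[\,k\in\{i_1,\ldots,i_{s'}\}\,]}=\ket{k}\ket{b\oplus D_k}$, so it suffices to flip the qubit holding $b$ exactly once for each $j$ with $i_j=k$ and never otherwise.

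First I would construct, for each $j$, a sub-circuit $C_j$ that flips $b$ iff the $\lceil\log p\rceil$-qubit index register equals $\ket{i_j}$. Writing $i_j$ in binary, $C_j$ consists of: a layer of $X$ gates on those index qubits where $i_j$ has a $0$-bit; then a multiply-controlled-$X$ with all $\lceil\log p\rceil$ index qubits as controls and $b$ as target; then the same $X$ layer again. The conjugating $X$ layers convert the equality test against $i_j$ into the standard all-ones control pattern and, being applied in inverse pairs, leave the index register unchanged. A multiply-controlled-$X$ on $m=\lceil\log p\rceil$ controls can be realized exactly with $\mathcal{O}(m)$ elementary gates by standard decompositions, using a few ancilla qubits as scratch space (if one forbids ancillas entirely this becomes $\mathcal{O}(m^2)$, changing the final bound only to $\tilde{\mathcal O}(s)$, which still suffices for Theorem~\ref{thm:StatePrepare}). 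Hence each $C_j$ costs $\mathcal{O}(\log p)$ elementary gates, and $U_D:=C_{s'}\cdots C_1$ costs $\mathcal{O}(s\log p)$ in total.

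For correctness I would check the action on computational-basis states $\ket{k}\ket{b}$. Every $C_j$ fixes the index register, so $U_D$ does too. Since the $i_j$ are pairwise distinct, for a fixed $k$ at most one $C_j$ has its control pattern satisfied — exactly the one with $i_j=k$, which exists iff $D_k=1$ — so the target bit is flipped exactly $D_k$ times. Thus $U_D\ket{k}\ket{b}=\ket{k}\ket{b\oplus D_k}$, and linearity extends this to all states.

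I do not foresee a genuine obstacle; the only point requiring a small decision is the elementary-gate count of the multiply-controlled-$X$ (the clean $\mathcal{O}(\log p)$ bound with workspace ancillas versus $\mathcal{O}(\log^2 p)$ without), and either choice is fine for the way this lemma feeds into the QRAM-free state-preparation argument.
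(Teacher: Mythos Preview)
Your proposal is correct and essentially identical to the paper's proof: the paper defines $U_i$ to flip the target iff $k=i$ (your $C_j$), notes it costs $\mathcal{O}(\log p)$ gates, and writes $U_D=\prod_{i:D_i=1}U_i$. Your write-up is in fact more detailed, spelling out the $X$-conjugated multi-controlled-$X$ implementation and the ancilla caveat that the paper leaves implicit.
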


\begin{proof}
For every $i\in \intg_p$, we define the controlled bit-reading unitary $U_i$ as for each $b\in \{0,1\}$ and $k\in\{0,1\}^p$
\begin{align*}
    U_i: \ket{k}\ket{b} \rightarrow 
    \begin{dcases}
    \ket{k}\ket{b\oplus 1}, & \text{if } k=i,\\
    \ket{k}\ket{b}, &\text{otherwise,}
\end{dcases}
\end{align*}
which can be implemented using $\mathcal{{O}}(\log p)$ elementary gates. Observing that $U_D= \Pi_{i:D_i=1} U_i$, we can therefore implement $U_D$ using $\mathcal{{O}}(s\cdot \log p)$ elementary gates.
\end{proof}

\section{Quantum Algorithm for Lasso}

\subsection{The classical Frank-Wolfe algorithm}
Below is a description of the Frank-Wolfe algorithm with approximate linear solvers.
For now this is for an arbitrary convex objective function~$L$ and arbitrary compact convex domain $\mathcal{X}$ of feasible solutions; for Lasso we will later instantiate these to the quadratic loss function and $\ell_1$-ball, respectively.
Frank-Wolfe finds an $\epsilon$-approximate solution to a convex optimization problem, using $O(1/\epsilon)$ iterations. It is a first-order method: each iteration assumes access to the gradient of the objective function at the current point. The algorithm considers the linearization of the objective function, and moves towards a minimizer of this linear function without ever leaving the domain $\mathcal{X}$ (in contrast to for instance projected gradient descent). 

\begin{algorithm}[hbt]
\SetKwData{Left}{left}\SetKwData{This}{this}\SetKwData{Up}{up}
\SetKwFunction{Union}{Union}\SetKwFunction{FindCompress}{FindCompress}
\SetKwInOut{Input}{input}\SetKwInOut{Output}{output}

\Input{number of iterations $T >0$; convex differentiable function $L$; compact convex domain $\mathcal{X}$;}
Let $C_{L}$ be the curvature constant of $L$;\\
Let $\theta^0$ be an arbitrary point in $\mathcal{X}$; 
\SetAlgoLined

  \For{$t\leftarrow 0$ \KwTo $T$}{
   $\tau_t= \frac{2}{t+2}$\;
   find $s\in \mathcal{X}$ such that $\langle s,\nabla L(\theta^{t})\rangle\leq \min\limits_{s'\in \mathcal{X}}\langle s',\nabla L(\theta^{t})\rangle + \frac{ \tau_t C_{L}}{4}$\;
  $\theta^{t+1}=(1-\tau_t) \theta^{t}+\tau_t s$\;
  }
  \Output{$\theta^T$;}
 \caption{The Frank-Wolfe algorithm with approximate linear subproblems}
 \label{Alg:FW}
\end{algorithm}

The convergence rate of the Frank-Wolfe algorithm is affected by the ``non-linearity'' of the objective function $L$, as measured by the curvature constant $C_{L}$:

\begin{definition}
 The curvature constant $C_{L}$ of a convex and differentiable function $L:\mathbb{R}^d\rightarrow \mathbb{R}$ with respect
to a convex domain $\mathcal{X}$ is defined as
\[
C_{L}\equiv \sup\limits_{\substack{x,s\in \mathcal{X}, \gamma \in [0,1],\\ y=x+\gamma(s-x)}} \frac{2}{\gamma^2}(L(y)-L(x)-\langle \nabla L(x),(y-x)\rangle).
\]
\end{definition}

Next we give an upper bound for the curvature constant of the empirical loss function for Lasso.

\begin{theorem}\label{thm:curve_bound} Let $S=\{(x_i,y_i)\}_{i=0}^{N-1}$ with all entries of $x_i$ and $y_i$ in $[-1,1]$. Then the curvature constant $C_{L_S}$ of $L_S$ with respect to $B^d_1$ is $\leq8$.
\end{theorem}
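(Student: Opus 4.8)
The plan is to exploit the fact that $L_S$ is a \emph{quadratic} function, so the ``second-order remainder'' appearing in the definition of the curvature constant can be computed exactly. First I would write $L_S(\theta)=\frac1N\norm{X\theta-y}_2^2$, compute the gradient $\nabla L_S(\theta)=\frac2N X^T(X\theta-y)$, and note that the Hessian $\nabla^2 L_S=\frac2N X^TX$ is constant. By the exact second-order Taylor expansion of a quadratic (equivalently, a direct algebraic expansion of $\norm{X\theta-y}_2^2$), for any $x,y\in\mathbb{R}^d$ one gets
\[
L_S(y)-L_S(x)-\langle\nabla L_S(x),y-x\rangle=\tfrac12(y-x)^T\nabla^2 L_S\,(y-x)=\tfrac1N\norm{X(y-x)}_2^2 .
\]

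Next I would substitute $y=x+\gamma(s-x)$, so that $y-x=\gamma(s-x)$ and $\norm{X(y-x)}_2^2=\gamma^2\norm{X(s-x)}_2^2$. The factor $\gamma^2$ then cancels against the $\frac{2}{\gamma^2}$ prefactor in the definition of $C_{L_S}$ (so the supremum is in fact independent of $\gamma$, and the $\gamma=0$ case is handled by continuity / taking the sup over $\gamma\in(0,1]$), leaving
\[
C_{L_S}=\sup_{x,s\in B_1^d}\ \frac{2}{N}\norm{X(s-x)}_2^2 .
\]

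It then remains to bound $\norm{X(s-x)}_2^2$ for $x,s\in B_1^d$. Here I would use $\norm{s-x}_1\le\norm{s}_1+\norm{x}_1\le 2$, and for each row $x_i$ of $X$ (which satisfies $\norm{x_i}_\infty\le 1$ by hypothesis) apply H\"older's inequality: $|\langle x_i,s-x\rangle|\le\norm{x_i}_\infty\norm{s-x}_1\le 2$. Summing the squares over the $N$ rows gives $\norm{X(s-x)}_2^2=\sum_{i\in\intg_N}\langle x_i,s-x\rangle^2\le 4N$, hence $C_{L_S}\le\frac{2}{N}\cdot 4N=8$.

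There is no real obstacle here: the only things to be careful about are (i) that the remainder identity for quadratics is exact rather than an inequality (so the supremum collapses as claimed), and (ii) getting the constant right in the H\"older step, where the triangle inequality for $\norm{\cdot}_1$ on $B_1^d$ contributes the factor $2$ that becomes the $4$ inside the norm-squared. Everything else is a routine calculation.
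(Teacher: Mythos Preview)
Your proposal is correct and follows essentially the same approach as the paper: compute the (constant) Hessian to reduce the curvature constant to $\sup_{x,s\in B_1^d}\frac{2}{N}\norm{X(s-x)}_2^2$, then bound each entry of $X(s-x)$ by~$2$ using $\norm{s-x}_1\le 2$ and $\norm{x_i}_\infty\le 1$. The paper phrases the last step without explicitly naming H\"older's inequality, but the argument is identical.
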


\begin{proof}
We know
\[
L_S(\theta)=\frac{1}{N}\|X\theta-y\|_2^2=\frac{(X\theta-y)^T(X\theta-y)}{N}=\frac{\theta^TX^TX\theta-y^TX\theta-\theta^TX^Ty+y^Ty}{N},
\]
which implies the Hessian of $L_S$ is $\nabla^2 L_S(z)=\frac{2X^TX}{N}$, independent of~$z$.
By replacing sup by max because the domain is compact, we have
\begin{eqnarray*}
C_{L_S} & = \max\limits_{\substack{x,s\in \mathcal{X}, \gamma \in [0,1],\\ y=x+\gamma(s-x)}} \frac{2}{\gamma^2}(L_S(y)-L_S(x)-\langle \nabla L_S(x),(y-x)\rangle)\\ 
& = \max\limits_{x,s\in \mathcal{X}, \gamma\in [0,1]} \langle (s-x),\nabla^2 L_S\cdot(s-x) \rangle= \max\limits_{x,s\in \mathcal{X}} \frac{2}{N} \|X(s-x)\|^2_2.
\end{eqnarray*}
Each coefficient of $X$ is at most $1$ in absolute value, and $s-x\in 2B^d_1$, hence each entry of the vector $X(s-x)$ has magnitude at most~$2$. Therefore $\max\limits_{x,y\in B^d_1} \frac{2}{N} \|X(s-x)\|^2_2$ is at most $8$.
\end{proof}

The original Frank-Wolfe algorithm~\cite{FW56} assumed that the minimization to determine the direction-of-change~$s$ was done exactly, without the additive error term~$\tau_t C_{L_S}/4$ that we wrote in Algorithm~\ref{Alg:FW}.
However, the following theorem, due to Jaggi~\cite{Jaggi13}, shows that solving approximate linear subproblems is sufficient for the Frank-Wolfe algorithm to converge at an $O(C_{L_S}/T)$ rate, which means one can find an $\eps$-approximate solution with $T=O(C_{L_S}/\eps)$ iterations.

\begin{theorem}[\cite{Jaggi13}, Theorem~1]\label{thm:FW}
 For each iteration $t\geq 1$, the corresponding $\theta^{t}$ of Algorithm~\ref{Alg:FW} satisfies
 \[
L_S(\theta^{t})- \min\limits_{\theta'\in B^d_1}L_S(\theta')\leq \frac{3C_{L_S} }{t+2}. 
 \]
\end{theorem}

\subsection{Approximating the quadratic loss function and entries of its gradient}

In this subsection, we give a quantum algorithm to estimate the quadratic loss function $L_S(\theta)$ and entries of its gradient, given query access to entries of the vectors in $S=\{(x_i,y_i)\}_{i=0}^{N-1}$ and given a KP-tree for $\theta \in B^d_1$.
One can estimate these numbers with additive error $\beta$ in time roughly $1/\beta$.

We start with estimating entries of the gradient of the loss function at a given $\theta$:

\begin{theorem}\label{thm:nabla_error_new}
Let $\theta\in B^d_1$, and $\beta, \delta >0$. Suppose we have a KP-tree $KP_\theta$ of vector $\theta$ and can make quantum queries to $O_{KP_\theta}: \ket{\ell,k}\ket{0}\rightarrow \ket{\ell,k}\ket{KP_\theta(\ell,k)}$. One can implement $\tilde{U}_{\nabla L_S}: \ket{j}\ket{0}\rightarrow \ket{j}\ket{\Lambda}$ such that for all $j\in \mathbb{Z}_d$, after measuring the state $\ket{\Lambda}$, with probability $\geq 1-\delta$ the first register $\lambda$ of the outcome will satisfy
$|\lambda-\nabla_j L_S(\theta)|\leq \beta,$
 by using $\mathcal{\tilde{O}}(\frac{\log(1/\delta)}{\beta})$  applications of $O_X$, $O^\dagger_X$, $O_y$, $O^\dagger_y$, $O_{KP_\theta}$, $O^\dagger_{KP_\theta}$, and elementary gates.
\end{theorem}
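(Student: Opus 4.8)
My plan is to expand the gradient entry into a sum whose only dependence on $\theta$ is through a quantum state I can prepare from the KP-tree, and then to read it off with amplitude estimation (Theorem~\ref{thm:amplitude_estimation}). Since $L_S(\theta)=\frac1N\norm{X\theta-y}_2^2$ we have $\nabla L_S(\theta)=\frac2N X^T(X\theta-y)$, so
\[
\nabla_j L_S(\theta)\;=\;\underbrace{\frac2N\sum_{i\in\intg_N}\sum_{k\in\intg_d}X_{ij}X_{ik}\theta_k}_{=:\,T_1(j)}\;-\;\underbrace{\frac2N\sum_{i\in\intg_N}X_{ij}\,y_i}_{=:\,T_2(j)}.
\]
I would estimate $T_1(j)$ and $T_2(j)$ separately. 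The crucial point is that each of them is a (multi-)average of a quantity bounded by a constant, so each can be obtained by a \emph{single, non-nested} amplitude estimation; in particular I never compute $\inProd{x_i}{\theta}$ by an inner estimation nested inside an outer one over~$i$.

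To estimate $T_2(j)$, the circuit (with $j$ held fixed in the first register) prepares $\frac1{\sqrt N}\sum_i\ket i$, uses $O_X$ and $O_y$ to append $X_{ij}$ and $y_i$, computes $X_{ij}y_i\in[-1,1]$, and rotates a fresh flag qubit to $\sqrt{(1+X_{ij}y_i)/2}\,\ket0+\sqrt{(1-X_{ij}y_i)/2}\,\ket1$; the flag is $\ket0$ with probability $a_2(j)=\tfrac12+\tfrac14T_2(j)$, so $T_2(j)=4(a_2(j)-\tfrac12)$ and an additive error $\le\beta/8$ in $a_2(j)$ gives error $\le\beta/2$ in $T_2(j)$. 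To estimate $T_1(j)$, the circuit prepares $\frac1{\sqrt N}\sum_i\ket i$ together with $\ket\theta=\sum_k\sqrt{|\theta_k|/\norm{\theta}_1}\,\ket k\ket{sign(\theta_k)}$ via Theorem~\ref{thm:StatePrepare} (negligible error), uses $O_X$ twice to append $X_{ij}$ and $X_{ik}$, computes $v=X_{ij}X_{ik}\,sign(\theta_k)\in[-1,1]$, and rotates a flag to encode $(1+v)/2$. Writing $\theta_k=\norm{\theta}_1\cdot\tfrac{|\theta_k|}{\norm{\theta}_1}\cdot sign(\theta_k)$, the flag is $\ket0$ with probability $a_1(j)=\tfrac12+\tfrac1{4\norm{\theta}_1}T_1(j)$, so $T_1(j)=4\norm{\theta}_1(a_1(j)-\tfrac12)$. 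Because $\norm{\theta}_1\le1$, an error $\le\beta/8$ in $a_1(j)$ again gives error $\le\beta/2$ in $T_1(j)$ — and when $\norm{\theta}_1$ is tiny this is merely wasteful, not dangerous, since $|T_1(j)|\le2\norm{\theta}_1$ is then itself tiny. The scalar $\norm{\theta}_1$ is read off the KP-tree with $\poly\log d$ queries to $O_{KP_\theta}$.

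I would then run amplitude estimation with $M=\Theta(1/\beta)$ on each of the two circuits, coherently and with $j$ fixed in the first register (first mixing in a uniformly random flag with probability $\tfrac12$, if needed, to keep the amplitude in $[\tfrac14,\tfrac34]$, at only a constant-factor cost in $M$). Each base circuit costs $\poly\log d=\tilde{\mathcal{O}}(1)$ queries ($\mathcal{O}(1)$ to $O_X$, $\mathcal{O}(1)$ to $O_y$, and $\poly\log d$ to $O_{KP_\theta}$ for the state preparation and for reading $\norm{\theta}_1$), so one amplitude-estimation run costs $\tilde{\mathcal{O}}(1/\beta)$ queries and attains the target amplitude with error $\le\beta/8$ and probability $\ge9/10$. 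To boost to probability $\ge1-\delta$, I would run $\Theta(\log(1/\delta))$ independent copies of each estimation on separate registers and coherently compute the median (correct with probability $\ge1-\delta/2$ by a Chernoff bound), and finally coherently compute $\lambda:=4\norm{\theta}_1(\lambda_1-\tfrac12)-4(\lambda_2-\tfrac12)$ from the two median estimates $\lambda_1,\lambda_2$ and place it in the output register. A union bound then gives $|\lambda-\nabla_j L_S(\theta)|\le\beta/2+\beta/2=\beta$ with probability $\ge1-\delta$, and the total cost is $\tilde{\mathcal{O}}(\log(1/\delta)/\beta)$ queries and elementary gates; the degenerate case $\theta=0$ (where the KP-tree state preparation is ill-defined) is handled directly, since then $\nabla_j L_S(0)=-T_2(j)$.

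The main obstacle is getting the two modeling choices right: (i) the $T_1/T_2$ split that keeps both estimates as single non-nested amplitude estimations — a naive route that estimates $\inProd{x_i}{\theta}$ by an inner amplitude estimation and then averages over $i$ by an outer one would have to contend with the non-negligible failure probability of the inner routine; and (ii) the precision bookkeeping showing $M=O(1/\beta)$ suffices \emph{uniformly in} $\theta$, whose essential content is simply that the factor $\norm{\theta}_1\le1$ can only help. The remaining ingredients — controlled rotations by a data register, median boosting, and reading scalars out of the KP-tree — are routine.
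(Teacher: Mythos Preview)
Your proposal is correct and follows essentially the same approach as the paper: split $\nabla_j L_S(\theta)$ into the $X^TX\theta$ term and the $X^Ty$ term, prepare $\ket{\theta}$ from the KP-tree, encode the relevant signed bounded quantity into a flag-qubit amplitude, and read it off with a single non-nested amplitude estimation boosted by the median trick. The only cosmetic difference is that the paper handles the sign of $v$ by separately estimating the positive-part and negative-part amplitudes $a_+,a_-$ (four amplitude estimations in total), whereas you use the $(1+v)/2$ shift to encode a signed $v\in[-1,1]$ in one go (two estimations total); both encodings lead to the same $\tilde{\mathcal{O}}(\log(1/\delta)/\beta)$ cost.
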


\begin{proof}
Fix $j$ in the following proof.
Note that
\begin{align}
\nabla_j L_S(\theta) =\frac{2}{N} (X^T(X\theta-y))_j=\frac{2}{N}\sum\limits_{i\in\intg_N} X_{ij}\cdot (X\theta-y)_i=\frac{2}{N}\sum\limits_{i\in\intg_N}\sum\limits_{k\in\intg_d} X_{ij}X_{ik}\theta_{k}-\frac{2}{N}\sum\limits_{i\in\intg_N} X_{ij}y_i.\label{eq:nabla_lasso}
\end{align}
We will show how to estimate both terms of the right-hand side of Equation~(\ref{eq:nabla_lasso}). 
Define the positive-controlled rotation such that for each $a\in\mathbb{R}$
\[
U_{CR^+}: \ket{a} \ket{0} \rightarrow \begin{dcases}
    \ket{a}(\sqrt{a}\ket{1}+\sqrt{1-a}\ket{0}),& \text{if } a\in(0,1]\\
    \ket{a}\ket{0},              & \text{otherwise.}
\end{dcases}
\]
This can be implemented up to negligibly small error by $\tilde{\mathcal{O}}(1)$ elementary gates. Also, by Theorem~\ref{thm:StatePrepare}, we can implement $U_{\theta}:\ket{0}\ket{0}\rightarrow \ket{\theta}$, where $\ket{\theta}=\frac{1}{\sqrt{\|\theta\|_1}}\sum\limits_{j\in\mathbb{Z}_d} \sqrt{|\theta_j|}\ket{j}\ket{sign(\theta_j)}$, using $\mathcal{\tilde{O}}(1)$ queries to $O_{KP_\theta}$, $O_{KP_\theta}^\dagger$, and elementary gates. We will also use $U_u:\ket{i}\ket{j}\ket{k}\ket{s}\ket{0}\rightarrow \ket{i}\ket{j}\ket{k}\ket{s}\ket{s\cdot X_{ij}\cdot X_{ik}}$, where the last register is the product of three numbers; this $U_u$ can be implemented (with negligible error) via $\mathcal{O}(1)$ queries to $O_X$, $O^\dagger_X$, and $\tilde{\mathcal{O}}(1)$ elementary gates. 

First we estimate the first term $\frac{2}{N}\sum\limits_{i\in \intg_N}\sum\limits_{k\in\intg_d} X_{ij}X_{ik}\theta_{k}$ of Equation~(\ref{eq:nabla_lasso}) with additive error $\beta/2$.
Generating the state 
$$
\frac{1}{\sqrt{N}}\sum\limits_{i\in\intg_N}\ket{i}\ket{j}\ket{0}\ket{0}\ket{0}\ket{0},
$$
and applying $U_\theta$ on the third and fourth registers, $U_u$ on the first five registers, and $U_{CR^+}$ on the last two registers, with $s_k=sign(\theta_k)$, we get
\begin{align*}
&\frac{1}{\sqrt{N\|\theta\|_1}}\Big(\sum\limits_{\substack{i\in\mathbb{Z}_N,k\in \mathbb{Z}_d\\s_kX_{ij}X_{ik}>0}}\sqrt{|\theta_k|}\ket{i,j,k,s_k,s_kX_{ij}X_{ik}}(\sqrt{s_kX_{ij}X_{ik}}\ket{1}+\sqrt{1-s_kX_{ij}X_{ik}}\ket{0})\\
&+\sum\limits_{\substack{i\in\mathbb{Z}_N,k\in\mathbb{Z}_d\\s_kX_{ij}X_{ik}\leq0}}\sqrt{|\theta_k|}\ket{i,j,k,s_k,s_kX_{ij}X_{ik}}\ket{0}\Big)\\
=\,&\frac{1}{\sqrt{N\|\theta\|_1}}\Big(\sum\limits_{\substack{i\in\mathbb{Z}_N,k\in\mathbb{Z}_d\\\theta_kX_{ij}X_{ik}>0}}\sqrt{\theta_kX_{ij}X_{ik}}\ket{i,j,k,s_k,s_kX_{ij}X_{ik}}\ket{1}\\
&+\big(\sum\limits_{\substack{i\in\mathbb{Z}_N,k\in\mathbb{Z}_d\\\theta_kX_{ij}X_{ik}>0}}\sqrt{|\theta_k|(1-s_kX_{ij}X_{ik})}\ket{i,j,k,s_k,s_kX_{ij}X_{ik}}+\sum\limits_{\substack{i\in\mathbb{Z}_N,k\in\mathbb{Z}_d\\\theta_kX_{ij}X_{ik}\leq0}}\sqrt{|\theta_k|}\ket{i,j,k,s_k,s_kX_{ij}X_{ik}}\big)\ket{0}\Big)\\
=\,&\sqrt{a_+}\ket{\phi_1}\ket{1}+\sqrt{1-a_+}\ket{\phi_0}\ket{0}, \mbox{~~~~for  }
a_+= \sum\limits_{\substack{i\in\mathbb{Z}_N,k\in\mathbb{Z}_d\\\theta_kX_{ij}X_{ik}>0}}\theta_kX_{ij}{X_{ik}}/(N\|\theta\|_1).
\end{align*}
By Theorem~\ref{thm:amplitude_estimation}, with failure probability at most $1/1000$, we can estimate $a_+$ with additive error $\beta/(8\|\theta\|_1)$ using  $\mathcal{O}(\|\theta\|_1/\beta)$ applications of $O_X$, $O^\dagger_X$, $U_\theta$, $U_\theta^\dagger$, and $\tilde{\mathcal{O}}(
\|\theta\|_1/\beta)$ elementary gates. Note that our algorithm knows $\|\theta\|_1$ because it is stored in the root of $KP_\theta$. 
We similarly estimate 
\[
a_-= -\sum\limits_{\substack{i\in\mathbb{Z}_N,k\in\mathbb{Z}_d\\\theta_kX_{ij}X_{ik}<0}} \theta_kX_{ij}{X_{ik}}/(N\|\theta\|_1)
\] 
with additive error $\beta/(8\|\theta\|_1)$.
Hence we estimate $\frac{2}{N}\sum\limits_{i\in\intg_N} X_{ij}\cdot(X\theta)_i=2\|\theta\|_1\cdot(a_+-a_-)$ with additive error $\beta/2$.

For the second term of the right-hand side of Equation~(\ref{eq:nabla_lasso}) we use a very similar strategy: we separately estimate its positive term and its negative term, each with additive error $\beta/4$, using $\mathcal{O}(1/\beta)$ applications of $O_X$, $O_X^\dagger$, $O_y$, $O_y^\dagger$, and $\tilde{\mathcal{O}}(1/\beta)$ elementary gates, respectively. Therefore, we can estimate $\frac{1}{N}\sum\limits_{i\in\intg_N} X_{ij}y_i$ with additive error $\beta/2$.

Combining the previous estimations, with failure probability at most $1/100$, we  estimate $\nabla_j L_S(\theta)$ with additive error $\beta$. Since $\norm{\theta}_1\leq 1$, we use  $\mathcal{\tilde{O}}(1/\beta)$ applications of $O_X$, $O_X^\dagger$, $O_y$, $O_y^\dagger$, $O_{KP_\theta}$, $O^\dagger_{KP_\theta}$, and elementary gates. 
By repeating the procedure $O(\log (1/\delta))$ times and taking the median of the outputs, we can decrease the failure probability from at most $1/100$ to at most $\delta$.
\end{proof}

Next we show how to estimate the value of the loss function itself at a given $\theta$:

\begin{theorem}\label{thm:error_KPtree}
Let $\theta\in B^d_1$, and $\beta, \delta >0$. Suppose we have a KP-tree $KP_{\theta}$ of vector $\theta$ and can make quantum queries to $O_{KP_{\theta}}: \ket{\ell,k}\ket{0}\rightarrow \ket{\ell,k}\ket{KP_{\theta}(\ell,k)}$. 
Then we can implement $\tilde{U}_{L_S}: \ket{0}\rightarrow \ket{\Lambda}$ such that after measuring the state~$\ket{\Lambda}$, with probability $\geq 1-\delta$ the first register $\lambda$ of the outcome will satisfy 
$|\lambda- L_S(\theta)|\leq \beta$, by using $\mathcal{\tilde{O}}(\frac{\log (1/\delta)}{\beta})$  applications of $O_X$, $O^\dagger_X$, $O_y$, $O^\dagger_y$, $O_{KP_\theta}$, $O^\dagger_{KP_\theta}$, and elementary gates.
\end{theorem}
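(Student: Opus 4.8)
The plan is to follow the template of the proof of Theorem~\ref{thm:nabla_error_new} almost verbatim, the only genuinely new twist being a quadratic term that needs \emph{two} copies of the state $\ket{\theta}$. Expanding the square,
\begin{align*}
L_S(\theta)=\frac1N\|X\theta-y\|_2^2=\frac1N\sum_{i\in\intg_N}\sum_{k,k'\in\intg_d} X_{ik}X_{ik'}\theta_k\theta_{k'}\;-\;\frac2N\sum_{i\in\intg_N}\sum_{k\in\intg_d}X_{ik}\theta_k y_i\;+\;\frac1N\sum_{i\in\intg_N}y_i^2,
\end{align*}
I would estimate each of these three terms with additive error $\beta/3$ and then add the three estimates, for total additive error $\beta$.

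For the constant term $\frac1N\sum_i y_i^2\in[0,1]$: prepare $\frac1{\sqrt N}\sum_i\ket{i}\ket{0}$, compute $y_i^2$ into the second register using $\mathcal{O}(1)$ queries to $O_y,O_y^\dagger$, apply the controlled rotation $U_{CR^+}$ from the proof of Theorem~\ref{thm:nabla_error_new} to rotate a flag qubit by amplitude $\sqrt{y_i^2}$, and run amplitude estimation (Theorem~\ref{thm:amplitude_estimation}) with $M=\mathcal{O}(1/\beta)$ on the flag; the outcome estimates $\frac1N\sum_i y_i^2$ with additive error $\beta/3$ and failure probability $\le1/1000$. For the cross term $-\frac2N\sum_{i,k}X_{ik}\theta_k y_i$: this is literally the second term handled in the proof of Theorem~\ref{thm:nabla_error_new} with $y_i$ in place of one matrix entry, so I would prepare $\frac1{\sqrt N}\sum_i\ket{i}$ tensored with $\ket{\theta}$ (via Theorem~\ref{thm:StatePrepare}, costing $\tilde{\mathcal{O}}(1)$ queries to $O_{KP_\theta}$), compute $s_kX_{ik}y_i\in[-1,1]$ where $s_k$ is the sign of $\theta_k$, split into its positive and negative parts with $U_{CR^+}$, amplitude-estimate both halves, and recover the term as $2\|\theta\|_1(a_+-a_-)$; since $\|\theta\|_1\le1$, amplitude-estimation error $\beta/(24\|\theta\|_1)$ suffices, giving $M=\mathcal{O}(1/\beta)$ and additive error $\beta/3$.

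The quadratic term is where the new ingredient appears. I would call $U_\theta$ twice to prepare
\begin{align*}
\frac1{\sqrt N}\sum_{i\in\intg_N}\ket{i}\otimes\Big(\tfrac1{\sqrt{\|\theta\|_1}}\sum_{k}\sqrt{|\theta_k|}\,\ket{k}\ket{s_k}\Big)\otimes\Big(\tfrac1{\sqrt{\|\theta\|_1}}\sum_{k'}\sqrt{|\theta_{k'}|}\,\ket{k'}\ket{s_{k'}}\Big),
\end{align*}
compute $c_{i,k,k'}:=s_ks_{k'}X_{ik}X_{ik'}\in[-1,1]$ into a fresh register with $\mathcal{O}(1)$ queries to $O_X,O_X^\dagger$, apply $U_{CR^+}$ controlled on that register, and amplitude-estimate the flag qubit. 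Because $|\theta_k|s_k=\theta_k$, the squared amplitude of $\ket{1}$ equals
\begin{align*}
a_+=\frac1{N\|\theta\|_1^2}\sum_{\substack{i,k,k'\\ \theta_k\theta_{k'}X_{ik}X_{ik'}>0}}\theta_k\theta_{k'}X_{ik}X_{ik'},
\end{align*}
and with the analogous $a_-$ for the negative part we recover $\frac1N\sum_{i,k,k'}X_{ik}X_{ik'}\theta_k\theta_{k'}=\|\theta\|_1^2(a_+-a_-)$. Estimating $a_+$ and $a_-$ with additive error $\beta/(6\|\theta\|_1^2)$ — which costs $M=\mathcal{O}(\|\theta\|_1^2/\beta)=\mathcal{O}(1/\beta)$ applications since $\|\theta\|_1\le1$ — gives this term with additive error $\beta/3$.

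Finally I would run the (constantly many) amplitude-estimation subroutines coherently, write their outcomes into registers, and add them to produce $\ket{\Lambda}$ whose first register estimates $L_S(\theta)$ with additive error $\beta$ and total failure probability at most $1/100$; a median over $\mathcal{O}(\log(1/\delta))$ independent repetitions then boosts the success probability to $1-\delta$, and the query/gate counts multiply to $\tilde{\mathcal{O}}(\log(1/\delta)/\beta)$, as claimed. I expect the only delicate point to be the error bookkeeping for the quadratic term: one must check that amplitude-estimation error $\eta$ in $a_\pm$ yields error only $\|\theta\|_1^2\eta\le\beta/6$ in the term itself (which works precisely because $\|\theta\|_1\le1$), and that splitting into $a_+$ and $a_-$ rather than estimating a signed quantity directly is necessary because amplitude estimation returns only non-negative estimates; the edge case where some amplitude is $0$ (e.g.\ $\theta=0$) is still covered by the error bound $\sqrt{(1-a)a}/M+1/M^2$ of Theorem~\ref{thm:amplitude_estimation}.
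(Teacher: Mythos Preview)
Your proposal is correct and essentially identical to the paper's own proof: the same three-term decomposition of $L_S(\theta)$, the same use of two copies of $\ket{\theta}$ (prepared via Theorem~\ref{thm:StatePrepare}) for the quadratic term, the same positive/negative split with $U_{CR^+}$ followed by amplitude estimation with precision $\beta/(6\|\theta\|_1^2)$, and the same median-of-$\Theta(\log(1/\delta))$ boost at the end. The paper allots the identical $\beta/3$ budget to each of the three terms and handles the cross term and the $\frac1N\sum_i y_i^2$ term exactly as you describe.
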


\begin{proof}
Recall
\begin{align}
  L_S(\theta)=\frac{1}{N} \sum\limits_{i\in\intg_N} |\langle x_i,\theta\rangle-y_i|^2=\frac{1}{N}\sum\limits_{i\in\intg_N}\langle x_i,\theta\rangle^2-\frac{2}{N}\sum\limits_{i\in\intg_N}y_i\langle x_i,\theta\rangle+\frac{1}{N}\sum\limits_{i\in\intg_N} y^2_i.\label{eq:lasso}
\end{align}
We use the positive controlled rotation gate defined in the proof of Theorem~\ref{thm:nabla_error_new}. By Theorem~\ref{thm:StatePrepare}, we can implement $U_{\theta}:\ket{0}\ket{0}\rightarrow \ket{\theta}$, where $\ket{\theta}=\frac{1}{\sqrt{\|\theta\|_1}}\sum\limits_{j\in\mathbb{Z}_d} \sqrt{|\theta_j|}\ket{j}\ket{sign(\theta_j)}$, using $\mathcal{\tilde{O}}(1)$ queries to $O_{KP_\theta}$ and elementary gates.

We start by estimating (with additive error $\beta/4$) the first term on the right-hand side of Equation~(\ref{eq:lasso}), which is 
\[
\frac{1}{N}\sum\limits_{i\in\intg_N}\langle x_i,\theta\rangle^2=\frac{1}{N}\sum\limits_{i\in\intg_N}\sum\limits_{k_1,k_2 \in \mathbb{Z}_d}\theta_{k_1}\theta_{k_2}X_{ik_1}X_{ik_2}. 
\]
Let $U_u:\ket{i}\ket{k_1}\ket{s_1}\ket{k_2}\ket{s_2}\ket{0}\rightarrow \ket{i}\ket{k_1}\ket{s_1}\ket{k_2}\ket{s_2}\ket{s_1s_2X_{ik_1}X_{ik_2}}$, where the last register is the product of four numbers. This can be implemented (with negligible error) via $\mathcal{O}(1)$ queries to $O_X$, $O^\dagger_X$, and $\tilde{\mathcal{O}}(1)$ elementary gates. 

We generate  $\frac{1}{\sqrt{N}}\sum\limits_{i\in\intg_N}\ket{i}\ket{0}\ket{0}\ket{0}\ket{0}\ket{0}\ket{0}$, and apply $U_\theta$ twice to obtain $\frac{1}{\sqrt{N}}\sum\limits_{i\in\intg_N}\ket{i}\ket{\theta}^{\otimes 2}\ket{0}\ket{0}$. Applying $U_u$ on the second to seventh registers and applying $U_{CR^+}$ on the last two, we obtain
\begin{align*} &\frac{1}{\sqrt{N\|\theta\|^2_1}}\Big(\sum\limits_{\substack{i\in\mathbb{Z}_N,k_1,k_2\in\mathbb{Z}_d\\s_{k_1}s_{k_2}X_{ik_1}X_{ik_2}>0}}\sqrt{|\theta_{k_1}\theta_{k_2}|}\ket{Z_{i,k_1,k_2}}(\sqrt{s_{k_1}s_{k_2}X_{ik_1}X_{ik_2}}\ket{1}+\sqrt{1-s_{k_1}s_{k_2}X_{ik_1}X_{ik_2}}\ket{0})\\
&+\sum\limits_{\substack{i\in\mathbb{Z}_N,k_1,k_2\in\mathbb{Z}_d\\s_{k_1}s_{k_2}X_{ik_1}X_{ik_2}\leq0}}\sqrt{|\theta_{k_1}\theta_{k_2}|}\ket{Z_{i,k_1,k_2}}\ket{0}\Big)\\
=\,&\frac{1}{\sqrt{N\|\theta\|^2_1}}\Big(\sum\limits_{\substack{i\in\mathbb{Z}_N,k_1,k_2\in\mathbb{Z}_d\\\theta_{k_1}\theta_{k_2}X_{ik_1}X_{ik_2}>0}}\sqrt{\theta_{k_1}\theta_{k_2}X_{ik_1}X_{ik_2}}\ket{Z_{i,k_1,k_2}}\ket{1}\\
&+\big(\sum\limits_{\substack{i\in\mathbb{Z}_N,k_1,k_2\in\mathbb{Z}_d\\\theta_{k_1}\theta_{k_2}X_{ik_1}X_{ik_2}>0}}\sqrt{|\theta_{k_1}\theta_{k_2}|(1-s_{k_1}s_{k_2}X_{ik_1}X_{ik_2})}\ket{Z_{i,k_1,k_2}}+\sum\limits_{\substack{i\in\mathbb{Z}_N,k_1,k_2\in\mathbb{Z}_d\\\theta_{k_1}\theta_{k_2}X_{ik_1}X_{ik_2}\leq0}}\sqrt{|\theta_{k_1}\theta_{k_2}|}\ket{Z_{i,k_1,k_2}}\big)\ket{0}\Big)\\
=\,&\sqrt{a_+}\ket{\phi_1}\ket{1}+\sqrt{1-a_+}\ket{\phi_0}\ket{0},\mbox{~~~~for }a_+= \hspace*{-1.5em}\sum\limits_{\substack{i\in\mathbb{Z}_N,k_1,k_2\in\mathbb{Z}_d\\\theta_{k_1}\theta_{k_2}X_{ik_1}X_{ik_2}>0}}\hspace*{-1.5em}\frac{\theta_{k_1}\theta_{k_2}X_{ik_1}{X_{ik_2}}}{N\|\theta\|^2_1},
\end{align*}
where $s_{k_1}=sign(\theta_{k_1})$, $s_{k_2}=sign(\theta_{k_2})$, $Z_{i,k_1,k_2}=(i,k_1,s_{k_1},k_2,s_{k_2},s_{k_1}s_{k_2}X_{ik_1}X_{ik_2})$. 

By applying Theorem~\ref{thm:amplitude_estimation}, with failure probability at most $1/1000$, we can estimate $a_+$ with additive error $\beta/(6\|\theta\|^2_1)$ using  $\mathcal{O}(\|\theta\|^2_1/\beta)$ applications of $O_X$, $O^\dagger_X$, $U_\theta$, $U_\theta^\dagger$, and $\tilde{\mathcal{O}}(
\|\theta\|^2_1/\beta)$ elementary gates. Note that our algorithm knows $\|\theta\|_1$ because it is stored in the root of $KP_\theta$. Similarly we estimate
\[
a_-= -\sum\limits_{\substack{i\in\mathbb{Z}_N,k_1,k_2\in\mathbb{Z}_d\\\theta_{k_1}\theta_{k_2}X_{ik_1}X_{ik_2}<0}} \frac{\theta_{k_1}\theta_{k_2}X_{ik_1}{X_{ik_2}}}{N\|\theta\|^2_1}
\]
with the same additive error. Hence we can estimate 
\[
\frac{1}{N} \sum\limits_{i\in \mathbb{Z}_N,k_1,k_2\in\mathbb{Z}_d}\theta_{k_1}\theta_{k_2}X_{ik_1}X_{ik_2}=\|\theta\|^2_1\cdot(a_+-a_-)
\]
with additive error $\beta/3$.

For the second and third terms of the right-hand side of Equation~(\ref{eq:lasso}), we use a similar strategy to estimate each with additive error $\beta/3$, using $\mathcal{O}(1/\beta)$ applications of $O_X$, $O_X^\dagger$, $O_y$, $O_y^\dagger$, $U_\theta$, $U_\theta^\dagger$, and $\tilde{\mathcal{O}}(1/\beta)$ elementary gates.

Combining the previous estimations and the fact that we can implement $U_\theta$ by $\tilde{\mathcal{O}}(1)$ queries to $O_{KP_\theta}$, $O^\dagger_{KP_\theta}$ and $\|\theta\|_1\leq 1$, with failure probability at most $1/100$, we can estimate $L_S(\theta)$ with additive error $\beta$ by using  $\mathcal{\tilde{O}}(1/\beta)$ applications of $O_X$, $O_X^\dagger$, $O_y$, $O_y^\dagger$, $O_{KP_\theta}$, $O^\dagger_{KP_\theta}$, and elementary gates. 
By repeating the procedure $\Theta(\log (1/\delta))$ times and taking the median among the outputs, we can decrease the failure probability from at most $1/100$ to at most $\delta$.
\end{proof}

If we have multiple vectors $\theta^0,\ldots,\theta^{m-1}$, then we can apply the previous theorem conditioned on the index of the vector we care about:

\begin{corollary}\label{cor:multi_estimation}
Let $\theta^0,\theta^1,\ldots,\theta^{m-1}\in B^d_1$, and $\beta, \delta >0$. Suppose for all $h\in\mathbb{Z}_m$, we have a KP-tree $KP_{\theta^h}$ of vector $\theta^h$ and can make quantum queries to $O_{KP_{\theta}}: \ket{h,\ell,k}\ket{0}\rightarrow \ket{h,\ell,k}\ket{KP_{\theta^h}(\ell,k)}$.
Then we can implement $\tilde{U}_{L_S}: \ket{h}\ket{0}\rightarrow \ket{h}\ket{\Lambda}$ such that for all $h\in \mathbb{Z}_m$, after measuring the state~$\ket{\Lambda}$, with probability $\geq 1-\delta$ the first register $\lambda$ of the outcome will satisfy 
$|\lambda- L_S(\theta^{h})|\leq \beta$, by using $\mathcal{\tilde{O}}(\frac{\log (1/\delta)}{\beta})$  applications of $O_X$, $O^\dagger_X$, $O_y$, $O^\dagger_y$, $O_{KP_\theta}$, $O^\dagger_{KP_\theta}$, and elementary gates.
\end{corollary}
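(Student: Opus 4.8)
The plan is to reduce Corollary~\ref{cor:multi_estimation} to Theorem~\ref{thm:error_KPtree} by making the entire construction in that theorem controlled on the index register $\ket{h}$. Concretely, I would re-run the proof of Theorem~\ref{thm:error_KPtree} verbatim, except that every gate which depended on $KP_\theta$ is now replaced by its $h$-controlled analogue. The key observation is that all the work in Theorem~\ref{thm:error_KPtree} accesses $\theta$ only through two interfaces: (1) queries to $O_{KP_\theta}$ used inside $U_\theta$ (and inside the sign gate) via Theorem~\ref{thm:StatePrepare}, and (2) the scalar $\|\theta\|_1$ stored at the root, which is a classical number used to set the amplitude-estimation precision $\beta/(6\|\theta\|_1^2)$ and to rescale the output. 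Both generalize cleanly: the hypothesis gives us the joint oracle $O_{KP_\theta}:\ket{h,\ell,k}\ket{0}\rightarrow\ket{h,\ell,k}\ket{KP_{\theta^h}(\ell,k)}$, which is exactly what a controlled version of the single-vector oracle looks like, so running the state-preparation procedure of Theorem~\ref{thm:StatePrepare} with the $h$-register carried along as an extra control produces $\ket{h}\ket{0}\ket{0}\rightarrow\ket{h}\ket{\theta^h}$ at the same asymptotic cost.

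The steps, in order: first, fix the interface — note that $\tilde U_{L_S}$ receives $\ket{h}\ket{0}$ and must leave $\ket{h}$ untouched while writing an estimate of $L_S(\theta^h)$ into the output register. Second, observe that conditioned on a fixed value of $h$, the unitary acts exactly as the $\tilde U_{L_S}$ built in Theorem~\ref{thm:error_KPtree} for the vector $\theta^h$; since that theorem's guarantee holds for every $\theta\in B_1^d$, it holds in particular for each $\theta^h$, giving the per-$h$ success probability $\geq 1-\delta$ and additive error $\leq\beta$. Third, handle the one genuinely $h$-dependent numerical subtlety: the precision parameter and the final rescaling in Theorem~\ref{thm:error_KPtree} use $\|\theta\|_1$, and here $\|\theta^h\|_1$ varies with $h$. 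This is fine because $\|\theta^h\|_1$ is stored at the root of $KP_{\theta^h}$ and can be read coherently into an ancilla conditioned on $h$ (one query to the joint oracle at $(\ell,k)=(0,0)$, or rather the sum of leaf values as in the definition of $KP_\theta(0,0)$); all subsequent arithmetic — choosing the number of amplitude-estimation rounds, multiplying the estimate by $\|\theta^h\|_1^2$ — is then done reversibly using that ancilla, and since $\|\theta^h\|_1\leq 1$ the query count stays $\mathcal{\tilde O}(1/\beta)$ uniformly in $h$. Fourth, the median-amplification step ($\Theta(\log(1/\delta))$ repetitions, take the median) is itself index-oblivious, so it drops the failure probability from $1/100$ to $\delta$ for every $h$ simultaneously.

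The only mild obstacle is bookkeeping around superpositions over $h$: one must check that no intermediate ancilla (the $\|\theta^h\|_1$ register, the amplitude-estimation scratch space, the KP-tree path registers) gets entangled with $\ket{h}$ in a way that survives to the output, but this is automatic because every such ancilla is either uncomputed (as in the state-preparation argument inside Theorem~\ref{thm:StatePrepare}) or is a deterministic function of $h$ and hence remains a clean, disentangled record that can be uncomputed at the end. Since Theorem~\ref{thm:amplitude_estimation} and the median trick are applied pointwise in $h$ with identical parameters, the total cost is the same $\mathcal{\tilde O}(\log(1/\delta)/\beta)$ applications of $O_X,O_X^\dagger,O_y,O_y^\dagger,O_{KP_\theta},O_{KP_\theta}^\dagger$ and elementary gates as in Theorem~\ref{thm:error_KPtree}, with no dependence on $m$ beyond the $\mathcal{\tilde O}(1)$ overhead of writing down the $\lceil\log m\rceil$-bit index. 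I would therefore present this as a short argument: "apply the construction of Theorem~\ref{thm:error_KPtree} coherently controlled on $\ket{h}$, reading $\|\theta^h\|_1$ from the root of $KP_{\theta^h}$ to set parameters," and leave the routine verification of uncomputation to the reader.
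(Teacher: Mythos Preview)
Your proposal is correct and matches the paper's approach exactly: the paper simply states (in one line preceding the corollary) that one applies Theorem~\ref{thm:error_KPtree} ``conditioned on the index of the vector we care about,'' and gives no further proof. Your write-up is just a more detailed unpacking of that same idea.
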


\subsection{Quantum algorithms for Lasso with respect to $S$}\label{sec:Q_Lasso_solver_worst}

In this subsection, we will show how to find an approximate minimizer for Lasso with respect to a given sample set $S$. The following algorithm simply applies the Frank-Wolfe algorithm to find an $\epsilon$-minimizer for Lasso with respect to the sample set $S$ given $C$, a guess for the curvature constant $C_{L_S}$ (which our algorithm does not know in advance). Note that to find an $s\in B^d_1$ such that $\langle s,\nabla L_S(\theta^{t})\rangle\leq \min\limits_{s'\in \mathcal{X}}\langle s',\nabla L_S(\theta^{t})\rangle +  \tau_t C_{L_S}/4$, it suffices to only check $s\in \{\pm e_0,\ldots,\pm e_{d-1}\}$ because the domain is $B^d_1$ and $\nabla L_S$ is a linear function in~$\theta$. Also, by Theorem~\ref{thm:curve_bound}, the curvature constant $C_{L_S}$ of loss function $L_S$ is at most $8$ because $(x_i,y_i)$ is in $[-1,1]^d\times [-1,1]$ for all $i\in\intg_N$. 

\begin{algorithm}[hbt]
\SetKwData{Left}{left}\SetKwData{This}{this}\SetKwData{Up}{up}
\SetKwFunction{Union}{Union}\SetKwFunction{FindCompress}{FindCompress}
\SetKwInOut{Input}{input}\SetKwInOut{Output}{output}
\Input{a positive value $C$; additive error $\epsilon$;} 
\SetAlgoLined
Let $\theta^0$ be the $d$-dimensional all-zero vector\;
Let $T= 6\cdot\lceil \frac{C}{\epsilon} \rceil$\;
  \For{$t\leftarrow 0$ \KwTo $T$}{
   $\tau_t= \frac{2}{t+2}$\;
    Let $s\in \{\pm e_0,\ldots,\pm e_{d-1}\}$ be such that $\langle\nabla L_S(\theta^t),s\rangle \leq \min\limits_{j' \in \mathbb{Z}_d}-|\nabla_{j'} L_S(\theta^t)| +\frac{C}{8t+16}$\;
  $\theta^{t+1}=(1-\tau_t) \theta^{t}+\tau_ts$\;
  }
  \Output{$\theta^T$;}
 \caption{The algorithm for Lasso with a guess~$C$ for the value of the curvature constant}
 \label{Alg:FW_LASSO_Known}
\end{algorithm}

It is worth mentioning that Algorithm~\ref{Alg:FW_LASSO_Known} also outputs an $\epsilon$-minimizer if its input $C$ equals the curvature constant $C_{L_S}$ approximately instead of exactly. For example, suppose we only know that the curvature constant $C_{L_S}$ is between $C$ and $2C$, where $C$ is the input in Algorithm~\ref{Alg:FW_LASSO_Known}. Then the output of Algorithm~\ref{Alg:FW_LASSO_Known} is still an $\epsilon$-minimizer. We can see this by first observing that the error we are allowed to make for the linear subproblem in iteration $t$ is $\frac{C_{L_S}}{4t+8}\geq \frac{C}{8t+16}$, and hence by Theorem~\ref{thm:FW}, after $T=6\cdot\lceil \frac{C}{\epsilon}\rceil$ iterations, the output $\theta^T$ is a $\frac{3C}{(T+2)}=\frac{3C}{6\cdot\lceil C/\epsilon \rceil +2}$-minimizer for~$L_S$. Because $\frac{3C}{6\cdot\lceil C/\epsilon \rceil +2}\leq \epsilon$, the output $\theta^T$ is therefore an $\epsilon$-minimizer.

In the Lasso case, we do not know how to find a positive number $C$ such that $C_{L_S}\in [C,2C]$, but we know $C_{L_S}\leq 8$ by Theorem~\ref{thm:curve_bound}. Hence we can try different intervals of possible values for $C_{L_S}$: we apply Algorithm~\ref{Alg:FW_LASSO_Known} with different input $C=8,4,2,1,1/2,\ldots, 2^{-\lceil\log (1/\epsilon)\rceil}$, and then we collect all outputs of Algorithm~\ref{Alg:FW_LASSO_Known} with those different inputs, as candidates. After that, we compute the objective values of all those candidates, and output the one with minimum objective value. If $C_{L_S}\in (\epsilon,8]$, then at least one of the values we tried for $C$ will be within a factor of~2 of the actual curvature constant $C_{L_S}$. Hence one of our candidates is an $\epsilon$-minimizer.

However, we also need to deal with the case that $C_{L_S}\leq \epsilon$. In this case, we consider the ``one-step'' version of the Frank-Wolfe algorithm, where the number of iterations is~1. But now we do not estimate $\langle\nabla L_S(\theta^t),s\rangle$ anymore (i.e., we do not solve linear subproblems anymore). We find that the only possible directions are the vertices of the $\ell_1$-ball, and $\theta^0$ is the all-zero vector, implying that $\theta^1$, the output of one-step Frank-Wolfe, must be in $I=\{\pm e_0/3,\ldots,\pm e_{d-1}/3\}$ by the update rule of Frank-Wolfe. Besides, $C_{L_S}\leq \epsilon$ implies that $\theta^1$ is a $\frac{3C_{L_S}}{1+2}\leq \epsilon$-minimizer for Lasso. Hence we simply output a $v=\arg\min\limits_{v'\in I}L_S(v')$ if $C_{L_S}\leq\epsilon$.

Combining the above arguments gives the following algorithm:

\begin{algorithm}[hbt]
\SetKwData{Left}{left}\SetKwData{This}{this}\SetKwData{Up}{up}
\SetKwFunction{Union}{Union}\SetKwFunction{FindCompress}{FindCompress}
\SetKwInOut{Input}{input}\SetKwInOut{Output}{output}
\Input{$\epsilon$;} 
Let $v\in \{\pm e_0/3,\ldots,\pm e_{d-1}/3\}$ be such that $ L_S(v)-\min_{j\in \mathbb{Z}_d} L_S(\pm e_j/3)\leq \epsilon/10$\;
Let candidate set $A=\{v\}$\;
\SetAlgoLined
      \For{$C\leftarrow 8,4,2,1,\frac{1}{2}, \ldots,2^{-\lceil\log (1/\epsilon)\rceil-1}$}{
  RUN Algorithm~\ref{Alg:FW_LASSO_Known} with inputs $C$ and $\epsilon/10$\;
  ADD the output of Algorithm~\ref{Alg:FW_LASSO_Known} to $A$\;
  }
  \Output {$\arg\min_{w\in A} L_S(w)$;}
 \caption{The algorithm for Lasso}
 \label{Alg:Q_Lasso}
\end{algorithm}

\begin{theorem}\label{thm:Q_lasso_empirical}
Let $S=\{(x_i,y_i)\}_{i=0}^{N-1}$ be the given sample set stored in QROM. 
For each $\epsilon \in (0,0.5)$, there exists a bounded-error quantum algorithm that finds an $\epsilon$-minimizer for Lasso with respect to sample set $S$ using $\tilde{\mathcal{O}}(\frac{\sqrt{d}}{\epsilon^{2}})$ time
and $\tilde{\mathcal{O}}(\frac{1}{\epsilon})$ QRAM and classical space.
\end{theorem}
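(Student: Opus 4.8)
The plan is to run the classical outer structure of Algorithm~\ref{Alg:Q_Lasso} (which in turn calls Algorithm~\ref{Alg:FW_LASSO_Known} for the geometrically-spaced guesses $C=8,4,2,\ldots$), maintaining the current Frank-Wolfe iterate $\theta^t$ classically in a KP-tree, and to implement the only expensive step — solving the linear subproblem in Algorithm~\ref{Alg:FW_LASSO_Known}, i.e., finding a direction $s\in\{\pm e_0,\ldots,\pm e_{d-1}\}$ with $\langle s,\nabla L_S(\theta^t)\rangle$ close to $\min_{j'}(-|\nabla_{j'}L_S(\theta^t)|)$ — by quantum approximate minimum-finding. Concretely, in iteration $t$ I would feed Theorem~\ref{thm:min_finding_approx} the unitary obtained from Theorem~\ref{thm:nabla_error_new}: applying $\tilde{U}_{\nabla L_S}$ and then coherently computing $v_j:=-|\lambda_j|$ from the estimate $\lambda_j$ of $\nabla_j L_S(\theta^t)$ yields a map $\ket{j}\ket{0}\mapsto\ket{j}\ket{\Lambda_j}$ whose measured value is within $\beta_t$ of $-|\nabla_j L_S(\theta^t)|$ with probability $\geq 1-\delta_2$, exactly the input form Theorem~\ref{thm:min_finding_approx} needs. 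Minimum-finding then returns an index $j$ with $-|\nabla_j L_S(\theta^t)|\leq\min_{j'}(-|\nabla_{j'}L_S(\theta^t)|)+2\beta_t$; taking $s=-\mathrm{sign}(\lambda_j)\,e_j$ and choosing $\beta_t=\Theta(C/t)$ small enough that $4\beta_t\leq C/(8t+16)$ makes $s$ a valid solution of the linear subproblem (the extra factor $2$ covers the case where $\mathrm{sign}(\lambda_j)$ is wrong, which forces $|\nabla_j L_S(\theta^t)|\leq\beta_t$ and hence makes every direction near-optimal). One then measures to obtain the classical index and sign and updates the KP-tree from $\theta^t$ to $\theta^{t+1}=(1-\tau_t)\theta^t+\tau_t s$ via Theorem~\ref{thm:KP_update_cost}.

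For the cost: queries to $O_{KP_\theta}$ inside Theorem~\ref{thm:nabla_error_new} are just QRAM reads, so one application of $\tilde{U}_{\nabla L_S}$ with precision $\beta_t$ and soundness $\delta_2$ costs $\tilde{\mathcal{O}}(\log(1/\delta_2)/\beta_t)$ queries and gates, and minimum-finding uses $\tilde{\mathcal{O}}(\sqrt d\log(1/\delta_1))$ such applications; hence iteration $t$ costs $\tilde{\mathcal{O}}(\sqrt d\,t/C)$ once $\delta_1=\mathrm{poly}(\epsilon)$ and $\delta_2$ are fixed as Theorem~\ref{thm:min_finding_approx} requires (both only contribute $\tilde{\mathcal{O}}(1)$ factors), plus $\mathrm{poly}\log d$ for the KP-tree update and for the state preparations hidden inside Theorem~\ref{thm:nabla_error_new}. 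Summing over the $T=\mathcal{O}(C/\epsilon)$ iterations of one call to Algorithm~\ref{Alg:FW_LASSO_Known} gives $\sum_{t=0}^{T}\tilde{\mathcal{O}}(\sqrt d\,t/C)=\tilde{\mathcal{O}}(\sqrt d\,T^2/C)=\tilde{\mathcal{O}}(\sqrt d\,C/\epsilon^2)=\tilde{\mathcal{O}}(\sqrt d/\epsilon^2)$ because $C\leq 8$, and the $\mathcal{O}(\log(1/\epsilon))$ guesses for $C$ only multiply this by a polylog factor. The remaining work in Algorithm~\ref{Alg:Q_Lasso} is cheaper: the candidate $v\in\{\pm e_j/3\}$ for the $C_{L_S}\leq\epsilon$ regime is found by one round of approximate minimum-finding over the $2d$ values $L_S(\pm e_j/3)$ estimated to accuracy $\Theta(\epsilon)$ via Corollary~\ref{cor:multi_estimation}, costing $\tilde{\mathcal{O}}(\sqrt d/\epsilon)$, and evaluating $L_S$ to accuracy $\Theta(\epsilon)$ on each of the $\mathcal{O}(\log(1/\epsilon))$ candidates (Theorem~\ref{thm:error_KPtree}) costs $\tilde{\mathcal{O}}(1/\epsilon)$ apiece.

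Correctness then reduces to the classical analysis preceding Algorithm~\ref{Alg:Q_Lasso}: by Theorem~\ref{thm:curve_bound}, $C_{L_S}\leq 8$, so if $C_{L_S}\in(\epsilon,8]$ some guess satisfies $C_{L_S}\in[C,2C]$, the per-iteration error $C/(8t+16)$ is admissible for Theorem~\ref{thm:FW}, and the corresponding output of Algorithm~\ref{Alg:FW_LASSO_Known} is a $3C_{L_S}/(T+2)\leq(\epsilon/10)$-minimizer of $L_S$; if $C_{L_S}\leq\epsilon$, one-step Frank-Wolfe makes $v$ an $(\epsilon/5)$-minimizer; and the final $\arg\min$ step, using loss estimates accurate to $\Theta(\epsilon)$, loses only another $\Theta(\epsilon)$, so with the constants chosen the output is an $\epsilon$-minimizer for $L_S$. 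The overall failure probability is controlled by a union bound over the $\mathcal{O}((1/\epsilon)\log(1/\epsilon))$ minimum-findings and $\mathcal{O}(\log(1/\epsilon))$ loss estimates, each driven below $\mathrm{poly}(\epsilon)$ by the above choice of $\delta_1,\delta_2$. For space: the Frank-Wolfe update rule makes each $\theta^t$ at most $(t+1)=\mathcal{O}(1/\epsilon)$-sparse, so its KP-tree has $\tilde{\mathcal{O}}(1/\epsilon)$ nodes and fits in $\tilde{\mathcal{O}}(1/\epsilon)$ bits of QRAM (and of classical memory), with each update touching only $\mathrm{poly}\log d$ nodes.

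The step I expect to be the main obstacle is balancing the three error parameters — the per-coordinate gradient precision $\beta_t$, the minimum-finding soundness $\delta_1$, and the per-coordinate soundness $\delta_2$ — simultaneously across all $\mathcal{O}(1/\epsilon)$ iterations and all $\mathcal{O}(\log(1/\epsilon))$ guesses, so that (i) the admissible linear-subproblem error of Algorithm~\ref{Alg:FW_LASSO_Known}, and hence the $O(C_{L_S}/T)$ convergence rate of Theorem~\ref{thm:FW}, is never violated; (ii) the total failure probability stays below $1/10$; and (iii) the per-iteration cost, which grows linearly in $t$ since $\beta_t$ must shrink like $C/t$, still telescopes to $\tilde{\mathcal{O}}(\sqrt d/\epsilon^2)$ rather than something worse. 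A secondary subtlety is checking that reading the Frank-Wolfe direction's sign off the noisy estimate $\lambda_j$ is harmless exactly in the regime where it can be wrong.
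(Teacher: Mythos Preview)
Your proposal is correct and follows essentially the same approach as the paper: Algorithm~\ref{Alg:Q_Lasso} with Algorithm~\ref{Alg:FW_LASSO_Known} inside, KP-tree maintenance of $\theta^t$, Theorem~\ref{thm:nabla_error_new} plus Theorem~\ref{thm:min_finding_approx} for the linear subproblem, Corollary~\ref{cor:multi_estimation} for the candidate $v$, and Theorem~\ref{thm:error_KPtree} for the final $\arg\min$. One simplification the paper makes that dissolves what you call the ``main obstacle'': since $T=6\lceil C/\epsilon\rceil$, for every $t\leq T$ the admissible linear-subproblem error $C/(8t+16)$ is at least $\epsilon/10$, so the paper just uses the \emph{fixed} precision $\beta=\epsilon/20$ in every iteration of every run, rather than your iteration-dependent $\beta_t=\Theta(C/t)$; the per-iteration cost is then uniformly $\tilde{\mathcal{O}}(\sqrt d/\epsilon)$ and the telescoping you worried about becomes a trivial product $\mathcal{O}(1/\epsilon)\cdot\tilde{\mathcal{O}}(\sqrt d/\epsilon)$.
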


\begin{proof}
We will implement Algorithm~\ref{Alg:Q_Lasso} in $\tilde{\mathcal{O}}(\frac{\sqrt{d}}{\epsilon^{2}})$ time
and $\tilde{\mathcal{O}}(\frac{1}{\epsilon})$  QRAM space. Below we analyze its different components. 

\paragraph{Analysis of Algorithm~\ref{Alg:FW_LASSO_Known}.}
We first show that we can implement Algorithm~\ref{Alg:FW_LASSO_Known} in $\tilde{\mathcal{O}}(\frac{\sqrt{d}}{\epsilon^{2}})$ time.
Because $C_{L_S}\leq 8$ (Theorem~\ref{thm:curve_bound}), the number of iterations for Algorithm~\ref{Alg:FW_LASSO_Known} with input $C=C_{L_S}$ is at most $6\cdot\lceil \frac{8}{\epsilon}\rceil$. However, as we mentioned above, we don't know how large $C_{L_S}$ is exactly, so we try all possible inputs (of Algorithm~\ref{Alg:FW_LASSO_Known}) in Algorithm~\ref{Alg:Q_Lasso}. Note that for every input $C\in\{8,4,2,1,\frac{1}{2}, \ldots,2^{-\lceil\log (1/\epsilon)\rceil-1}\}$ and for every number of iterations $t\in \{1,\ldots,6\cdot \lceil\frac{C}{\epsilon}\rceil\}$, $\frac{C}{4t+8}$ is at least $\frac{\epsilon}{10}$, so it suffices to ensure that in each iteration in each of our runs of Algorithm~\ref{Alg:FW_LASSO_Known}, the additive error for the approximate linear subproblem is $\leq\frac{\epsilon}{10}$.

Suppose we have $KP_{\theta^t}$ for each iteration $t$ of Algorithm~\ref{Alg:FW_LASSO_Known}, and suppose we can make queries to $O_{KP_{\theta^t}}$, then by Theorem~\ref{thm:nabla_error_new}, one can implement $\tilde{U}_{\nabla L_S}: \ket{j}\ket{0}\rightarrow \ket{j}\ket{\Lambda}$ such that for all $j\in \mathbb{Z}_d$, after measuring the state $\ket{\Lambda}$, with probability $\geq 1-\frac{
\epsilon^2}{2d\cdot 10^{20}\cdot \log^6(1/\epsilon)}$ the first register $\lambda$ of the measurement outcome will satisfy
$|\lambda-\nabla_j L_S(\theta)|\leq \frac{\epsilon}{20},$ by using $\mathcal{\tilde{O}}(\frac{\log (d/\epsilon)}{\epsilon})$
time and queries to $O_{KP_{\theta^t}}$, $O^\dagger_{KP_{\theta^t}}$. Then by Theorem~\ref{thm:min_finding_approx}, with failure probability at most 
$\frac{\epsilon}{10000\log(1/\epsilon)}$, one can find $s\in \{\pm e_0,\ldots,\pm e_{d-1}\}$ such that $\langle\nabla L_S(\theta^t),s\rangle \leq \min\limits_{j' \in \mathbb{Z}_d}-|\nabla_{j'} L_S(\theta^t)| +2\cdot\frac{\epsilon}{20}$, by using $\mathcal{\tilde{O}}(\sqrt{d}\cdot \log(1/\epsilon))$ applications of $\tilde{U}_{\nabla L_S}$ and $\tilde{U}^\dagger_{\nabla L_S}$, and $\tilde{\mathcal{O}}(\sqrt{d})$ elementary gates. 

For each iteration $t$ in Algorithm~\ref{Alg:FW_LASSO_Known}, we also maintain $KP_{\theta^t}$ and hence we can make quantum queries to $O_{KP_{\theta^t}}$. The cost for constructing $KP_{\theta^0}$ and the cost for updating $KP_{\theta^t}$ to $KP_{\theta^{t+1}}$ is $\tilde{\mathcal{O}}(1)$ for both time and space by Theorem~\ref{thm:KP_update_cost}. Moreover, the total number of iterations $T$ is at most $6\cdot\lceil\frac{8}{\epsilon}\rceil$ in Algorithm~\ref{Alg:FW_LASSO_Known} because $C_{L_S}\leq 8$, and hence the space cost for maintaining $KP_{\theta^t}$ and implementing $O_{KP_{\theta^t}}$ is $\tilde{\mathcal{O}}(\frac{1}{\epsilon})$ bits. Hence we can implement Algorithm~\ref{Alg:FW_LASSO_Known} with failure probability at most $\lceil\frac{8}{\epsilon}\rceil\cdot\frac{6\epsilon}{10000\log(1/\epsilon)}$ using $\tilde{\mathcal{O}}(\frac{\sqrt{d}}{\epsilon^{2}})$ time  
and $\tilde{\mathcal{O}}(\frac{1}{\epsilon})$ bits of QRAM and classical space.

\paragraph{Analysis of Algorithm~\ref{Alg:Q_Lasso}.}
Now we show how to implement Algorithm~\ref{Alg:Q_Lasso} with failure probability at most $1/10$ using $\tilde{\mathcal{O}}(\frac{\sqrt{d}}{\epsilon^2})$ time.
By Corollary~\ref{cor:multi_estimation}, one can implement $\tilde{U}_{L_S}: \ket{j}\ket{0}\rightarrow \ket{j}\ket{\Lambda}$ such that for all $j\in \mathbb{Z}_d$, after measuring the state $\ket{\Lambda}$, with failure probability at most $\frac{1}{2d\cdot10^{16}}$ the first register $\lambda$ of the outcome will satisfy 
$|\lambda- L_S(e_j/3)|\leq \epsilon/20$ using $\mathcal{\tilde{O}}(\frac{1}{\epsilon})$ time.
Then by Theorem~\ref{thm:min_finding_approx}, with failure probability at most $0.0001+1000\cdot\log(1000)\sqrt{\frac{2d}{2d\cdot10^{16}}}\leq \frac{2}{1000}$ we can find $v\in \{\pm e_0/3,\ldots,\pm e_{d-1}/3\}$ such that $ L_S(v)-\min_{j\in \mathbb{Z}_d} L_S(\pm e_j/3)\leq 2\cdot\epsilon/20=\epsilon/10$ by using $\mathcal{\tilde{{O}}}(\sqrt{d})$ applications of $\tilde{U}_{L_S}$ and $\tilde{U}_{L_S}^\dagger$ and $\tilde{\mathcal{{O}}}(\sqrt{d})$  elementary gates, and hence  $\mathcal{\tilde{O}}(\frac{\sqrt{d}}{\epsilon})$ time.

Because Algorithm~\ref{Alg:Q_Lasso} runs Algorithm~\ref{Alg:FW_LASSO_Known} $\lceil \log (1/\epsilon)\rceil$ times and each run fails with probability at most $\lceil\frac{8}{\epsilon}\rceil\cdot\frac{6\epsilon}{10000\log(1/\epsilon)}$, the candidate set $A$, with failure probability $\lceil\frac{8}{\epsilon}\rceil\cdot\frac{6\epsilon}{10000\log(1/\epsilon)}\cdot \lceil \log (1/\epsilon)\rceil+\frac{2}{1000}\leq\frac{1}{20}$, contains an $\frac{\epsilon}{10}$-minimizer. To output $\arg\min_{w\in A} L_S(w)$, we use Theorem~\ref{thm:error_KPtree} to evaluate $L_S(w)$ for all $w\in A$ with additive error $\frac{\epsilon}{10}$ with failure probability at most $\frac{1}{40\log(1/\epsilon)}$, and hence we find an $\epsilon/10$-minimizer among $A$ with probability at least $1-1/20-\lceil \log (1/\epsilon)\rceil\cdot \frac{1}{40\log(1/\epsilon)}\geq 0.9$. Because the candidate set $A$ contains an $\frac{\epsilon}{10}$-minimizer for Lasso, the $\frac{\epsilon}{10}$-minimizer among $A$ is therefore an $\epsilon$-minimizer for Lasso. The QRAM and classical space cost for each run is at most $\tilde{\mathcal{O}}(\frac{1}{\epsilon})$ because the space cost for Algorithm~\ref{Alg:FW_LASSO_Known} is $\tilde{\mathcal{O}}(\frac{1}{\epsilon})$. Hence the total cost for implementing Algorithm~\ref{Alg:Q_Lasso} is $\tilde{\mathcal{O}}(\frac{\sqrt{d}}{\epsilon^{2}})$ time and $\tilde{\mathcal{O}}(\frac{1}{\epsilon})$ bits of QRAM and classical space.
\end{proof}

\subsection{Quantum algorithms for Lasso with respect to $\mathcal{D}$}

In the previous subsection, we showed that we can find an $\epsilon$-minimizer for Lasso with respect to sample set~$S$. Here we show how we can find an $\epsilon$-minimizer for Lasso with respect to distribution~$\mathcal{D}$. 
First sample a set $S$ of $N=\tilde{\mathcal{O}}((\log d)/\epsilon^2)$ i.i.d.\ samples from $\mathcal{D}$, which is the input that will be stored in QROM, and then find an $\epsilon/2$-minimizer for Lasso with respect to $S$ by Theorem~\ref{thm:Q_lasso_empirical}. By Theorem~\ref{thm:Rademacher_Lasso}, with high probability, an $\epsilon/2$-minimizer for Lasso with respect to $S$ will be an $\epsilon$-minimizer for Lasso with respect to distribution $\mathcal{D}$. Hence we obtain the following corollary:

\begin{corollary} Let $S=\{(x_i,y_i)\}_{i=0}^{N-1}$ be the given sample set, sampled i.i.d.\ from $\mathcal{D}$. 
For arbitrary $\epsilon >0$, if $N=\tilde{\mathcal{O}}(\frac{\log d}{\epsilon^2})$, then there exists a bounded-error quantum algorithm that finds an $\epsilon$-minimizer for Lasso with respect to distribution $\mathcal{D}$ using $\tilde{\mathcal{O}}(\frac{\sqrt{d}}{\epsilon^{2}})$ queries to $O_X$, $O_y$ and elementary gates, and using $\tilde{\mathcal{O}}(\frac{1}{\epsilon})$ space (QRAM and classical bits).
\end{corollary}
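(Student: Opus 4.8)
The plan is a standard statistical-learning wrapper around Theorem~\ref{thm:Q_lasso_empirical}: draw a sample set of the stated size, solve the \emph{empirical} Lasso problem on it to accuracy $\epsilon/2$, and argue that uniform convergence upgrades this to an $\epsilon$-minimizer for the \emph{expected} loss. Concretely, I would fix a small constant $\delta$ (say $\delta=1/30$) and pick $N=c\log(d/\delta)/\epsilon^2$ for a large enough absolute constant $c$, so that $N=\tilde{\mathcal{O}}((\log d)/\epsilon^2)$ and Theorem~\ref{thm:Rademacher_Lasso} (applied with failure parameter $\delta$) guarantees, with probability $\geq 1-\delta$ over the i.i.d.\ draw $S\sim\mathcal{D}^N$, the uniform bound $L_\mathcal{D}(\theta)-L_S(\theta)\leq\epsilon/4$ for all $\theta\in B^d_1$. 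I store this $S$ in QROM (so $O_X$ and $O_y$ are exactly the oracles required by Theorem~\ref{thm:Q_lasso_empirical}) and run the algorithm of Theorem~\ref{thm:Q_lasso_empirical} with error parameter a suitable constant fraction of $\epsilon$; by standard amplification (repeat $\mathcal{O}(\log(1/\delta))$ times and keep the candidate of smallest estimated empirical loss, using Theorem~\ref{thm:error_KPtree}) we may assume it outputs a vector $\theta$ with $L_S(\theta)\leq\min_{\theta'\in B^d_1}L_S(\theta')+\epsilon/2$ with probability $\geq 1-\delta$, still within $\tilde{\mathcal{O}}(\sqrt{d}/\epsilon^2)$ time and $\tilde{\mathcal{O}}(1/\epsilon)$ QRAM/classical space.

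Next I would chain inequalities on the event that $S$ is ``good'' and the quantum algorithm succeeds; a union bound over the failure of uniform convergence, of a one-point concentration bound (below), and of the quantum algorithm gives success probability $\geq 1-3\delta\geq 9/10$ (the algorithm's internal randomness being independent of $S$). Let $\theta^\star=\arg\min_{\theta'\in B^d_1}L_\mathcal{D}(\theta')$. Since $\theta^\star$ does not depend on $S$, the quantity $L_S(\theta^\star)=\frac1N\sum_{i\in\intg_N}(\langle x_i,\theta^\star\rangle-y_i)^2$ is an average of $N$ i.i.d.\ random variables lying in $[0,4]$ (by H\"older, $|\langle x_i,\theta^\star\rangle|\leq 1$ and $|y_i|\leq 1$) with mean $L_\mathcal{D}(\theta^\star)$, so a Hoeffding bound (folded into the ``good $S$'' event, which is why $N=\tilde{\mathcal{O}}((\log d)/\epsilon^2)$ still suffices) gives $L_S(\theta^\star)\leq L_\mathcal{D}(\theta^\star)+\epsilon/4$. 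Combining,
\[
L_\mathcal{D}(\theta)\;\leq\;L_S(\theta)+\tfrac{\epsilon}{4}\;\leq\;\min_{\theta'\in B^d_1}L_S(\theta')+\tfrac{\epsilon}{2}+\tfrac{\epsilon}{4}\;\leq\;L_S(\theta^\star)+\tfrac{3\epsilon}{4}\;\leq\;L_\mathcal{D}(\theta^\star)+\epsilon,
\]
so $\theta$ is an $\epsilon$-minimizer for Lasso with respect to $\mathcal{D}$. I would then tally resources: the only queries to the input are those issued by the (amplified) call to Theorem~\ref{thm:Q_lasso_empirical}, i.e.\ $\tilde{\mathcal{O}}(\sqrt{d}/\epsilon^2)$ queries to $O_X,O_y$; running it at accuracy a constant fraction of $\epsilon$ rather than $\epsilon$ changes this only by a constant factor; and the elementary-gate count together with the $\tilde{\mathcal{O}}(1/\epsilon)$ bound on working space (QRAM plus classical bits) are inherited verbatim from Theorem~\ref{thm:Q_lasso_empirical}.

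I do not expect a genuine obstacle here, since all the real work (the quantized Frank-Wolfe iteration, the KP-tree bookkeeping, the approximate min-finding) is already packaged inside Theorem~\ref{thm:Q_lasso_empirical}. The two points needing a little care are: (i) Theorem~\ref{thm:Rademacher_Lasso} is one-sided --- it bounds $L_\mathcal{D}-L_S$ uniformly over $B^d_1$, not the reverse --- so for the step $\min_{\theta'}L_S(\theta')\leq L_S(\theta^\star)\leq L_\mathcal{D}(\theta^\star)+\epsilon/4$ I cannot reuse it and must invoke a one-point concentration inequality for the fixed, $S$-independent vector $\theta^\star$ (equivalently, one could cite a two-sided Rademacher bound); and (ii) the constants in the error budget --- $9/10$ from the algorithm, $\delta$ from uniform convergence, $\delta$ from the one-point bound --- must be made to add up to a bounded-error guarantee, which is why I take $\delta$ a small constant and amplify the empirical-loss algorithm's success probability.
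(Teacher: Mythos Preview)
Your proposal is correct and follows essentially the same approach as the paper: run the empirical-loss Lasso solver of Theorem~\ref{thm:Q_lasso_empirical} at accuracy $\epsilon/2$ on a sample of size $\tilde{\mathcal{O}}((\log d)/\epsilon^2)$ and invoke the uniform-convergence bound (Theorem~\ref{thm:Rademacher_Lasso}) to transfer to the expected loss. In fact your write-up is more careful than the paper's one-paragraph sketch: you correctly notice that Theorem~\ref{thm:Rademacher_Lasso} as stated is one-sided and patch the reverse direction at the fixed point $\theta^\star$ via Hoeffding, whereas the paper simply asserts that an $\epsilon/2$-empirical minimizer is an $\epsilon$-expected minimizer (implicitly relying on the two-sided version of the Rademacher bound from~\cite{MRT18}).
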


We can also use Theorem~\ref{thm:QRAM_free} to avoid the usage of QRAM in the above corollary with $\mathcal{\tilde{O}}(1/\epsilon)$ extra overhead.
\begin{corollary} Let $S=\{(x_i,y_i)\}_{i=0}^{N-1}$ be the given sample set, sampled i.i.d.\ from $\mathcal{D}$.
For arbitrary $\epsilon >0$, if $N=\tilde{\mathcal{O}}(\frac{\log d}{\epsilon^2})$, then there exists a bounded-error quantum algorithm that finds an $\epsilon$-minimizer for Lasso with respect to distribution $\mathcal{D}$ using $\tilde{\mathcal{O}}(\frac{\sqrt{d}}{\epsilon^{3}})$ queries to $O_X$, $O_y$ and elementary gates, and using $\tilde{\mathcal{O}}(\frac{1}{\epsilon})$ classical bits.
\end{corollary}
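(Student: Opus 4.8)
The plan is to obtain this corollary as a black-box modification of the previous (QRAM-using) corollary: keep the same algorithm, but replace the QRAM that stores the KP-trees by the circuit construction of Theorem~\ref{thm:QRAM_free}, paying a $\tilde{\mathcal{O}}(1/\epsilon)$ overhead in gate count per KP-tree access. First I would recall that the only place the algorithm behind Theorem~\ref{thm:Q_lasso_empirical} (and hence the previous corollary) uses QRAM is to hold the data structures $KP_{\theta^t}$ for the Frank-Wolfe iterates, so that it can answer queries to $O_{KP_{\theta^t}}$; the sample set $S$ itself lives in QROM and is accessed only through $O_X,O_y$, which we leave untouched. Since Frank-Wolfe keeps every iterate $\theta^t$ $O(1/\epsilon)$-sparse (one new nonzero coordinate per iteration, and $T=\tilde{\mathcal{O}}(1/\epsilon)$ iterations), each $KP_{\theta^t}$ has $\tilde{\mathcal{O}}(1/\epsilon)$ nodes and, under the paper's $\tilde{\mathcal{O}}(1)$-bit precision convention, is described by a bitstring of length $p=\mathrm{poly}(d,1/\epsilon)$ whose sparsity is $s=\tilde{\mathcal{O}}(t\log d)=\tilde{\mathcal{O}}(1/\epsilon)$.

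Next I would store this bitstring in ordinary classical read-write memory, which costs $\tilde{\mathcal{O}}(1/\epsilon)$ classical bits (exactly the amount the QRAM used), and implement each call to $O_{KP_{\theta^t}}$ or $O^\dagger_{KP_{\theta^t}}$ by the unitary $U_D$ of Theorem~\ref{thm:QRAM_free}, which uses $\mathcal{O}(s\log p)=\tilde{\mathcal{O}}(1/\epsilon)$ elementary gates and no QRAM. Whenever a Frank-Wolfe step updates $\theta^t\mapsto\theta^{t+1}=(1-\tau_t)\theta^t+\tau_t s$, the classical processor recomputes the $\tilde{\mathcal{O}}(1)$ affected entries of the data structure via Theorem~\ref{thm:KP_update_cost} and re-emits the (polylog-updated) circuit for $U_D$; this is $\tilde{\mathcal{O}}(1)$ classical work per iteration and, crucially, never needs quantum-writable memory.

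Finally I would do the accounting. The algorithm of Theorem~\ref{thm:Q_lasso_empirical} runs in $\tilde{\mathcal{O}}(\sqrt{d}/\epsilon^2)$ time; of this, the queries to $O_X,O_y$ (untouched) are one part and the queries to $O_{KP}$ are the other, so multiplying only the KP-tree part by the $\tilde{\mathcal{O}}(1/\epsilon)$ per-access overhead of the circuit simulation gives $\tilde{\mathcal{O}}(\sqrt{d}/\epsilon^3)$ total time, while the number of queries to $O_X,O_y$ remains $\tilde{\mathcal{O}}(\sqrt{d}/\epsilon^2)\le\tilde{\mathcal{O}}(\sqrt{d}/\epsilon^3)$. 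Composing this with the sampling step---draw $N=\tilde{\mathcal{O}}((\log d)/\epsilon^2)$ i.i.d.\ samples from $\mathcal{D}$, place them in QROM, find an $\epsilon/2$-minimizer for $L_S$, and invoke Theorem~\ref{thm:Rademacher_Lasso} to conclude it is (with high probability) an $\epsilon$-minimizer for $L_{\mathcal{D}}$---yields the stated bounds, now using $\tilde{\mathcal{O}}(1/\epsilon)$ classical bits and no QRAM.

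The only thing to be careful about is the bookkeeping: one must verify that the $\tilde{\mathcal{O}}(1/\epsilon)$ overhead multiplies the KP-tree queries only, and is not accidentally compounded with the $\tilde{\mathcal{O}}(\sqrt{d})$ min-finding factor or the $O_X,O_y$ queries a second time (it is not, since those are already inside the $\tilde{\mathcal{O}}(\sqrt{d}/\epsilon^2)$ budget), and that the bounds $s=\tilde{\mathcal{O}}(1/\epsilon)$ and $\log p=\tilde{\mathcal{O}}(1)$ really do follow from the $O(1/\epsilon)$-sparsity of the iterates and the precision convention. There is no genuine mathematical obstacle; it is a routine substitution plus a cost count.
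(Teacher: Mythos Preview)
Your proposal is correct and takes essentially the same approach as the paper: the paper's own justification is a single sentence stating that one can use Theorem~\ref{thm:QRAM_free} to avoid QRAM in the previous corollary at a $\tilde{\mathcal{O}}(1/\epsilon)$ multiplicative overhead, and your write-up is a faithful (and more detailed) unpacking of exactly that substitution and cost count.
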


\section{Quantum query lower bounds for Lasso}\label{sec:q_lower}
In this section we prove a quantum lower bound of $\Omega(\sqrt{d}/\epsilon^{1.5})$ queries for Lasso. To show such a lower bound, we define a certain set-finding problem, and show how it can be solved by an algorithm for Lasso. After that, we show that the worst-case set-finding problem can be seen as the composition of two problems, which have query complexities $\Omega(\sqrt{d/\epsilon})$ and $\Omega(1/\epsilon)$, respectively. Then the composition property of the quantum adversary bound implies a $\Omega(\sqrt{d/\epsilon}\cdot 1/\epsilon)=\Omega(\sqrt{d}/\epsilon^{1.5})$ query lower bound for Lasso. 

\subsection{Finding a hidden set $W$ using a Lasso solver}\label{sec:distributed_HSF_to_Lasso}

In this subsection we define the \emph{distributional set-finding problem}, and show how to reduce this to Lasso.
Let $p\in (0,1/2)$, $W\subset \mathbb{Z}_d$, and $\overline{W}=\mathbb{Z}_d\setminus W$. Define the distribution $\mathcal{D}_{p,W}$ over $(x,y)\in\{-1,1\}^d\times \{-1,1\}$ as follows. For each $j'\in \overline{W}$, $x_{j'}$ is generated according to  $\Pr[x_{j'}=1]=\Pr[x_{j'}=-1]=1/2$, and for each $j\in W$, $x_j$ is generated according to  $\Pr[x_{j}=1]=1/2+p$. And $y$ is generated according to $\Pr[y=1]=1$. The goal of the distributional set-finding problem $\mbox{DSF}_{\mathcal{D}_{p,W}}$ with respect to $\mathcal{D}_{p,W}$ is to output a set $\tilde{W}$ such that $|\tilde{W}\Delta W|\leq w/200$, given $M$ samples from~$\mathcal{D}_{p,W}$. One can think of the $M\times d$ matrix of samples as ``hiding'' the set $W$: the columns corresponding to $j\in W$ are likely to have more 1s than $-1$s, while the columns corresponding to $j\in \overline{W}$ have roughly as many 1s as $-1$s.

We first show some basic properties of $L_{\mathcal{D}_{p,W}}$.  In this subsection, let 
\[
\theta^*=v\cdot e_W, \mbox{ where }v=2p/(1+4p^2(w-1)) \mbox{ and }e_W=\sum\limits_{j\in W}e_j.
\]

\begin{theorem}\label{thm:properties_distinguisher}
Let $\theta$ be a vector in $\mathbb{R}^d$. We have
\begin{itemize}
    \item $L_{\mathcal{D}_{p,W}}(\theta)=\sum\limits_{j'\in\overline{W}} \theta_{j'}^2+\sum\limits_{j\in W} (\theta_j-2p)^2+4p^2\sum\limits_{j_1\in W}\sum\limits_{j_2\in W\setminus \{j_1\}}\theta_{j_1}\theta_{j_2}-4p^2w+1 $, where $w=|W|$.
    \item $\nabla_{j} L_{\mathcal{D}_{p,W}} (\theta) =\begin{dcases}
    2\theta_j-4p+8p^2\sum\limits_{\ell\in W\setminus \{j\}}\theta_{\ell}, & \text{if } j\in W,\\
    2\theta_j, &\text{otherwise}.
\end{dcases}$
    \item $[\nabla^2L_{\mathcal{D}_{p,W}}(\theta)]_{jk}=
    \begin{dcases}
    2\delta_{jk}, & \text{if } j,k\in \overline{W},\\
    (2-8p^2)\delta_{jk}+8p^2, & \text{if } j,k\in  W,\\
    0, &\text{otherwise},
\end{dcases}$
\end{itemize}
If we rearrange the order of indices such that $\{0,1,\ldots,w-1\}\in W$ and $\{w,w+1,\ldots, d-1\}\in \overline{W}$, then the Hessian of $L_{\mathcal{D}_{p,W}}$ is
$$
\nabla^2L_{\mathcal{D}_{p,W}}= \Big(\begin{matrix}
  8p^2 J_{w}+(2-8p^2)I_{w} & 0_{w\times \bar{w}}\\
  0_{\bar{w}\times w} & 2I_{\bar{w}},
\end{matrix}\Big)
$$
where $\bar{w}=d-w$. Note that this is independent of $\theta$, and $\nabla^2L_{\mathcal{D}_{p,W}} \succeq (2-8p^2) I_d$.
The unique global minimizer of $L_{\mathcal{D}_{p,W}}$ is $\theta^*$.
\end{theorem}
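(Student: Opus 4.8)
The plan is to compute everything directly from the definition $L_{\mathcal{D}_{p,W}}(\theta)=\E_{(x,y)\sim\mathcal{D}_{p,W}}[(\inProd{\theta}{x}-y)^2]$, exploiting that $y\equiv 1$ and that the coordinates of $x$ are mutually independent. First I would record the relevant moments: $\E[x_j]=2p$ for $j\in W$ and $\E[x_{j'}]=0$ for $j'\in\overline{W}$; $\E[x_j^2]=1$ for all $j$ (since $x_j\in\{-1,1\}$); and, by independence, $\E[x_jx_k]=\E[x_j]\E[x_k]$ for $j\neq k$, which equals $4p^2$ if $j,k\in W$ and $0$ otherwise. Expanding $L_{\mathcal{D}_{p,W}}(\theta)=\E[\inProd{\theta}{x}^2]-2\E[\inProd{\theta}{x}]+1$ and substituting these moments gives $\sum_j\theta_j^2+4p^2\sum_{j_1\in W}\sum_{j_2\in W\setminus\{j_1\}}\theta_{j_1}\theta_{j_2}-4p\sum_{j\in W}\theta_j+1$; completing the square on the $W$-coordinates via $\sum_{j\in W}(\theta_j-2p)^2=\sum_{j\in W}\theta_j^2-4p\sum_{j\in W}\theta_j+4p^2w$ rewrites this as the claimed first bullet.

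Next I would differentiate. Since $L_{\mathcal{D}_{p,W}}$ is quadratic, the gradient is affine and the Hessian constant; taking $\partial_{\theta_j}$ of the displayed expression gives $\nabla_jL=2\theta_j$ when $j\notin W$ and $\nabla_jL=2\theta_j+8p^2\sum_{\ell\in W\setminus\{j\}}\theta_\ell-4p$ when $j\in W$ (the factor $8p^2$ arises because each unordered pair $\{j,\ell\}\subseteq W$ is counted twice in the double sum). Differentiating once more yields $[\nabla^2L]_{jk}=2\delta_{jk}$ on $\overline{W}\times\overline{W}$, $(2-8p^2)\delta_{jk}+8p^2$ on $W\times W$, and $0$ on the mixed blocks, which in the reindexed basis is exactly the stated block-diagonal matrix with top-left block $8p^2J_w+(2-8p^2)I_w$ and bottom-right block $2I_{\bar w}$.

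For the spectral bound I would note that $J_w$ is positive semidefinite (eigenvalues $w$ and $0$), so $8p^2J_w\succeq 0$ and hence the top-left block is $\succeq(2-8p^2)I_w$; the bottom-right block is $2I_{\bar w}\succeq(2-8p^2)I_{\bar w}$ because $8p^2>0$. Therefore $\nabla^2L_{\mathcal{D}_{p,W}}\succeq(2-8p^2)I_d$, which is strictly positive definite since $p\in(0,1/2)$. Consequently $L_{\mathcal{D}_{p,W}}$ is strictly convex, so it has a unique global minimizer, namely its unique critical point. Finally I would identify that point: setting $\nabla_jL=0$ forces $\theta_j=0$ for $j\notin W$, while for $j\in W$ the equations are invariant under permutations of $W$, so all $W$-coordinates share a common value $v$ with $2v+8p^2(w-1)v=4p$, i.e.\ $v=2p/(1+4p^2(w-1))$; equivalently one just verifies that $\theta^*=v\cdot e_W$ with this $v$ annihilates the gradient. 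This yields $\theta^*$ as claimed.

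I do not expect a genuine obstacle here: everything is a direct computation. The only points demanding care are the combinatorial bookkeeping in the cross-term $\sum_{j_1\neq j_2\in W}$ and the resulting factor of $2$ in the gradient and Hessian, and the observation that strict convexity — which is what upgrades ``critical point'' to ``unique global minimizer'' — relies on the hypothesis $p<1/2$ via $2-8p^2>0$.
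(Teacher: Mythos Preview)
Your proposal is correct and follows essentially the same approach as the paper: compute the relevant first and second moments of the coordinates of $x$, expand the quadratic, complete the square, then differentiate to read off the gradient and constant Hessian, and deduce positive definiteness and the unique minimizer. The only cosmetic difference is that you invoke independence to get $\E[x_jx_k]=\E[x_j]\E[x_k]$ whereas the paper computes this product expectation directly, and you add a symmetry argument to solve the critical-point equations that the paper leaves implicit.
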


\begin{proof}
Note that for all distinct $j_1,j_2 \in W$,
\begin{align*}
\E_{(x,y)\sim {\mathcal{D}_{p,W}}}[x_{j_1}x_{j_2}] &=\Pr_{(x,y)\sim {\mathcal{D}_{p,W}}}[x_{j_1}=x_{j_2}]\cdot 1 + \Pr_{(x,y)\sim {\mathcal{D}_{p,W}}}[x_{j_1}=-x_{j_2}]\cdot (-1)\\
&= ((1/2+p)^2+(1/2-p)^2)-2(1/2+p)(1/2-p)=4p^2.
\end{align*}
For all distinct $j, j'$ such that $j\in \mathbb{Z}_d$ but $j'\in \overline{W}$, we have $\E_{(x,y)\sim {\mathcal{D}_{p,W}}}[x_{j}x_{j'}]=0$.

We first prove the first bullet. By definition and the facts we mentioned above, 
\begin{align*}
    L_{\mathcal{D}_{p,W}}(\theta)& = \E_{(x,y)\sim {\mathcal{D}_{p,W}}} [(\langle x,\theta \rangle -y)^2]=\E_{(x,y)\sim {\mathcal{D}_{p,W}}}[(\sum\limits_{j\in \mathbb{Z}_d}x_j\theta_j-y)^2]\\
    &=  \E_{(x,y)\sim {\mathcal{D}_{p,W}}}\left[\sum\limits_{j\in\mathbb{Z}_d} x_j^2\theta_j^2+\sum\limits_{j_1\in\mathbb{Z}_d}\sum\limits_{j_2\in \mathbb{Z}_d\setminus \{j_1\}}\theta_{j_1}\theta_{j_2}x_{j_1}x_{j_2}-2\sum\limits_{j\in \mathbb{Z}_d}x_j\theta_jy \, +y^2\right]\\
    &=\sum\limits_{j\in\mathbb{Z}_d} \theta_j^2+4p^2\sum\limits_{j_1\in W}\sum\limits_{j_2\in W\setminus \{j_1\}} \theta_{j_1}\theta_{j_2}-4p\sum\limits_{j\in W}\theta_j \, +1\\
    &= \sum\limits_{j'\in\overline{W}} \theta_{j'}^2+\sum\limits_{j\in W} (\theta_j-2p)^2+4p^2\sum\limits_{j_1\in W}\sum\limits_{j_2\in W\setminus \{j_1\}}\theta_{j_1}\theta_{j_2}-4p^2\cdot|W|+1.
\end{align*}
The last equation holds because $\sum\limits_{j\in{W}} \theta_{j}^2-4p\sum\limits_{j\in W}\theta_j =\sum\limits_{j\in W} (\theta_j-2p)^2-4p^2\cdot|W|$ by completing the square. The second and third bullets are easy to see by taking first and second partial derivatives of the expression of the first bullet.
To see $\nabla^2L_{\mathcal{D}_{p,W}} \succeq (2-8p^2) I_d$, note that $\nabla^2L_{\mathcal{D}_{p,W}} - (2-8p^2) I_d$ is block-diagonal, where the $W\times W$ block is $4p^2$ times the all-1 matrix (which is positive semidefinite) and the $\overline{W}\times\overline{W}$ block is diagonal with diagonal entries~$8p^2$ (which is positive definite). 

The minimizer $\theta^*$ is a solution of the linear system one gets by setting all derivatives of the second bullet to~0.
Because the Hessian is positive definite (we assumed $p<1/2$, hence $2-8p^2>0$), this solution is the unique minimizer.
\end{proof}

The following theorem relates the entries of an approximate minimizer~$\theta$ for Lasso with respect to distribution $\mathcal{D}_{p,W}$ to the elements of the hidden set $W$.

\begin{theorem}\label{thm:weight_distinguisher}
Let $\epsilon\in (2/d, 1/100)$, $w$ be either $\lfloor 1/\epsilon\rfloor$ or $\lfloor 1/\epsilon\rfloor-1$, $p=1/(2\lfloor 1/\epsilon\rfloor)$, and $W \subset \mathbb{Z}_d$ be a set of size $w$. For every $\theta \in B^d_1$ satisfying $L_{\mathcal{D}_{p,W}}(\theta)- L_{\mathcal{D}_{p,W}}(\theta^*)\leq \epsilon/8000$, we have   
\[
\sum\limits_{j\in W} (\theta_j-v)^2 \leq \epsilon/6400\mbox{ and }\sum\limits_{j'\in \overline{W}} \theta_{j'}^2 \leq \epsilon/6400.
\]
\end{theorem}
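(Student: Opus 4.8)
The plan is to read off from Theorem~\ref{thm:properties_distinguisher} that $L_{\mathcal{D}_{p,W}}$ is a quadratic function whose Hessian $H:=\nabla^2 L_{\mathcal{D}_{p,W}}$ is constant and satisfies $H\succeq(2-8p^2)I_d$, and whose unique global minimizer is $\theta^*$. Since $H$ is constant and $\nabla L_{\mathcal{D}_{p,W}}(\theta^*)=0$, the second-order Taylor expansion of $L_{\mathcal{D}_{p,W}}$ about $\theta^*$ is exact, i.e.\ $L_{\mathcal{D}_{p,W}}(\theta)-L_{\mathcal{D}_{p,W}}(\theta^*)=\tfrac12(\theta-\theta^*)^{T}H(\theta-\theta^*)$ for every $\theta\in\mathbb{R}^d$. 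This is the only structural fact I need.

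First I would combine this identity with the Hessian lower bound to obtain the strong-convexity estimate $L_{\mathcal{D}_{p,W}}(\theta)-L_{\mathcal{D}_{p,W}}(\theta^*)\geq(1-4p^2)\,\norm{\theta-\theta^*}_2^2$. Substituting the hypothesis $L_{\mathcal{D}_{p,W}}(\theta)-L_{\mathcal{D}_{p,W}}(\theta^*)\leq\epsilon/8000$ then yields $\norm{\theta-\theta^*}_2^2\leq\frac{\epsilon}{8000(1-4p^2)}$.

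Next I would control the factor $1-4p^2$. Since $\epsilon<1/100$ we have $\lfloor 1/\epsilon\rfloor\geq 100$, hence $p=1/(2\lfloor 1/\epsilon\rfloor)\leq 1/200$ and $4p^2\leq 1/10000$, so $1-4p^2\geq 9999/10000$. This gives $\norm{\theta-\theta^*}_2^2\leq\frac{10000}{9999}\cdot\frac{\epsilon}{8000}<\frac{\epsilon}{6400}$. Finally, since $\theta^*=v\cdot e_W$ is supported exactly on $W$ with every nonzero entry equal to $v$, we have $\norm{\theta-\theta^*}_2^2=\sum_{j\in W}(\theta_j-v)^2+\sum_{j'\in\overline{W}}\theta_{j'}^2$, and both of these nonnegative sums are therefore at most $\norm{\theta-\theta^*}_2^2<\epsilon/6400$, which is the claim.

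I do not expect a genuine obstacle here: the proof is essentially one invocation of strong convexity followed by a crude bound on $p$. The only point requiring mild care is the bookkeeping of constants — the gap $\epsilon/8000$ in the hypothesis versus $\epsilon/6400$ in the conclusion leaves just enough slack, which is exactly why the estimate $p\leq 1/200$ (coming from $\epsilon<1/100$) is used rather than something cruder. It is also worth noting that the argument never uses $\theta\in B_1^d$; the strong-convexity inequality, and hence the conclusion, holds for every $\theta\in\mathbb{R}^d$ satisfying the loss bound.
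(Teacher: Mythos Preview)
Your proposal is correct and follows essentially the same argument as the paper: exact second-order Taylor expansion at $\theta^*$ (since the gradient vanishes there and the Hessian is constant), the Hessian lower bound $H\succeq(2-8p^2)I_d$ from Theorem~\ref{thm:properties_distinguisher}, then the bound $p\leq 1/200$ to absorb the $1-4p^2$ factor, and finally the decomposition of $\|\theta-\theta^*\|_2^2$ using $\theta^*=v\cdot e_W$. Your observation that the hypothesis $\theta\in B_1^d$ is never used is also correct.
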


\begin{proof}
Let $\theta^*=v\cdot e_W$ be the minimizer of $L_{\mathcal{D}_{p,W}}$ from Theorem~\ref{thm:properties_distinguisher}.
Because $\nabla_j L_{\mathcal{D}_{p,W}}(\theta^*)=0$ for every $j\in \intg_d$ and $\nabla^2 L_{\mathcal{D}_{p,W}}$ is a constant matrix, independent of $\theta$, we have that
\begin{align*}
\epsilon/8000&\geq L_{\mathcal{D}_{p,W}}(\theta)-L_{\mathcal{D}_{p,W}}(\theta^*)\\
&= \inProd{\nabla L_{\mathcal{D}_{p,W}}(\theta^*)}{\theta-\theta^*}+ \inProd{\nabla^2 L_{\mathcal{D}_{p,W}}\cdot(\theta-\theta^*)}{\theta-\theta^*}/2 \\
&\geq 0 + (2-8p^2)\|\theta-\theta^*\|^2_2/2,
\end{align*}
which implies 
$\|\theta-\theta^*\|^2_2\leq \epsilon/(8000\cdot(1-4p^2))\leq \epsilon/6400$ from the fact that $p \leq 1/200$. Because $\theta^*=v\cdot e_W$, we have $\|\theta-\theta^*\|^2_2=\sum\limits_{j\in W} (\theta_j-v)^2+\sum\limits_{j'\in  \overline{W}} \theta_{j'}^2\leq \epsilon/6400$, and therefore
\[
\sum\limits_{j\in W} (\theta_j-v)^2 \leq \epsilon/6400\,\mbox{ and }\sum\limits_{j'\in \overline{W}} \theta_{j'}^2 \leq \epsilon/6400.
\]
\end{proof}

Note that if $\epsilon\in (2/d, 1/100)$, $1\leq w\leq\lfloor 1/\epsilon\rfloor$, $p=1/(2\lfloor 1/\epsilon\rfloor)$, then
\[
\|\theta^*\|_1=vw\leq\frac{\frac{1}{\lfloor 1/\epsilon\rfloor}\cdot \lfloor \frac{1}{\epsilon}\rfloor}{1+\frac{1}{\lfloor 1/\epsilon\rfloor^2}(1-1)}\leq 1,
\]
implying that $\theta^*\in B^d_1$, so the global minimizer actually satisfies Lasso's norm constraint. 
Now we are ready to show that algorithms for Lasso also find a good approximation to the hidden set~$W$.

\begin{theorem}\label{thm:average_HSF_to_Lasso}
Let $\epsilon\in (2/d, 1/100)$, $w$ be eitehr $\lfloor 1/\epsilon\rfloor$ or $\lfloor 1/\epsilon\rfloor-1$, $p=1/(2\lfloor 1/\epsilon\rfloor)$, and $W \subset \mathbb{Z}_d$ be a set of size $w$. Let $\theta$ be an $\epsilon/8000$-minimizer for Lasso with respect to $\mathcal{D}_{p,W}$. Then the set $\tilde{W}$ that contains the indices of the entries of $\theta$ whose absolute value is $\geq\epsilon/3$ satisfies $|W\Delta \tilde{W}|\leq w/200$. 
\end{theorem}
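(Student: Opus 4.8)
The plan is to convert the $\ell_2$-type guarantee of Theorem~\ref{thm:weight_distinguisher} into a combinatorial bound on the symmetric difference of the supports. The key observation is that $\theta^{*}=v\cdot e_W\in B^d_1$ is the global minimizer of $L_{\mathcal{D}_{p,W}}$ over all of $\mathbb{R}^d$ (Theorem~\ref{thm:properties_distinguisher}), so an $\epsilon/8000$-minimizer $\theta\in B^d_1$ for Lasso with respect to $\mathcal{D}_{p,W}$ in particular satisfies $L_{\mathcal{D}_{p,W}}(\theta)-L_{\mathcal{D}_{p,W}}(\theta^{*})\le \epsilon/8000$. Hence Theorem~\ref{thm:weight_distinguisher} applies and yields $\sum_{j\in W}(\theta_j-v)^2\le\epsilon/6400$ and $\sum_{j'\in\overline{W}}\theta_{j'}^2\le\epsilon/6400$; the thresholded set $\tilde W=\{j\in\mathbb{Z}_d:|\theta_j|\ge\epsilon/3\}$ should therefore differ from $W$ only on the few coordinates where $\theta$ is far from~$\theta^{*}$.

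First I would pin down the size of $v$. Writing $n=\lfloor 1/\epsilon\rfloor$, so that $p=1/(2n)$ and $w\in\{n-1,n\}$, the identity $2p=1/n$ together with $4p^2(w-1)=(w-1)/n^2<1/n$ gives $v=(1/n)/(1+4p^2(w-1))\in(1/(n+1),\,1/n]$; and since $n+1\le 1/\epsilon+1\le 1.01/\epsilon$ for $\epsilon\le 1/100$, this forces $v>0.99\epsilon$, hence $v-\epsilon/3>\epsilon/3$. The precise value of $v$ enters nowhere else, and this margin is comfortable.

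Next I would charge each misclassified index to one of the two sums above. An index $j\in W\setminus\tilde W$ has $|\theta_j|<\epsilon/3$, so $|\theta_j-v|\ge v-\epsilon/3>\epsilon/3$ and it contributes more than $\epsilon^2/9$ to $\sum_{j\in W}(\theta_j-v)^2$; an index $j'\in\tilde W\setminus W$ lies in $\overline{W}$ with $|\theta_{j'}|\ge\epsilon/3$, so it contributes at least $\epsilon^2/9$ to $\sum_{j'\in\overline{W}}\theta_{j'}^2$. Summing the two bounds from Theorem~\ref{thm:weight_distinguisher} then gives
\[
|W\Delta\tilde W|\cdot\frac{\epsilon^2}{9}\ \le\ \sum_{j\in W}(\theta_j-v)^2+\sum_{j'\in\overline{W}}\theta_{j'}^2\ \le\ \frac{\epsilon}{3200},
\]
so $|W\Delta\tilde W|\le 9/(3200\epsilon)$.

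Finally I would compare $9/(3200\epsilon)$ with $w/200$: since $w\ge\lfloor 1/\epsilon\rfloor-1>1/\epsilon-2\ge 0.98/\epsilon$ for $\epsilon\le 1/100$, we have $w/200>1/(204.08\epsilon)$, which exceeds $9/(3200\epsilon)$; hence $|W\Delta\tilde W|\le w/200$, as claimed. I do not expect a genuine obstacle here — the work is entirely bookkeeping of the (very slack) constants — and the only point worth double-checking is the reduction in the first paragraph, namely that $\theta^{*}\in B^d_1$ so that an ``$\epsilon/8000$-minimizer for Lasso'' really does control $L_{\mathcal{D}_{p,W}}(\theta)-L_{\mathcal{D}_{p,W}}(\theta^{*})$; this is precisely the remark ($\|\theta^{*}\|_1=vw\le 1$) preceding the statement.
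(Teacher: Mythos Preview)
Your proposal is correct and follows essentially the same approach as the paper: both invoke Theorem~\ref{thm:weight_distinguisher} and then turn the two $\ell_2$ bounds into a count of misclassified indices via a Markov-type argument. The only difference is bookkeeping---the paper fixes the count at $w/400$ per side and checks the resulting threshold is compatible with $\epsilon/3$, whereas you fix the threshold at $\epsilon/3$ and bound the count directly---but the substance is identical.
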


\begin{proof}
Because $\theta$ is an $\epsilon/8000$-minimizer for Lasso, Theorem~\ref{thm:weight_distinguisher} implies $\sum\limits_{j\in W} (\theta_j-v)^2 \leq \epsilon/6400$ and $\sum\limits_{j'\in \overline{W}} \theta_{j'}^2 \leq \epsilon/6400$. Hence 
\begin{itemize}
    \item at most $w/400$ many $j'\in \overline{W}$ have $|\theta_{j'}| \geq \sqrt{400\epsilon/(6400w)}\geq\sqrt{\epsilon/ \lfloor 1/\epsilon\rfloor }/4$,
    \item at least $w-w/400$ many $j$ $\in W$ have $\theta_j \geq v-\sqrt{400\epsilon/(6400w)}\geq v-\sqrt{\epsilon/(\lfloor 1/\epsilon\rfloor-1) }/4$.
\end{itemize}
Note that $v-\sqrt{\epsilon/ (\lfloor 1/\epsilon\rfloor-1) }/4 \geq \epsilon/3\geq \sqrt{\epsilon/ \lfloor 1/\epsilon\rfloor }/4$ for both the cases that $w=\lfloor 1/\epsilon\rfloor$ and $w=\lfloor 1/\epsilon\rfloor-1$. Hence the set $\tilde{W}$ that contains the indices of the entries whose absolute value is $\geq\epsilon/3$, omits at most
$w/400$ of the $j\in W$ and includes at most $w/400$ of the $j'$ $\in \overline{W}$.
Therefore $|W\Delta \tilde{W}|\leq w/400+w/400=w/200$.
\end{proof}
This implies that algorithms that find an $\epsilon/8000$-minimizer for Lasso with respect to $\mathcal{D}_{p,W}$ can also find a set $\tilde{W}\subset \mathbb{Z}_d$ such that $|W\Delta \tilde{W}|\leq w/200$. 

\subsection{Worst-case quantum query lower bound for the set-finding problem}\label{sec:q_lower_decomposed}
Here we will define the worst-case set-finding problem and then provide a quantum query lower bound for it. Before we step into the query lower bound for the worst-case set-finding problem, we have to introduce the adversary method and the lower bounds for two problems first. 

\begin{theorem}[\cite{Amb00}, modified Theorem~6]\label{thm:adv_bound}
Let $f(x_0,\ldots, x_{d-1} )$ be a function of $d$ inputs with values from some finite set, $\epsilon \in (0,1/2)$, and $\mathcal{X}$, $\mathcal{Y}$ be two sets of valid inputs for $f$. Let $\mathcal{R} \subseteq  \mathcal{X} \times \mathcal{Y}$ be a relation such that
\begin{itemize}
    \item For every $(x,y)\in \mathcal{R}$, $f(x)\neq f(y)$.
    \item For every $x \in \mathcal{X}$, there exist at least $m$ different $y \in \mathcal{Y}$ such that $(x, y) \in \mathcal{R}$.
    \item For every $y \in\mathcal{Y}$, there exist at least $m'$ different $x \in \mathcal{X}$ such that $(x, y) \in \mathcal{R}$.
\end{itemize}
Let $\ell_{x,j}$ be the number of $y \in \cY$ such that $(x, y) \in \cR$, and $x_j \neq y_j$ and $\ell_{y,j}$ be the number of $x \in \cX$ such that $(x, y) \in \cR$ and $x_j \neq y_j$. Let $\ell_{max}$ be the maximum of $\ell_{x,j}\cdot \ell_{y,j}$ over all $(x, y) \in \cR$ and $j \in \intg_d$ such that $x_j \neq y_j$. Then, every quantum algorithm that computes $f$ with success probability $1-\epsilon$ uses at least $(1/2-\sqrt{\epsilon(1-\epsilon)})\cdot\sqrt{ mm'/\ell_{max}}$ queries.
\end{theorem}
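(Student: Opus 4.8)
The plan is to prove this by Ambainis' hybrid (``progress measure'') argument, carrying along the asymmetric parameters $m$, $m'$ and $\ell_{max}$. Fix a quantum algorithm that makes $T$ queries and computes $f$ with success probability $\geq 1-\epsilon$. After the standard reduction to real amplitudes (splitting real and imaginary parts into separate registers, which does not change the query count), let $\ket{\psi_z^t}$ be the algorithm's state on input $z$ just after $t$ queries, and for each query index $i\in\intg_d$ let $P_i$ be the projector onto the part of the workspace whose query register equals $i$, so that $p_{z,i}^t:=\norm{P_i\ket{\psi_z^t}}^2$ is the probability of querying position $i$ at step $t+1$ and $\sum_i p_{z,i}^t=1$. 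I would track the progress measure
\[
W^t \;=\; \sum_{(x,y)\in\cR}\langle\psi_x^t|\psi_y^t\rangle .
\]

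For the boundary conditions: before any query all states equal the fixed initial state, so $W^0=|\cR|$. At the end, for $(x,y)\in\cR$ we have $f(x)\neq f(y)$, and the final measurement outputs $f(x)$ with probability $\geq 1-\epsilon$ on $\ket{\psi_x^T}$ and probability $\leq\epsilon$ on $\ket{\psi_y^T}$; a standard two-outcome-measurement estimate then gives $|\langle\psi_x^T|\psi_y^T\rangle|\leq 2\sqrt{\epsilon(1-\epsilon)}$, so $W^T\leq 2\sqrt{\epsilon(1-\epsilon)}\,|\cR|$ and hence $W^0-W^T\geq(1-2\sqrt{\epsilon(1-\epsilon)})\,|\cR|$. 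I would also record the elementary counting bound: $|\cR|=\sum_{x\in\cX}\deg_\cR(x)\geq m|\cX|$ and $|\cR|=\sum_{y\in\cY}\deg_\cR(y)\geq m'|\cY|$, so $|\cR|\geq\sqrt{mm'\,|\cX|\,|\cY|}$.

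The crux is bounding the decrease of $W$ in one query. Between queries the (input-independent) unitaries preserve all inner products, and the oracles $O_x,O_y$ agree on the span of $\{P_i: x_i=y_i\}$; working out the remaining terms gives, for each pair, $\big|\langle\psi_x^{t+1}|\psi_y^{t+1}\rangle-\langle\psi_x^t|\psi_y^t\rangle\big|\leq 2\sum_{i:\,x_i\neq y_i}\sqrt{p_{x,i}^t\,p_{y,i}^t}$. Summing over $\cR$ and exchanging the order of summation, I would bound each cross term by AM-GM with the weight $c\sqrt{\ell_{y,i}/\ell_{x,i}}$ for a global constant $c>0$, using $\ell_{x,i}\ell_{y,i}\leq\ell_{max}$ to replace $\ell_{y,i}/\ell_{x,i}$ by $\ell_{max}/\ell_{x,i}^2$, obtaining $\sqrt{p_{x,i}^tp_{y,i}^t}\leq\tfrac12\sqrt{\ell_{max}}\big(c\,p_{x,i}^t/\ell_{x,i}+c^{-1}p_{y,i}^t/\ell_{y,i}\big)$. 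Now for fixed $i$ and fixed $x$ there are exactly $\ell_{x,i}$ partners $y$ with $(x,y)\in\cR$ and $x_i\neq y_i$, so the $x$-contributions collapse to $\sum_i\sum_x p_{x,i}^t=|\cX|$ and, symmetrically, the $y$-contributions to $|\cY|$; optimizing $c=\sqrt{|\cY|/|\cX|}$ then yields $|W^{t+1}-W^t|\leq 2\sqrt{\ell_{max}\,|\cX|\,|\cY|}$.

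Combining, $2T\sqrt{\ell_{max}|\cX||\cY|}\geq W^0-W^T\geq(1-2\sqrt{\epsilon(1-\epsilon)})\sqrt{mm'|\cX||\cY|}$, which rearranges to $T\geq(1/2-\sqrt{\epsilon(1-\epsilon)})\sqrt{mm'/\ell_{max}}$, as claimed. I expect the per-query step to be the main obstacle: both verifying $\big|\langle\psi_x^{t+1}|\psi_y^{t+1}\rangle-\langle\psi_x^t|\psi_y^t\rangle\big|\leq 2\sum_{i:x_i\neq y_i}\sqrt{p_{x,i}^tp_{y,i}^t}$ carefully from the oracle structure, and organizing the resulting double sum with exactly the right rebalancing constant so that the $\cX/\cY$ asymmetry cancels against $|\cR|\geq\sqrt{mm'|\cX||\cY|}$. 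One should also double-check that the ``modified'' form being invoked here — general-alphabet inputs, an explicit success probability $1-\epsilon$, and possibly $m\neq m'$ — is precisely what this argument delivers.
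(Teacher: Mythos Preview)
The paper does not give its own proof of this theorem; it is quoted as a (slightly restated) black-box result from Ambainis~\cite{Amb00}. Your write-up is the standard hybrid/progress-measure proof from that reference, carried out with the asymmetric parameters $m,m',\ell_{max}$ and explicit success probability, and the arithmetic checks out: the weighted AM--GM with weight $c\sqrt{\ell_{y,i}/\ell_{x,i}}$ together with $\ell_{x,i}\ell_{y,i}\le\ell_{max}$ collapses the double sum to $\sqrt{\ell_{max}|\cX||\cY|}$, and combining with $|\cR|\ge\sqrt{mm'|\cX||\cY|}$ gives exactly the constant $(1/2-\sqrt{\epsilon(1-\epsilon)})$. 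So your proposal is correct and matches the approach the paper is invoking by citation.
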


Using this adversary bound, we can give a query lower bound for the \emph{exact set-finding problem}: 
given input $x=x_0\ldots x_{d-1}\in\{0,1\}^d$ with at most $w$ 1s, find the set $W$ of all indices $j$ with $x_j=1$ (equivalently, learn $x$). To see the query lower bound for this problem, we consider the identity function where both domain and codomain are $\cZ=\{z\in \{0,1\}^d:|z|=w\}$, and give a lower bound for computing this. If we can compute the identity function, then we can simply check the output string $x_0,x_1,\ldots,x_{d-1}$ and collect all indices $j$ with $x_j=1$.

\begin{theorem}\label{thm:LB_multi_search_exact}
Let $w$ be an integer satisfying $0<w\leq d/2$, $W\subset \intg_d$ with size $w$, and $x\in\{0,1\}^d$ such that $x_j=1$ if $j\in W$ and $x_{j'}=0$ if $j'\in\overline{W}$. Suppose we have query access to $x$. Then every quantum bounded-error algorithm to find~$W$ makes at least $\frac{1}{8}\sqrt{dw}$ queries.
\end{theorem}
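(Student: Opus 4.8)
The plan is to derive the bound from the quantum adversary method of Theorem~\ref{thm:adv_bound}. First observe that learning $W$ is the same as learning $x$: from the output set $W$ one reconstructs $x$ with no further queries (put a $1$ in each coordinate in $W$ and a $0$ elsewhere). Hence it suffices to lower-bound the quantum query complexity of computing the \emph{identity function} $f$ on the domain $\cZ=\{z\in\{0,1\}^d:|z|=w\}$.

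I would apply Theorem~\ref{thm:adv_bound} with $\cX=\cY=\cZ$ and with $\cR$ the set of pairs $(x,y)\in\cZ\times\cZ$ at Hamming distance exactly $2$; equivalently, $y$ is obtained from $x$ by moving one of its $w$ ones onto one of its $d-w$ zeros. Every such pair satisfies $f(x)=x\neq y=f(y)$, so the first condition holds. A direct count shows that for every $x\in\cX$ there are exactly $m=w(d-w)$ vectors $y$ with $(x,y)\in\cR$ (choose the one to move, choose the zero to move it to), and by symmetry $m'=w(d-w)$.

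The key step is to compute $\ell_{max}$. Fix $(x,y)\in\cR$ and a coordinate $j$ with $x_j\neq y_j$; there are precisely two such coordinates, the one at which a one was deleted ($x_j=1$, $y_j=0$) and the one at which a one was inserted ($x_j=0$, $y_j=1$). At the deletion coordinate: any $y'$ with $(x,y')\in\cR$ and $x_j\neq y'_j$ must delete the one at coordinate $j$ of $x$ and re-insert it at one of the $d-w$ zero coordinates of $x$, so $\ell_{x,j}=d-w$; dually, any $x'$ with $(x',y)\in\cR$ and $x'_j\neq y_j$ must have $x'_j=1$ and reach $y$ by moving that one onto one of the $w$ one-coordinates of $y$, so $\ell_{y,j}=w$. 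At the insertion coordinate the two counts are swapped. In both cases $\ell_{x,j}\cdot\ell_{y,j}=w(d-w)$, hence $\ell_{max}=w(d-w)$.

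Since $mm'/\ell_{max}=w^2(d-w)^2/\bigl(w(d-w)\bigr)=w(d-w)$, applying Theorem~\ref{thm:adv_bound} with error parameter $\epsilon=1/10$ (so the algorithm is bounded-error) gives a lower bound of $\bigl(\frac12-\sqrt{\frac{1}{10}\cdot\frac{9}{10}}\bigr)\sqrt{w(d-w)}=\frac15\sqrt{w(d-w)}$ queries. Finally, the hypothesis $w\le d/2$ gives $d-w\ge d/2$, so $\frac15\sqrt{w(d-w)}\ge\frac{1}{5\sqrt2}\sqrt{dw}\ge\frac18\sqrt{dw}$, as claimed. Essentially everything here is bookkeeping; the only point requiring genuine care is the verification that $\ell_{x,j}\cdot\ell_{y,j}$ takes the same value $w(d-w)$ at both of the two differing coordinates, since a larger $\ell_{max}$ would weaken the bound — it balances precisely because "moving a one" always contributes a factor $w$ on one side of the relation and a factor $d-w$ on the other.
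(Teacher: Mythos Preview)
Your proof is correct and follows essentially the same approach as the paper: both reduce to the identity function on $\cZ=\{z\in\{0,1\}^d:|z|=w\}$, apply Theorem~\ref{thm:adv_bound} with $\cX=\cY=\cZ$ and $\cR$ the set of pairs at Hamming distance~$2$, obtain $m=m'=\ell_{max}=w(d-w)$, and conclude $\frac{1}{5}\sqrt{w(d-w)}\ge\frac{1}{8}\sqrt{dw}$. Your treatment of $\ell_{max}$ is in fact more detailed than the paper's, which simply asserts the value without the case analysis on deletion versus insertion coordinates.
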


\begin{proof}
Note that if we can compute $W$, then we can compute the identity function $f:\cZ\rightarrow \cZ$.
Let $\cX=\cY=\cZ$, and consider the relation $\cR\subset \cX\times\cY$ such that for every $(x,y)\in R$,  $d_H(x,y)=2$ (which implies $f(x) \neq f(y)$). Then we know
\begin{itemize}
    \item For every $x \in \cX$, there exist at least $m=w\cdot (d-w)$ different $y \in \cY$ such that $(x, y) \in R$.
    \item For every $y \in \cY$, there exist at least $m'=w\cdot (d-w)$ different $x \in \cX$ such that $(x, y) \in R$.
    \item  $\ell_{max}=\max\limits_{\substack{(x, y) \in \cR, j \in\intg_d\\ s.t. x_j \neq y_j}}\ell_{x,j}\cdot \ell_{y,j}=w\cdot (d-w)$.
\end{itemize}
By Theorem~\ref{thm:adv_bound}, every quantum algorithm that computes $f$ with probability $\geq 9/10$ uses at least
$$
(1/2-\sqrt{1/10\cdot 9/10})\cdot \sqrt{mm'/\ell_{max}}\geq\frac{1}{5} \sqrt{w\cdot(d-w)}\geq \frac{1}{8}\sqrt{dw}
$$ 
queries.
\end{proof}

We now prove a lower bound for the \emph{approximate} set-finding problem $\mbox{ASF}_{d,w}$, which is to find a set $\tilde{W}\subset\intg_d$ such that $|W\Delta\tilde{W}|\leq w/200$.
The intuition of the proof is that if we could find such a $\tilde{W}$ then we can ``correct'' it to $W$ itself using a small number of Grover searches, so finding a good approximation~$\tilde{W}$ is not much easier than finding~$W$ itself.

\begin{theorem}\label{thm:lower_approximate_multi_search}
Let $w$ be an integer satisfying $0<w\leq d/2$, $W\subset \intg_d$ with size $w$, and $x\in\{0,1\}^d$ such that $x_j=1$ if $j\in W$ and $x_{j'}=0$ if $j'\in\overline{W}$. Suppose we have query access to $x$. Then every bounded-error quantum algorithm that outputs $\tilde{W}\subset \intg_d$ satisfying $|W\Delta \tilde{W}| \leq w/200$ makes $\Omega(\sqrt{dw})$ queries.
\end{theorem}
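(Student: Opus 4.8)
The plan is to reduce the exact set-finding problem of Theorem~\ref{thm:LB_multi_search_exact} to the approximate one, so that the $\Omega(\sqrt{dw})$ lower bound for the former transfers. Suppose $\mathcal{A}$ is a bounded-error quantum algorithm that, given query access to $x\in\{0,1\}^d$ with $|x|=w$, outputs a set $\tilde{W}$ with $|W\Delta\tilde{W}|\leq w/200$ using $q$ queries, where $W=\{j:x_j=1\}$. I will build from $\mathcal{A}$ a bounded-error algorithm that recovers $W$ exactly using $\tilde{\mathcal{O}}(q+\sqrt{dw})$ queries; combined with Theorem~\ref{thm:LB_multi_search_exact} this forces $q=\Omega(\sqrt{dw})$.

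First I would run $\mathcal{A}$ once to obtain a candidate $\tilde{W}$. With probability $\geq 9/10$ it satisfies $|W\Delta\tilde{W}|\leq w/200$, so $\tilde{W}$ differs from the true $W$ in at most $w/200$ places: there are at most $w/200$ ``false positives'' (indices $j\in\tilde{W}$ with $x_j=0$) and at most $w/200$ ``false negatives'' (indices $j\notin\tilde{W}$ with $x_j=1$). The false positives are cheap to remove: query $x_j$ for each $j\in\tilde{W}$ and discard those with $x_j=0$; since $|\tilde{W}|\leq w+w/200=\mathcal{O}(w)$ this costs $\mathcal{O}(w)$ queries. For the false negatives, I need to search for the $1$-entries of $x$ that lie outside $\tilde{W}$. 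Restrict attention to the index set $\intg_d\setminus\tilde{W}$, which has size $\leq d$ and contains at most $w/200$ ones; define $f(j)=x_j$ on this restricted domain and apply Corollary~\ref{thm:findallsolutions} with upper bound $u=\lceil w/200\rceil$ on the number of marked elements. This finds all the missing ones with probability~$1$ using $\frac{\pi}{2}\sqrt{d\cdot\lceil w/200\rceil}+\lceil w/200\rceil=\mathcal{O}(\sqrt{dw})$ queries. Taking the union of the cleaned-up $\tilde{W}$ with these recovered indices yields $W$ exactly, whenever the initial call to $\mathcal{A}$ succeeded.

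The total query count is $q+\mathcal{O}(w)+\mathcal{O}(\sqrt{dw})$, and since $w\leq d/2$ we have $w=\mathcal{O}(\sqrt{dw})$, so this is $q+\mathcal{O}(\sqrt{dw})$. The overall procedure succeeds with probability $\geq 9/10-o(1)\geq 2/3$ (the only error source is the single call to $\mathcal{A}$, as the Grover/min-finding subroutines of Corollary~\ref{thm:findallsolutions} are exact), hence it is a bounded-error algorithm for the exact set-finding problem. By Theorem~\ref{thm:LB_multi_search_exact} any such algorithm needs $\geq\frac{1}{8}\sqrt{dw}$ queries, so $q+\mathcal{O}(\sqrt{dw})\geq\frac{1}{8}\sqrt{dw}$, and since the additive overhead is itself $\mathcal{O}(\sqrt{dw})$ with a controllable constant, this gives $q=\Omega(\sqrt{dw})$.

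The one point requiring a little care — the main ``obstacle,'' though it is minor — is bookkeeping the failure probabilities and constants: the approximation guarantee $|W\Delta\tilde{W}|\leq w/200$ only holds on the good event of probability $9/10$, and the cleanup steps must be shown not to degrade this (they don't, since they are either deterministic classical postprocessing on query answers or the zero-error routine of Corollary~\ref{thm:findallsolutions}). One should also confirm that $\lceil w/200\rceil$ is a valid upper bound on the number of ones outside $\tilde{W}$ precisely on that good event, which is exactly the definition of $|W\setminus\tilde{W}|\leq|W\Delta\tilde{W}|\leq w/200$. With these observations the reduction is complete.
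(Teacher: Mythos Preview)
Your reduction is essentially the paper's: run the approximate solver, then use Corollary~\ref{thm:findallsolutions} to repair the at most $w/200$ errors and invoke Theorem~\ref{thm:LB_multi_search_exact}. The one substantive difference is how you handle false positives, and here the constants break. Querying every $j\in\tilde{W}$ costs up to $\tfrac{201}{200}w$ queries; when $w=d/2$ this is about $0.71\sqrt{dw}$, already larger than the $\tfrac{1}{8}\sqrt{dw}$ lower bound from Theorem~\ref{thm:LB_multi_search_exact}, so the inequality $q + \mathcal{O}(\sqrt{dw}) \geq \tfrac{1}{8}\sqrt{dw}$ gives nothing. Your assertion that the overhead has a ``controllable constant'' is exactly the step that fails: the constant hidden in $w=\mathcal{O}(\sqrt{dw})$ is at best $1/\sqrt{2}$ under $w\leq d/2$, which is not small enough.

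The paper sidesteps this by searching for the whole symmetric difference in one shot rather than splitting into false positives and false negatives: define $f(j)=x_j\oplus[j\in\tilde{W}]$ (implementable with one query to $x$, since $\tilde{W}$ is known classically) and apply Corollary~\ref{thm:findallsolutions} with $u=w/200$ over all of $\intg_d$. This costs $\tfrac{\pi}{2}\sqrt{dw/200}+w/200\approx 0.116\sqrt{dw}$, just under $\tfrac{1}{8}\sqrt{dw}$, and recovers $W=\tilde{W}\,\Delta\,F$ exactly. Replacing your brute-force step by this combined search (or equivalently, Grover-searching for the false positives inside $\tilde{W}$ instead of scanning it) fixes your argument.
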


\begin{proof}
Suppose there exists a $T$-query bounded-error quantum algorithm to find a set $\tilde{W}$ satisfying $|W\Delta \tilde{W}| \leq w/200$. Define a function $f$ which marks the elements of $F=W\Delta \tilde{W}$. Since we have a classical description of $\tilde{W}$, we can implement a query to $f$ using one query to $x$.
Now use Corollary~\ref{thm:findallsolutions} (with $u=w/200$) to find all elements of $F$ with probability~1, using $\frac{\pi}{2}\sqrt{dw/200}+w/200$ queries. This gives a bounded-error quantum algorithm that finds $W$ itself using $T'=T + \frac{\pi}{2}\sqrt{dw/200}+w/200$ queries. By Theorem~\ref{thm:LB_multi_search_exact} we have $T' \geq \frac{1}{8}\sqrt{dw}$, implying $T= \Omega(\sqrt{dw})$.
\end{proof}

Next we consider the \emph{Hamming-weight distinguisher problem} $\mbox{HD}_{\ell,\ell'}$: given a $z\in\{0,1\}^N$ of Hamming weight $\ell$ or $\ell'$, distinguish these two cases. 
The adversary bound gives the following bound (a special case of a result of Nayak and Wu~\cite{NW99} based on the polynomial method~\cite{bbcmw:polynomialsj}).

\begin{theorem}\label{thm:nayak_wu}
Let $N\in 2\intg_+$, $z\in \{0,1\}^N$, and $p\in (0,0.5)$ be multiple of $1/N$. Suppose we have query access to $z$.
Then every bounded-error quantum algorithm that computes $\mbox{HD}_{\frac{N}{2},N(\frac{1}{2}+p)}$ makes $\Omega(1/p)$ queries.
\end{theorem}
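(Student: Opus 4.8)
The plan is to apply the quantum adversary bound of Theorem~\ref{thm:adv_bound}, choosing a relation whose pairs differ in exactly $Np$ coordinates, all of the same type (zeros of the low-weight string that become ones in the high-weight string).

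First I would take $f(z)=0$ for $|z|=N/2$ and $f(z)=1$ for $|z|=N(\tfrac12+p)$, with $\mathcal{X}=\{x\in\{0,1\}^N:|x|=N/2\}$ and $\mathcal{Y}=\{y\in\{0,1\}^N:|y|=N(\tfrac12+p)\}$. Since $p$ is a positive multiple of $1/N$ and $p<1/2$, the quantity $Np$ is an integer with $1\le Np\le N/2-1$, so all the binomial coefficients below are well-defined and the resulting bound is non-degenerate. I would then define $\mathcal{R}\subseteq\mathcal{X}\times\mathcal{Y}$ by declaring $(x,y)\in\mathcal{R}$ iff $x_j\le y_j$ for all $j$, i.e.\ $y$ arises from $x$ by flipping exactly $Np$ of the $N/2$ zero-coordinates of $x$ to ones; then $f(x)\ne f(y)$ automatically on $\mathcal{R}$.

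Next I would read off the five parameters of Theorem~\ref{thm:adv_bound} by direct counting. Every $x\in\mathcal{X}$ has exactly $m=\binom{N/2}{Np}$ partners $y$ (choose which zeros to flip), and every $y\in\mathcal{Y}$ has exactly $m'=\binom{N(1/2+p)}{Np}$ partners $x$ (choose which of its $N(1/2+p)$ ones to turn back to zero). For any $(x,y)\in\mathcal{R}$ and any $j$ with $x_j\ne y_j$ (necessarily $x_j=0,\ y_j=1$, so $j$ is one of the flipped positions), one has $\ell_{x,j}=\binom{N/2-1}{Np-1}$ and $\ell_{y,j}=\binom{N(1/2+p)-1}{Np-1}$, independently of which such $j$ is chosen, so $\ell_{max}=\binom{N/2-1}{Np-1}\binom{N(1/2+p)-1}{Np-1}$. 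Using $\binom{n}{k}/\binom{n-1}{k-1}=n/k$ twice,
\[
\frac{mm'}{\ell_{max}}=\frac{N/2}{Np}\cdot\frac{N(1/2+p)}{Np}=\frac{1/2+p}{2p^2}\ge\frac{1}{4p^2}.
\]
Plugging this into Theorem~\ref{thm:adv_bound} with $\epsilon=1/10$ gives a lower bound of $(1/2-\sqrt{9/100})\sqrt{mm'/\ell_{max}}\ge \tfrac15\cdot\tfrac{1}{2p}=\Omega(1/p)$ queries, which is the claim.

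The argument is essentially a bookkeeping exercise, so I do not expect a genuine obstacle; the only two points requiring care are checking the parameter constraints so that $Np$ is a positive integer with $Np\le N/2-1$ (this is precisely why we assume $p$ is a multiple of $1/N$ and $p<1/2$), and observing that $\ell_{x,j}\ell_{y,j}$ takes the same value for every admissible triple $(x,y,j)$, which is what lets us identify $\ell_{max}$ with that common value without having to maximize anything.
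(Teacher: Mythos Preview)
Your proposal is correct and is essentially identical to the paper's own proof: same choice of $\mathcal{X}$, $\mathcal{Y}$, the relation $\mathcal{R}=\{(x,y):x\le y\}$, the same values $m=\binom{N/2}{Np}$, $m'=\binom{N/2+Np}{Np}$, $\ell_{max}=\binom{N/2-1}{Np-1}\binom{N/2+Np-1}{Np-1}$, and the same simplification via $\binom{n}{k}/\binom{n-1}{k-1}=n/k$ to reach $\Omega(1/p)$. Your added remarks about the integrality constraints on $Np$ are a minor bonus but do not change the argument.
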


\begin{proof}
Let $\cX=\{x\in\{0,1\}^N: |x|=N/2\}$, $\cY=\{y\in \{0,1\}^N:|y|=N/2+pN\}$, and consider the relation $\cR=\{(x,y):x\in\cX,y\in\cY,x\leq y\}$, where $x\leq y$ if and only if $\forall i\in \intg_N$, $x_i\leq y_i$. We know
\begin{itemize}
    \item For every $x \in \cX$, there exist at least $m=\binom{N/2}{pN}$ different $y \in \cY$ such that $(x, y) \in R$.
    \item For every $y \in \cY$, there exist at least $m'=\binom{N/2+pN}{pN}$ different $x \in \cX$ such that $(x, y) \in R$.
    \item  $\ell_{max}=\max\limits_{\substack{(x, y) \in \cR, j \in\intg_d\\ s.t. x_j \neq y_j}}\ell_{x,j}\cdot \ell_{y,j}=\binom{N/2-1}{pN-1}\cdot \binom{N/2+pN-1}{pN-1}$.
\end{itemize}
Hence by Theorem~\ref{thm:adv_bound}, every bounded-error quantum algorithm that computes $g$ uses at least 
$$
\Big(\frac{1}{2}-\sqrt{\frac{9}{10}\cdot \frac{1}{10}}\Big)\cdot \sqrt{\frac{mm'}{\ell_{max}}}=\Omega\left(\sqrt{\frac{\binom{N/2}{pN}\cdot\binom{N/2+pN}{pN}}{\binom{N/2-1}{pN-1}\cdot \binom{N/2+pN-1}{pN-1}}}\right)=\Omega\left(\sqrt{\frac{N/2\cdot(N/2+pN)}{pN\cdot pN}}\right)=\Omega\Big(\frac{1}{p}\Big)
$$ 
queries.
\end{proof}

The above theorem implies a lower bound of $\Omega(1/p)$ queries for $\mbox{HD}_{\frac{N}{2},N(\frac{1}{2}+p)}$. One can also think of the input bits as $\pm 1$ and in this case, the goal is to distinguish whether the entries add up to $0$ or to $2pN$.
For convenience, we abuse the notation $\mbox{HD}_{\frac{N}{2},N(\frac{1}{2}+p)}$ also for the problem with $\pm 1$ inputs. Now we are ready to prove a lower bound for the \emph{worst-case set-finding problem} $\mbox{WSF}_{d,w,p,N}$: given a matrix $X\in\{-1,1\}^{N\times d}$ where each column-sum is either $2pN$ or $0$, the goal is to find a set $\tilde{W}\subset \intg_d$ such that $|\tilde{W}\Delta W|\leq w/200$, where $W$ is the set of indices for those columns whose entries add up to $2pN$ and $w=|W|$. One can see that this problem is actually a composition of the approximate set-finding problem and the Hamming-weight distinguisher problem. Composing the relational problem $\mbox{ASF}_{d,w}$ with $d$ valid inputs of $\mbox{HD}_{\frac{N}{2},N(\frac{1}{2}+p)}$, exactly $w$ of which evaluate to~1, we can see that the $d$-bit string given by the values of $\mbox{HD}_{\frac{N}{2},N(\frac{1}{2}+p)}$ on these $d$ inputs, is a valid input for $\mbox{ASF}_{d,w}$. In other words, the set of valid inputs for $\mbox{WSF}_{d,w,p,N}$, or equivalently, the set of valid inputs for the composed problem $\mbox{ASF}_{d,w}\circ (\mbox{HD}_{\frac{N}{2},N(\frac{1}{2}+p)})^d$ is
\[
\{(x^{(1)},\ldots,x^{(d)})\in\mathcal{P}^d:|\mbox{HD}_{\frac{N}{2},N(\frac{1}{2}+p)}(x^{(1)})\ldots \mbox{HD}_{\frac{N}{2},N(\frac{1}{2}+p)}(x^{(d)})|=w\},
\]
where $\mathcal{P}=\{x\in \{0,1\}^N:|x|\in\{N/2,N/2+pN\}\}$. 
The next theorem by Belovs and Lee shows that the quantum query complexity of the composed problem $\mbox{ASF}_{d,w}\circ (\mbox{HD}_{\frac{N}{2},\frac{N+2pN}{2}})^d$ is at least the product of the complexities of the two composing problems:

\begin{theorem}[\cite{BL20}, Corollary~27]\label{thm:composition}
Let $f\subseteq S\times T$, with $S \subseteq \{0,1\}^d$, be a relational problem with bounded-error quantum query complexity $L$. Assume that $f$ is efficiently verifiable, that is given some $t \in T$ and oracle access to $x \in S$, there exists a bounded-error quantum algorithm that verifies whether $(x, t) \in f$ using $o(L)$ queries to $x$. Let $D \subseteq \{0,1\}^N$ and $g : D \rightarrow \{0,1\}$ be a Boolean function whose bounded-error quantum query complexity is $Q$. Then the bounded-error quantum query complexity of the relational problem $f \circ g^d$, restricted to inputs $x \in \{0, 1\}^{dN}$ such that $g^d(x) \in S$, is $\Omega(LQ)$.
\end{theorem}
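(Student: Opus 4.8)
The plan is to route through the general (negative-weight) adversary bound $\mathrm{Adv}^{\pm}$, exploiting that it is tight for the inner Boolean function $g$ and that the efficient-verifiability hypothesis makes it tight (up to constants) for the outer relation $f$ as well. A naive reduction (simulating query access to the $d$-bit string $g^d(x)$ by evaluating $g$ on each block) only yields $Q(f\circ g^d)=\Omega(\max(L,Q))$, so getting the \emph{product} genuinely needs the SDP-based composition machinery. Recall that for any relational problem $h$ on input set $P$, $\mathrm{Adv}^{\pm}(h)$ is the optimum of $\|\Gamma\|$ over Hermitian matrices $\Gamma$ indexed by $P\times P$ with $\Gamma[x,x']=0$ whenever $x$ and $x'$ admit a common valid output (in particular on the diagonal), subject to $\|\Gamma\circ\Delta_i\|\le 1$ for every input coordinate $i$, where $\Delta_i[x,x']=[x_i\neq x'_i]$. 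The ``easy'' direction of the adversary method gives $Q(h)=\Omega(\mathrm{Adv}^{\pm}(h))$ for every such $h$, including promise problems; hence it suffices to construct a feasible $\Gamma$ for $f\circ g^d$ (restricted to inputs $x$ with $g^d(x)\in S$) of spectral norm $\Omega(LQ)$.

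\textbf{Step 1 (tight adversary solutions for the two pieces).} Since $g$ is Boolean, the Reichardt / Lee--Mittal--Reichardt--\v{S}palek--Szegedy characterization provides a feasible adversary matrix $\Gamma_g$ on $D\times D$, supported on pairs with $g(x^{(i)})\neq g(x'^{(i)})$, with $\|\Gamma_g\|=\Theta(Q)$ and $\|\Gamma_g\circ\Delta_j\|\le 1$ for all $j$. For the outer relation I would invoke the characterization of relational quantum query complexity for efficiently-verifiable problems: the span-program / dual-adversary algorithm attached to an optimal feasible $\Gamma_f$ produces a candidate output, which one checks with the verifier and, upon failure, retries; boosting the success probability shows that any relation verifiable with $q$ queries is solvable with $\widetilde{O}(\mathrm{Adv}^{\pm}(f)+q)$ queries. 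Since $f$ is verifiable with $o(L)$ queries and $Q(f)=L$, this forces $\mathrm{Adv}^{\pm}(f)=\widetilde{\Omega}(L)$; fix a near-optimal feasible $\Gamma_f$ on $S\times S$, supported on ``conflicting'' pairs $z,z'$ (no common valid $f$-output), with $\|\Gamma_f\|=\widetilde{\Omega}(L)$ and $\|\Gamma_f\circ\Delta_i\|\le 1$.

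\textbf{Step 2 (composed adversary matrix and its norms).} Form $\Gamma$ for $f\circ g^d$ by the standard matrix-composition construction: for valid $x=(x^{(1)},\dots,x^{(d)})$ and $x'=(x'^{(1)},\dots,x'^{(d)})$ with $z=g^d(x)$, $z'=g^d(x')$, set
\[
\Gamma[x,x']=\Gamma_f[z,z']\cdot\prod_{i=1}^d M_i[x^{(i)},x'^{(i)}],
\]
where the $i$-th block $M_i$ equals a normalized copy of $\Gamma_g$ restricted to the value-pattern $(z_i,z'_i)$ when $z_i\neq z'_i$, and equals a rank-one block on the fibre $g^{-1}(z_i)$ — chosen compatibly with the principal eigenvector of $\Gamma_g$ — when $z_i=z'_i$, so an un-queried coordinate contributes the scalar $1$ to the eigenvector one tracks. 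Then: (i) plugging in the tensor of the principal eigenvectors of $\Gamma_f$ and of the $\Gamma_g$-blocks yields $\|\Gamma\|\ge\|\Gamma_f\|\cdot\|\Gamma_g\|=\widetilde{\Omega}(LQ)$; (ii) for a coordinate $(i,j)$, the matrix $\Gamma\circ\Delta_{(i,j)}$ acts as $\Gamma_f\circ\Delta_i$ on the outer index tensored with $\Gamma_g\circ\Delta_j$ on block $i$ and the rank-one blocks elsewhere, so $\|\Gamma\circ\Delta_{(i,j)}\|\le\|\Gamma_f\circ\Delta_i\|\cdot\|\Gamma_g\circ\Delta_j\|\le 1$. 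Hence $\mathrm{Adv}^{\pm}(f\circ g^d)=\widetilde{\Omega}(LQ)$ and therefore $Q(f\circ g^d)=\widetilde{\Omega}(LQ)$; a sharper accounting of the span-program algorithm from Step 1 (or the observation that the logs can be absorbed) upgrades this to the clean $\Omega(LQ)$.

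\textbf{Main obstacle.} The crux is that the outer problem is a \emph{relation}, not a function, so the adversary bound is not automatically tight for it; the whole argument hinges on using efficient verifiability to close that gap, i.e., on the fact that an efficiently-verifiable relation is solvable in $\widetilde{O}(\mathrm{Adv}^{\pm}(f))$ queries by ``run the adversary algorithm, then verify.'' Making that rigorous, and controlling the logarithmic overhead so that the final bound is $\Omega(LQ)$ rather than merely $\widetilde{\Omega}(LQ)$, is the technical heart. The tensor construction of Step 2 is notationally heavy but follows the well-established composition proofs for the adversary bound (H{\o}yer--Lee--\v{S}palek, Reichardt, Lee--Mittal--Reichardt--\v{S}palek--Szegedy) once the block structure for equal-$g$-value coordinates is set up correctly; a small extra check is that the support of $\Gamma$ respects the promise $g^d(x)\in S$, which is automatic because $\Gamma_f$ is defined only on $S\times S$.
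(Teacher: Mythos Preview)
The paper does not prove this theorem. It is quoted verbatim as Corollary~27 of Belovs and Lee~\cite{BL20} and used as a black box; there is no ``paper's own proof'' to compare against.

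That said, your sketch is essentially the right approach to the Belovs--Lee result itself: the argument does go through the negative-weight adversary bound, uses the known composition behaviour of $\mathrm{Adv}^{\pm}$ for the inner Boolean functions, and relies on efficient verifiability of the outer relation to make $\mathrm{Adv}^{\pm}(f)$ tight (via the ``run the dual-adversary algorithm, then verify, then retry'' scheme). One point to be careful about: the claim that efficient verifiability yields $\mathrm{Adv}^{\pm}(f)=\widetilde{\Omega}(L)$ is exactly what Belovs and Lee establish, and getting the clean $\Omega$ rather than $\widetilde{\Omega}$ requires their more careful analysis; your parenthetical ``the logs can be absorbed'' is doing real work that would need to be spelled out. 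Also, in Step~2 your description of the rank-one blocks on equal-$g$-value coordinates is a bit loose---the actual construction in the composition literature uses specific unit vectors (often the all-ones or the principal eigenvector restricted to a fibre) and one must check that these are consistent across blocks so that the tensor eigenvector really witnesses $\|\Gamma\|\ge\|\Gamma_f\|\cdot\|\Gamma_g\|$. But as a high-level roadmap for how the cited result is proved, your proposal is on target.
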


Applying Theorem~\ref{thm:composition} with the lower bounds of Theorem~\ref{thm:nayak_wu} and Theorem~\ref{thm:lower_approximate_multi_search}, we obtain:

\begin{corollary}\label{cor:WSF}
Let $N\in 2\intg_+$ and $p\in (0,0.5)$ be an integer multiple of $1/N$. Given a matrix $X \in \{-1,+1\}^{N\times d}$ such that there exists a set $W\subseteq \intg_d$ with size $w$ and \begin{itemize}
    \item For every $j \in W$, $\sum\limits_{i\in\intg_N} X_{ij}=2pN$.
    \item For every $j' \in \overline{W}$, $\sum\limits_{i\in\intg_N} X_{ij'}= 0$.
\end{itemize}
Suppose we have query access to $X$. Then every bounded-error quantum algorithm that computes $\tilde{W}$ such that $|W\Delta \tilde{W}|\leq w/200$, uses $\Omega(\sqrt{dw}/p)$ queries to $O_X$.
\end{corollary}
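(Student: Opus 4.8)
The plan is to derive Corollary~\ref{cor:WSF} directly from the Belovs--Lee composition bound. The discussion preceding the statement already exhibits $\mbox{WSF}_{d,w,p,N}$ as the composed relational problem $\mbox{ASF}_{d,w}\circ(\mbox{HD}_{\frac{N}{2},N(\frac{1}{2}+p)})^d$, restricted to inputs whose vector of inner answers is a legal $\mbox{ASF}$-input: for a valid $X\in\{-1,1\}^{N\times d}$ (all columns summing to $2pN$ or $0$), evaluating $g:=\mbox{HD}_{\frac{N}{2},N(\frac{1}{2}+p)}$ on each column (in $\pm1$ form, deciding column-sum $2pN$ versus $0$) produces $z\in\{0,1\}^d$ with $\supp(z)=W$ and $|z|=w$, so $z\in S:=\cZ=\{u\in\{0,1\}^d:|u|=w\}$, and producing a $\tilde W$ with $|W\Delta\tilde W|\le w/200$ from $X$ is exactly solving $\mbox{ASF}_{d,w}$ on~$z$. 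Hence the quantum query complexity of $\mbox{WSF}_{d,w,p,N}$ coincides, up to constant factors, with that of this restricted composition.

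I would then invoke Theorem~\ref{thm:composition} with outer relation $f=\mbox{ASF}_{d,w}$, whose bounded-error query complexity is $L=\Omega(\sqrt{dw})$ by Theorem~\ref{thm:lower_approximate_multi_search}, and inner Boolean function $g=\mbox{HD}_{\frac{N}{2},N(\frac{1}{2}+p)}$, whose bounded-error query complexity is $Q=\Omega(1/p)$ by Theorem~\ref{thm:nayak_wu} (valid since $N$ is even and $p$ is an integer multiple of $1/N$). Theorem~\ref{thm:composition} then gives $\Omega(LQ)=\Omega(\sqrt{dw}/p)$ queries for the composed problem restricted to inputs with $g^d$-image in $S$, which by the identification above lower-bounds the number of $O_X$-queries needed for $\mbox{WSF}_{d,w,p,N}$, as claimed.

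The one hypothesis of Theorem~\ref{thm:composition} needing a separate argument is that $\mbox{ASF}_{d,w}$ is efficiently verifiable: given a candidate $\tilde W$ and oracle access to $z\in S$, one must decide $|W\Delta\tilde W|\le w/200$ (with $W=\supp(z)$) with few queries to~$z$. My plan: reject with no queries if $\big||\tilde W|-w\big|>w/200$ (then $|W\Delta\tilde W|\ge\big||\tilde W|-|W|\big|>w/200$); otherwise each bit of $F=W\Delta\tilde W$ costs one query to $z$, so run the exact find-all routine of Corollary~\ref{thm:findallsolutions} on $F$ with bound $u=w/200$ to obtain $\hat F$, set $W':=\tilde W\Delta\hat F$, and use an $\mathcal{O}(\sqrt d)$-query search to test whether $z$ equals the indicator vector of $W'$; if so then $W'=W$ exactly (since $|W'|=w=|z|$ and $z$ is $1$ on all of $W'$), so $|W\Delta\tilde W|=|\hat F|$ and we accept iff $|\hat F|\le w/200$, and otherwise reject. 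This verification uses $\mathcal{O}(\sqrt{dw})$ queries, which is comparable to $L$; reconciling the $w/200$ approximation slack, the Grover/amplitude-estimation cost, and the outer complexity $L=\Theta(\sqrt{dw})$ so that the sublinearity demanded by Theorem~\ref{thm:composition} is met (or, if needed, arguing via a sharper approximate-counting variant) is the step I expect to be the main obstacle.
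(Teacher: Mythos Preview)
Your plan is exactly the paper's: it simply states the corollary as an immediate consequence of Theorem~\ref{thm:composition} applied to the lower bounds of Theorems~\ref{thm:lower_approximate_multi_search} and~\ref{thm:nayak_wu}, and it does not spell out the efficient-verifiability hypothesis at all.

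The obstacle you flag is real for the verifier you wrote down, but there is a cheaper one that the paper is implicitly relying on. After discarding any $\tilde W$ with $\big||\tilde W|-w\big|>w/200$, the identity $|W\Delta\tilde W|=w+|\tilde W|-2|W\cap\tilde W|$ reduces verification to deciding whether $|\tilde W\setminus W|=\#\{j\in\tilde W:x_j=0\}$ exceeds the known threshold $(|\tilde W|-w+w/200)/2\le w/200$. This is a symmetric threshold problem on only $|\tilde W|\le w(1+1/200)$ input bits (not on $d$ bits), so its bounded-error quantum query complexity is $O(w)$; since $w=o(\sqrt{dw})$ whenever $w=o(d)$, this gives the required $o(L)$ verification in the regime $w=\lfloor 1/\epsilon\rfloor$ with $\epsilon=\omega(1/d)$ that is actually used for the Lasso lower bound. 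Your search-and-correct verifier ranges over all $d$ coordinates and therefore cannot beat $\Theta(\sqrt{dw})$; restricting the count to the $\approx w$ coordinates of $\tilde W$ is what removes the obstacle.
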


\subsection{Worst-case to average-case reduction for the set-finding problem}\label{sec:worst_to_average}
Our goal is to prove a lower bound for Lasso algorithms that have high success probability w.r.t.\ the distribution $\mathcal{D}_{p,W}$,
yet the lower bound of the previous subsection is for \emph{worst-case} instances.
In this subsection, we will connect these by providing a worst-case to average-case reduction for the set-finding problem. After that, by simply combining with the query lower bound for the worst-case set-finding problem and the reduction from the distributional set-finding problem to Lasso, we obtain an $\Omega({\sqrt{d}/\epsilon^{1.5}})$ query lower bound for Lasso.

\begin{theorem}\label{thm:worst_to_average}
Let $N\in 2\intg_+$, $p\in (0,0.5)$ be an integer multiple of $1/N$, $w$ be a natural number between $2$ to $d/2$, and $M$ be a natural number. Suppose $X\in \{-1,+1\}^{N\times d}$ is a valid input for $\mbox{WSF}_{d,w,p,N}$, and let $W\subset\intg_d$ be the set of the $w$ indices of the columns of $X$ whose entries add up to $2pN$. Let $R\in \intg_N^{M\times d}$ be a matrix whose entries are i.i.d.\ samples from $\cU_N$, and define $X'\in \{-1,1\}^{M\times d}$ as $X'_{ij}=X_{R_{ij}j}$. Then the $M$ vectors $(X'_i,1)$, where $X'_i$ is the $i$th row of $X'$ and $i\in \intg_M$, are i.i.d.\ samples from $\mathcal{D}_{p,W}$.
\end{theorem}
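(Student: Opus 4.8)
The plan is to exploit the fact that $X$ is a fixed (deterministic) matrix, so that all the randomness in $X'$ comes from the matrix $R$, whose entries are mutually independent draws from $\cU_N$. First I would fix a single row index $i\in\intg_M$ and identify the distribution of the vector $(X'_i,1)$.

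Within row $i$, the coordinates $X'_{i0},\ldots,X'_{i,d-1}$ are mutually independent, because $X'_{ij}=X_{R_{ij}j}$ is a deterministic function of the single random variable $R_{ij}$ and the $R_{ij}$ are independent across $j$. This matches the structure of $\mathcal{D}_{p,W}$, whose coordinates are generated by independent rules. I would then compute each marginal: since $R_{ij}\sim\cU_N$, the probability $\Pr[X'_{ij}=1]$ equals the fraction of $+1$-entries in the $j$th column of $X$. For $j'\in\overline{W}$ the column sum is $0$, so there are exactly $N/2$ entries equal to $+1$ and hence $\Pr[X'_{ij'}=1]=1/2$; for $j\in W$ the column sum is $2pN$, so there are exactly $N/2+pN$ entries equal to $+1$ and hence $\Pr[X'_{ij}=1]=1/2+p$. (Here I use $N\in 2\intg_+$ and that $pN\in\intg$, so these counts are well defined.) Since the appended target is the constant $1$, which also occurs with probability $1$ under $\mathcal{D}_{p,W}$, the row $(X'_i,1)$ is distributed exactly as $\mathcal{D}_{p,W}$.

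For the i.i.d.\ claim across rows, I would note that distinct rows $X'_i$ and $X'_{i'}$ are functions of the disjoint blocks $\{R_{ij}:j\in\intg_d\}$ and $\{R_{i'j}:j\in\intg_d\}$ of $R$, whose entries are all mutually independent; hence $X'_0,\ldots,X'_{M-1}$ are mutually independent, and appending the constant $1$ to each preserves this. Combined with the per-row computation, the $M$ vectors $(X'_i,1)$ are i.i.d.\ samples from $\mathcal{D}_{p,W}$.

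There is no serious obstacle here: the argument is essentially bookkeeping about where the randomness lives. The only points requiring a little care are (i) translating the deterministic column-sum conditions defining a $\mbox{WSF}_{d,w,p,N}$-instance into exact counts of $\pm1$ entries per column, and (ii) making explicit that the coordinates of $\mathcal{D}_{p,W}$, and likewise those of each sampled row, are mutually independent — since the definition of $\mathcal{D}_{p,W}$ specifies its coordinates via independent generation rules, this matching is immediate.
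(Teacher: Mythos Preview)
Your proposal is correct and follows essentially the same approach as the paper: both argue that $X_{R_{ij}j}$ is a uniform draw from column $j$, compute the resulting $\Pr[X'_{ij}=1]$ from the column-sum constraints, and conclude that each row matches $\mathcal{D}_{p,W}$. You are in fact more explicit than the paper about independence within a row and across rows, which the paper leaves largely implicit.
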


\begin{proof}
Every entry of $R$ is a sample from $\cU_N$, so $X_{R_{ij}j}$ is uniformly chosen from the entries of the $j$th column of $X$. Moreover, because every valid input $W$ for $\mbox{WSF}_{d,w,p,N}$ satisfies that for every $j\in 
W$, $\Pr_{i\sim \cU_N}[X_{ij}=1]=1/2+p$ and for every $j'\in \overline{W}$, $\Pr_{i\sim \cU_N}[X_{ij'}=1]=1/2$, we know $(X'_i,1)$ is distributed as $D_{p,W}$.
\end{proof}

The above theorem tells us that we can convert an instance of $\mbox{WSF}_{d,w,p,N}$ to an instance of $\mbox{DSF}_{\mathcal{D}_{p,W}}$. Note that we can produce matrix $R$ offline and therefore we can construct the oracle $O_{X'}:\ket{i}\ket{j}\ket{0}\rightarrow \ket{i}\ket{j}\ket{X_{R_{ij}j}}$ using $1$ query to $O_X:\ket{i}\ket{j}\ket{0}\rightarrow \ket{i}\ket{j}\ket{X_{ij}}$ (and some other elementary gates, which is irrelevant to the number of queries). Also observe that if $M=10^{12}\cdot\lceil \log d\rceil\cdot \lfloor 1/\epsilon\rfloor^2=\mathcal{O}((\log d)/\epsilon^2)$ and hence $S'=\{(X'_i,1)\}_{i=0}^{M-1}$ is a sample set with $M$ i.i.d.\ samples from $\mathcal{D}_{p,W}$, then by Theorem~\ref{thm:Rademacher_Lasso}, with probability $\geq 9/10$, an $\epsilon/16000$-minimizer for Lasso with respect to $S'$ is also an $\epsilon/8000$-minimizer for Lasso with respect to distribution $\mathcal{D}_{p,W}$. 
By Theorem~\ref{thm:average_HSF_to_Lasso}, an $\epsilon/8000$-minimizer for Lasso with respect to distribution $\mathcal{D}_{p,W}$ can be used to output a set $\tilde{W}\subset \intg_d$ such that $|\tilde{W}\Delta W|\leq w/200$, where $W$ is the set of indices for those columns of $X$ whose entries add up to $2pN$. Hence we have a reduction from the worst-case set-finding problem to Lasso. By the reduction above and by plugging $w=\lfloor 1/\epsilon\rfloor$ and $p=1/(2\lfloor 1/\epsilon\rfloor)$ in Corollary~\ref{cor:WSF} (and $N$ an arbitrary natural number such that $pN \in \mathbb{N}$), we obtain a lower bound of $\Omega(\sqrt{d}/\epsilon^{1.5})$ queries for $\mbox{WSF}_{d,w,p,N}$, and hence the main result of this section: a lower bound of $\Omega(\sqrt{d}/\epsilon^{1.5})$ for Lasso. 

\begin{corollary}
Let $\epsilon\in (2/d, 1/100)$, $w=\lfloor 1/\epsilon\rfloor$, $p=1/(2\lfloor 1/\epsilon\rfloor)$, and $W\subset \intg_d$ with size $w$. Every bounded-error quantum algorithm that computes an $\epsilon$-minimizer for Lasso with respect to $\mathcal{D}_{p,W}$ uses $\Omega(\sqrt{d}/\epsilon^{1.5})$ queries.
\end{corollary}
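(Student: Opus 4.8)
The plan is to chain together the three ingredients already assembled in this section: the reduction ``Lasso solver $\Rightarrow$ approximate hidden-set finder'' (Theorem~\ref{thm:average_HSF_to_Lasso}), the worst-case to average-case reduction (Theorem~\ref{thm:worst_to_average}), and the worst-case query lower bound for $\mbox{WSF}_{d,w,p,N}$ (Corollary~\ref{cor:WSF}). Concretely, suppose $\mathcal{A}$ is a bounded-error quantum algorithm that, given i.i.d.\ samples from $\mathcal{D}_{p,W}$ (accessed through $O_{X'}$ and $O_{y'}$), outputs an $\epsilon$-minimizer for Lasso with respect to $\mathcal{D}_{p,W}$, using $T$ queries; up to replacing $\epsilon$ by a constant multiple of itself (which does not change an $\Omega(\cdot)$ bound, only the absolute constants in the admissible range of $\epsilon$) we may assume $\mathcal{A}$ in fact returns an $(\epsilon/8000)$-minimizer, so that Theorem~\ref{thm:average_HSF_to_Lasso} applies verbatim with $w=\lfloor 1/\epsilon\rfloor$ and $p=1/(2\lfloor 1/\epsilon\rfloor)$. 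The goal is to turn $\mathcal{A}$ into a bounded-error quantum algorithm for $\mbox{WSF}_{d,w,p,N}$ that makes only $O(T)$ queries to $O_X$, and then read off $T=\Omega(\sqrt{dw}/p)$.

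\textbf{The reduction.} Fix a worst-case input $X\in\{-1,1\}^{N\times d}$ of $\mbox{WSF}_{d,w,p,N}$, where $N$ is chosen to be a multiple of $2\lfloor 1/\epsilon\rfloor$ (so that $pN\in\mathbb{N}$ and $N$ is even, as required by Corollary~\ref{cor:WSF}); let $W\subset\intg_d$, $|W|=w$, be the hidden set of columns summing to $2pN$. Offline, using \emph{no} queries to $O_X$, draw a matrix $R\in\intg_N^{M\times d}$ with i.i.d.\ entries from $\cU_N$, where $M=\Theta((\log d)/\epsilon^2)$ is the sample size fed to $\mathcal{A}$. One query to $O_X$ then implements one query to $O_{X'}\colon\ket{i}\ket{j}\ket{0}\to\ket{i}\ket{j}\ket{X_{R_{ij}j}}$, and $O_{y'}$ is free because $y\equiv 1$. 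By Theorem~\ref{thm:worst_to_average} the rows $(X'_i,1)$ form a sample set $S'$ of $M$ i.i.d.\ draws from $\mathcal{D}_{p,W}$. Run $\mathcal{A}$ on $S'$; by Theorem~\ref{thm:Rademacher_Lasso} and the choice of $M$, with probability $\ge 9/10$ an $(\epsilon/16000)$-minimizer for the empirical loss $L_{S'}$ is an $(\epsilon/8000)$-minimizer for $L_{\mathcal{D}_{p,W}}$, so the returned vector $\theta$ is, except with constant probability, an $(\epsilon/8000)$-minimizer for Lasso with respect to $\mathcal{D}_{p,W}$. By Theorem~\ref{thm:average_HSF_to_Lasso}, the set $\tilde W=\{j\in\intg_d:|\theta_j|\ge\epsilon/3\}$ then satisfies $|W\Delta\tilde W|\le w/200$. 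Collecting the two bad events, we obtain a quantum algorithm that solves $\mbox{WSF}_{d,w,p,N}$ with constant error $<1/2$ using $O(T)$ queries to $O_X$; this constant error suffices for Corollary~\ref{cor:WSF}, since the adversary bound (Theorem~\ref{thm:adv_bound}) underlying it tolerates any constant error below $1/2$ at the cost of worse constants only.

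\textbf{Conclusion.} By Corollary~\ref{cor:WSF}, $T=\Omega(\sqrt{dw}/p)$. Plugging in $w=\lfloor 1/\epsilon\rfloor=\Theta(1/\epsilon)$ and $p=1/(2\lfloor 1/\epsilon\rfloor)=\Theta(\epsilon)$ gives $T=\Omega(\sqrt{d/\epsilon}\,/\epsilon)=\Omega(\sqrt d/\epsilon^{1.5})$. The hypothesis $\epsilon\in(2/d,1/100)$ is exactly what is needed so that $1\le w\le d/2$ (so that Theorem~\ref{thm:lower_approximate_multi_search} and hence Corollary~\ref{cor:WSF} apply) and $p<1/2$ (so that the Hessian in Theorem~\ref{thm:properties_distinguisher} is positive definite and Theorem~\ref{thm:weight_distinguisher} holds); also $w\ge 2$ since $\epsilon<1/100$, as Theorem~\ref{thm:worst_to_average} requires.

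\textbf{Where the work is.} All the genuinely hard steps — the adversary lower bounds for $\mbox{ASF}_{d,w}$ and $\mbox{HD}_{N/2,N(1/2+p)}$ and their combination via the Belovs--Lee composition theorem (Theorem~\ref{thm:composition}) — are already established, so what remains is bookkeeping. The delicate points I would be careful about are: (i) that the worst-case to average-case reduction is \emph{query-efficient}, which holds precisely because $R$ is produced offline and each $O_{X'}$-query costs exactly one $O_X$-query; (ii) composing the failure probabilities of $\mathcal{A}$ and of the Rademacher bound and checking the total stays a constant below $1/2$ (enlarging the implicit constant in $M=\Theta((\log d)/\epsilon^2)$ as needed); and (iii) aligning the parameter regimes — $N$ a multiple of $2\lfloor 1/\epsilon\rfloor$, the accuracy of $\mathcal{A}$'s output being a small enough constant times $\epsilon$ to invoke Theorem~\ref{thm:average_HSF_to_Lasso}, and the threshold $\epsilon/3$ cleanly separating the ``hidden'' coordinates (value $\approx v\approx 2p$) from the at most $w/400$ large noise coordinates. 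I expect (iii), the constant-chasing that guarantees the minimizer is accurate enough, to be the fussiest part, though it is routine rather than deep.
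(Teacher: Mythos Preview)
Your proposal is correct and follows essentially the same approach as the paper: chain Theorem~\ref{thm:worst_to_average} (worst-case to average-case, implemented query-efficiently via offline $R$), Theorem~\ref{thm:average_HSF_to_Lasso} (Lasso solver $\Rightarrow$ approximate set finder), and Corollary~\ref{cor:WSF} (composition lower bound), then plug in $w=\lfloor 1/\epsilon\rfloor$ and $p=1/(2\lfloor 1/\epsilon\rfloor)$. One minor remark: since you already assume $\mathcal{A}$ outputs an $(\epsilon/8000)$-minimizer \emph{with respect to $\mathcal{D}_{p,W}$}, the Rademacher step (Theorem~\ref{thm:Rademacher_Lasso}) is not strictly needed in your argument---it would be needed only if $\mathcal{A}$ were assumed to minimize the empirical loss $L_{S'}$---but including it does no harm and matches the paper's discussion.
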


\paragraph{Classical lower bound.}
In Appendix~\ref{app:classicalLBforLasso} we show how this quantum lower bound approach can be modified to prove, for the first time, a lower bound of $\tilde{\Omega}(d/\eps^2)$ on the classical query complexity of Lasso. This lower bound is optimal up to logarithmic factors.

\section{Quantum query lower bound for Ridge}\label{sec:distributed_HSF_to_Ridge}

Now we switch our attention from Lasso to Ridge. We will prove a lower bound of $\Omega(d/\epsilon)$ queries for Ridge in a very similar way as our lower bound for Lasso. Recall that Ridge's setup assumes the vectors in the sample set are normalized in $\ell_2$ rather than $\ell_\infty$ as in Lasso.
We modify the distribution to $\mathcal{D}'_{p,W}$ over $(x,y)\in\{-1/\sqrt{d},1/\sqrt{d}\}^d\times \{-1,1\}$ as follows. Let $p\in (0,1/4)$, $W\subset \mathbb{Z}_d$, and $\overline{W}=\mathbb{Z}_d\setminus W$. For each $j'\in \overline{W}$, $x_{j'}$ is generated according to  $\Pr[x_{j'}=-1/\sqrt{d}]=1/2+p$; for each $j\in W$, $x_j$ is generated according to  $\Pr[x_{j}=1/\sqrt{d}]=1/2+p$; $y$ is generated according to $\Pr[y=1]=1$. Now again we want to solve a distributional set-finding problem with respect to $\mathcal{D}'_{p,W}$, given $M$ samples from $\mathcal{D}'_{p,W}$. Similar to the Lasso case, one can think of the $M\times d$ matrix of samples as ``hiding'' the set $W$: the columns corresponding to $j\in W$ are likely to have more $1/\sqrt{d}$'s than $-1/\sqrt{d}$'s, while the columns corresponding to $j\in \overline{W}$ are likely to have more $-1/\sqrt{d}$'s than $1/\sqrt{d}$'s. 

In this section let 
\[
\theta^*=\sum\limits_{j\in\intg_d}\frac{e_j}{\sqrt{d}}(-1)^{[j\in\overline{W}]}
\]
and note that for every $\theta \in \mathbb{R}^d$,
\begin{align*}
L_{\mathcal{D}'_{p,W}}(\theta) &=  \E_{(x,y)\sim \mathcal{D}'_{p,W}}[\inProd{\theta}{x}^2]-2\E_{(x,y)\sim \mathcal{D}'_{p,W}}[\inProd{\theta}{x}]+1\\
&=  (\E_{(x,y)\sim \mathcal{D}'_{p,W}}[\inProd{\theta}{x}^2]-\E_{(x,y)\sim \mathcal{D}'_{p,W}}[\inProd{\theta}{x}]^2)+\E_{(x,y)\sim \mathcal{D}'_{p,W}}[\inProd{\theta}{x}]^2-2\E_{(x,y)\sim \mathcal{D}'_{p,W}}[\inProd{\theta}{x}]+1\\
&= \|\theta\|^2_2\cdot(1-4p^2)/d+(\E_{(x,y)\sim \mathcal{D}'_{p,W}}[\inProd{\theta}{x}]-1)^2\\
&=\|\theta\|^2_2\cdot(1-4p^2)/d +(2p\inProd{\theta}{\theta^*}-1)^2,
\end{align*}
where the third equality holds because $\inProd{\theta}{x}$ is a sum of independent random variables and hence its variance is the sum of the variances of the terms $\theta_ix_i$ (which are $\theta_i^2(1-4p^2)/d$). 

Next we show that $\theta^*$ is the minimizer for Ridge with respect to ${\mathcal{D}'_{p,W}}$.

\begin{theorem}
Let $w=\lfloor d/2\rfloor$ and $W \subset \mathbb{Z}_d$ be a set of size $w$, and let $\epsilon \in (1000/d,1/10000)$ and $p=1/\lfloor 1/\epsilon\rfloor$. Then $\theta^*=\sum\limits_{j\in\intg_d}\frac{e_j}{\sqrt{d}}(-1)^{[j\in\overline{W}]}$ is the minimizer for Ridge with respect to ${\mathcal{D}'_{p,W}}$.
\end{theorem}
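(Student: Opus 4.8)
The plan is to reduce the $d$-dimensional constrained minimization to a one-dimensional convex problem, using the closed-form expression for $L_{\mathcal{D}'_{p,W}}$ derived immediately above the statement,
\[
L_{\mathcal{D}'_{p,W}}(\theta) = \frac{1-4p^2}{d}\norm{\theta}_2^2 + \big(2p\inProd{\theta}{\theta^*}-1\big)^2 .
\]
First I would note that $\norm{\theta^*}_2 = 1$, since every one of the $d$ coordinates of $\theta^*$ has magnitude $1/\sqrt d$; hence $\theta^* \in B_2^d$ (so the Ridge constraint is met with equality) and $\inProd{\theta^*}{\theta^*}=1$. The crucial structural fact is that under our parameters $2p = 2/\lfloor 1/\eps\rfloor < 1$, so for every $\theta\in B_2^d$ Cauchy--Schwarz gives $\inProd{\theta}{\theta^*}\le\norm{\theta}_2\norm{\theta^*}_2=\norm{\theta}_2\le 1 < 1/(2p)$, and on $(-\infty,1/(2p))$ the map $c\mapsto(2pc-1)^2$ is strictly decreasing. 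Therefore replacing $\theta$ by $\norm{\theta}_2\,\theta^*$ (same $\ell_2$-norm, but inner product exactly $\norm{\theta}_2$ with $\theta^*$) can only decrease $L_{\mathcal{D}'_{p,W}}$, so it suffices to minimize
\[
f(r) := \frac{1-4p^2}{d}r^2 + (2pr-1)^2 = \Big(\frac{1-4p^2}{d}+4p^2\Big)r^2 - 4pr + 1
\]
over $r\in[0,1]$.

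Next, $f$ is a strictly convex upward parabola (its leading coefficient is positive because $p<1/2$), with unconstrained minimizer $r^\star = 2pd/\big(1+4p^2(d-1)\big)$. The one genuine computation is checking $r^\star\ge 1$, equivalently $2pd\ge 1+4p^2(d-1)$: writing $p=1/n$ with $n=\lfloor 1/\eps\rfloor$, this rearranges to $2d(n-2)\ge(n-2)(n+2)$, i.e.\ $2d\ge n+2$, which holds with enormous slack because the hypothesis $\eps>1000/d$ forces $n\le 1/\eps<d/1000$. Hence $f$ is strictly decreasing on $[0,1]$, so its minimum there is attained only at $r=1$, and $\min_{\theta\in B_2^d}L_{\mathcal{D}'_{p,W}}(\theta)=f(1)=L_{\mathcal{D}'_{p,W}}(\theta^*)$.

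For uniqueness I would trace the two inequalities backwards: equality in the step $\inProd{\theta}{\theta^*}\le\norm{\theta}_2$ (using $\norm{\theta^*}_2=1$) forces $\theta=\norm{\theta}_2\,\theta^*$, and then strict monotonicity of $f$ on $[0,1]$ forces $\norm{\theta}_2=1$, so $\theta=\theta^*$. The main (and essentially only) obstacle is the arithmetic inequality $r^\star\ge 1$, but it is routine given the generous window $\eps\in(1000/d,1/10000)$; note also that the stated value $w=\lfloor d/2\rfloor$ plays no role here, since $L_{\mathcal{D}'_{p,W}}$ depends on $W$ only through the unit vector $\theta^*$ and not on $|W|$.
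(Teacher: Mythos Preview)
Your proof is correct and follows essentially the same approach as the paper: both reduce to the one-dimensional problem $f(r)=\frac{1-4p^2}{d}r^2+(2pr-1)^2$ on $[0,1]$ and show its unconstrained minimizer exceeds $1$, forcing the constrained minimum at $r=1$. Your reduction is slightly more streamlined---you use a single Cauchy--Schwarz step (for fixed $\norm{\theta}_2$, the loss is smallest when $\theta$ is the positive multiple of $\theta^*$) where the paper splits this into two separate arguments (first fixing the signs of the entries, then showing the magnitudes must be equal)---but the substance is the same.
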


\begin{proof}
Let $\theta=\sum\limits_{j\in\intg_d}\theta_j e_j\in B^d_2$ be a minimizer. We want to show $\theta_j=\theta^*_j$ for every $j\in \intg_d$. Note that if $\theta_j\cdot (-1)^{[j\in \overline{W}]}<0$, then we can flip the sign of $\theta_j$ to get a smaller objective value, that is,
\begin{align*}
    L_{\mathcal{D}'_{p,W}}(\theta')-L_{\mathcal{D}'_{p,W}}(\theta)&=(\|\theta'\|^2_2-\|\theta\|^2_2)\cdot(1-4p^2)/d +(2p\inProd{\theta'}{\theta^*}-1)^2-(2p\inProd{\theta}{\theta^*}-1)^2 \\
    &=(2p\inProd{\theta'-\theta}{\theta^*})(2p\inProd{\theta'+\theta}{\theta^*}-2)\\
    &=(-4p\theta_j\cdot (-1)^{[j\in\overline{W}]})(2p\inProd{\theta'+\theta}{\theta^*}-2) <0,
\end{align*}
where $\theta'=\sum\limits_{k\in\intg_d\setminus\{j\}}\theta_ke_k-\theta_je_j$, and the last inequality is because $-4p\theta_j\cdot (-1)^{[j\in\overline{W}]}>0$ and $2p\inProd{\theta'+\theta}{\theta^*}\leq2p\|\theta'+\theta\|_2\cdot\|\theta^*\|_2 \leq 4p \leq 1$. Since $\theta$ was assumed a minimizer,  for all $j\in \intg_d$ the sign of $\theta_j$ must be $(-1)^{[j\in \overline{W}]}$.

Second, we show that we must have $|\theta_0|=|\theta_1|=\cdots=|\theta_{d-1}|$. 
Suppose, towards a contradiction, that this is not the case. Consider $\theta'=\sum\limits_{j\in\intg_d} ue_j\cdot (-1)^{[j\in\overline{W}]}$, where $u=\sqrt{ \sum\limits_{j\in\intg_d} |\theta_j|^2/d}$. We have
\begin{align*}
    L_{\mathcal{D}'_{p,W}}(\theta')-L_{\mathcal{D}'_{p,W}}(\theta)    &=(2p\inProd{\theta'-\theta}{\theta^*})(2p\inProd{\theta'+\theta}{\theta^*}-2)\\
    &=(2p/\sqrt{d})\cdot (du-\sum\limits_{j\in\intg_d}|\theta_j|)\cdot(2p\inProd{\theta'+\theta}{\theta^*}-2) <0.
\end{align*}
The last inequality holds because again $2p\inProd{\theta'+\theta}{\theta^*}\leq 4p \leq1$ and in addition, $$d\cdot \sum\limits_{j\in\intg_d} |\theta_j|^2 >(\sum\limits_{j\in\intg_d}|\theta_j|)^2$$ by the Cauchy–Schwarz inequality (which is strict if the $|\theta_j|$ are not all equal). Hence if $\theta$ is indeed a minimizer, then its entries must all have the same magnitude.

Now we know a minimizer $\theta$ must be in the same direction as $\theta^*$, we just don't know yet that the magnitudes of its entries are $1/\sqrt{d}$. Suppose $\|\theta\|_2=u\leq1$ and $\theta=u\cdot\theta^*$, then we have
\begin{align*}
      L_{\mathcal{D}'_{p,W}}(\theta)  &=  \|\theta\|^2_2\cdot(1-4p^2)/d +(2p\inProd{\theta}{\theta^*}-1)^2 \\
    &=(u^2(1-4p^2)/d+(2pu-1)^2).
\end{align*}
The discriminant of $f(u)=u^2(1-4p^2)/d+(2pu-1)^2$ is less than $0$, and $u=\frac{2p}{4p^2+(1-4p^2)/d}$ is the global minimizer of $f(u)$. Note that $u=\frac{2p}{4p^2+(1-4p^2)/d}>1$, and hence $f(1)\leq f(u)$ for every $u\leq 1$. Therefore we know $\theta^*$ is the minimizer for Ridge with respect to $\mathcal{D}'_{p,W}$.
\end{proof}

Next we show that the inner product between the minimizer and an approximate minimizer for Ridge will be close to $1$.

\begin{theorem}\label{thm:weight_distinguisher_Ridge}
Let $w=\lfloor d/2\rfloor$, $W \subset \mathbb{Z}_d$ be a set of size $w$, $\epsilon \in (1000/d,1/10000)$, and $p=1/\lfloor 1/\epsilon\rfloor$. Suppose $\theta\in B^d_2$ is an $\epsilon/1000$-minimizer for Ridge with respect to $\mathcal{D}'_{p,W}$. Then $\inProd{\theta}{\theta^*}\geq 0.999$.
\end{theorem}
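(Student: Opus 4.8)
The plan is to exploit the closed form already derived in this section, namely
$L_{\mathcal{D}'_{p,W}}(\theta) = \|\theta\|_2^2(1-4p^2)/d + (2p\inProd{\theta}{\theta^*}-1)^2$, together with the facts that $\|\theta^*\|_2^2 = \inProd{\theta^*}{\theta^*} = 1$ (each of the $d$ coordinates of $\theta^*$ has magnitude $1/\sqrt{d}$) and that $\theta^*$ is the Ridge minimizer for $\mathcal{D}'_{p,W}$. Writing $c = \inProd{\theta}{\theta^*}$ and noting $|c|\le 1$ by Cauchy--Schwarz (since $\theta\in B^d_2$ and $\|\theta^*\|_2 = 1$), the goal reduces to showing $c \ge 0.999$.

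First I would plug $\theta^*$ into the formula to get $L_{\mathcal{D}'_{p,W}}(\theta^*) = (1-4p^2)/d + (1-2p)^2$, and then invoke the hypothesis that $\theta$ is an $\epsilon/1000$-minimizer: $L_{\mathcal{D}'_{p,W}}(\theta) \le L_{\mathcal{D}'_{p,W}}(\theta^*) + \epsilon/1000$. Dropping the nonnegative term $\|\theta\|_2^2(1-4p^2)/d$ on the left-hand side yields
$(1-2pc)^2 - (1-2p)^2 \le (1-4p^2)/d + \epsilon/1000$. The difference of squares on the left factors as $4p(1-c)\bigl(1 - p(1+c)\bigr)$. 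Since $0 \le 1+c \le 2$ and $p \le 2\epsilon < 1/4$ (because $\lfloor 1/\epsilon\rfloor \ge 1/\epsilon - 1$), we have $1 - p(1+c) \ge 1/2$, so $2p(1-c) \le (1-4p^2)/d + \epsilon/1000 \le 1/d + \epsilon/1000$.

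Finally I would feed in the parameter regime: $\epsilon > 1000/d$ gives $1/d < \epsilon/1000$, so the right-hand side is at most $\epsilon/500$; and $p = 1/\lfloor 1/\epsilon\rfloor \ge \epsilon$ gives $2p \ge 2\epsilon$. Hence $1 - c \le \frac{1}{2p}\bigl(1/d + \epsilon/1000\bigr) < \frac{1}{2\epsilon}\cdot\frac{\epsilon}{500} = \frac{1}{1000}$, i.e., $\inProd{\theta}{\theta^*} = c \ge 0.999$. There is no serious obstacle here; the only things to be careful about are the sign bookkeeping in the factorization (one uses $2pc \le 2p \le 4\epsilon \ll 1$ so that $(2pc-1)^2 = (1-2pc)^2$ and bounding $(1-2pc)^2$ from above really does bound $1-2pc$ from above, hence $c$ from below) and tracking the numerical constants so that the $\epsilon > 1000/d$ and $\epsilon < 1/10000$ assumptions together with the $1/1000$ slack in the minimizer exactly produce the claimed constant $0.999$.
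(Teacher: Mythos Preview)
Your proposal is correct and follows the same overall strategy as the paper: both plug into the closed form $L_{\mathcal{D}'_{p,W}}(\theta)=\|\theta\|_2^2(1-4p^2)/d+(2p\inProd{\theta}{\theta^*}-1)^2$, compare with $L_{\mathcal{D}'_{p,W}}(\theta^*)$, and extract a lower bound on $c=\inProd{\theta}{\theta^*}$ from the parameter constraints.

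The algebraic execution differs, though. The paper carries the term $(1-4p^2)(\|\theta\|_2^2-1)/d$ along, isolates $(2pc-1)^2$, takes a square root, and then applies $\sqrt{1-z}\le 1-z/2$ before dividing by $2p$; this produces $c\ge 1-p+(1-4p^2)(\|\theta\|_2^2-1)/(2pd)-0.0005\epsilon/p$ and finishes by bounding each of the three subtracted terms. You instead drop the nonnegative $\|\theta\|_2^2(1-4p^2)/d$ immediately and factor the difference of squares as $4p(1-c)(1-p(1+c))$, then lower-bound the second factor by $1/2$. Your route avoids the square-root step entirely and yields the slightly tighter bound $1-c\le 1/(2pd)+\epsilon/(2000p)$ (no extra $-p$ term), which makes the final numerics a touch more comfortable. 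Both arguments are short; yours is arguably the more elementary of the two.
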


\begin{proof}
Because $\theta$ is an $\epsilon/1000$-minimizer, we have 
\begin{align*}
    & 0.001\epsilon\geq L_{\mathcal{D}'_{p,W}}(\theta)-L_{\mathcal{D}'_{p,W}}(\theta^*) = (1-4p^2)\cdot(\|\theta\|^2_2-1)/d+(2p\inProd{\theta}{\theta^*}-1)^2-(2p-1)^2\\
    \implies & 2p\inProd{\theta}{\theta^*}\geq 1-\sqrt{1-4p+4p^2+0.001\epsilon-(1-4p^2)\cdot(\|\theta\|^2_2-1)/d}.
\end{align*}
Letting $z=4p-4p^2-0.001\epsilon+(1-4p^2)\cdot(\|\theta\|^2_2-1)/d$, we have
\begin{align*}
2p\inProd{\theta}{\theta^*} \geq & 1-\sqrt{1-z}\geq  1-(1-z/2) =  z/2\\
    =& 2p-2p^2+(1-4p^2)\cdot(\|\theta\|^2_2-1)/d-0.001\epsilon,
\end{align*}
where the second inequality holds because $z\in(0,1)$.
Dividing both sides by $2p$, we have
\begin{align*}
    \inProd{\theta}{\theta^*}\geq 1-p+(1-4p^2)\cdot(\|\theta\|^2_2-1)/(2pd)-0.0005\epsilon/p.
\end{align*}
Because $\theta\in B^d_2$, $p=1/\lfloor 1/\epsilon\rfloor$, and $\epsilon \in (1000/d,1/10000)$, the above implies $\inProd{\theta}{\theta^*}\geq0.999$.
\end{proof}

Combining the above theorem with the following theorem, we can see how to relate the entries of an approximate minimizer for Ridge with respect to $\mathcal{D}'_{p,W}$ to the elements of the hidden set $W$.

\begin{theorem}\label{thm:Ridge_inner_product}
Suppose $\theta \in B^d_2$ satisfies $\inProd{\theta}{\theta^*}\geq 1-0.001$. Then $\#\{j\in\intg_d \mid \theta_j\cdot \theta^*_j \leq 0\} \leq d/500$.
\end{theorem}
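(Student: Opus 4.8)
The plan is to use the fact that $\theta^*$ is a unit vector (since $\norm{\theta^*}_2^2 = \sum_{j\in\intg_d} 1/d = 1$) together with the constraint $\theta\in B_2^d$, so $\norm{\theta}_2 \leq 1$. First I would observe that $\norm{\theta-\theta^*}_2^2 = \norm{\theta}_2^2 - 2\inProd{\theta}{\theta^*} + \norm{\theta^*}_2^2 \leq 1 - 2(1-0.001) + 1 = 0.002$, using $\inProd{\theta}{\theta^*}\geq 1-0.001$ and $\norm{\theta}_2^2\leq 1$. So $\theta$ is $\ell_2$-close to $\theta^*$.

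Next I would argue that each coordinate $j$ with $\theta_j\cdot\theta^*_j \leq 0$ contributes a substantial amount to $\norm{\theta-\theta^*}_2^2$. Indeed, since $|\theta^*_j| = 1/\sqrt{d}$ for every $j$, if $\theta_j$ and $\theta^*_j$ have opposite signs (or $\theta_j=0$), then $|\theta_j - \theta^*_j| \geq |\theta^*_j| = 1/\sqrt{d}$, hence $(\theta_j-\theta^*_j)^2 \geq 1/d$. Letting $k = \#\{j\in\intg_d \mid \theta_j\cdot\theta^*_j\leq 0\}$, we get $\norm{\theta-\theta^*}_2^2 \geq k/d$.

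Combining the two bounds gives $k/d \leq 0.002$, i.e.\ $k \leq 0.002 d = d/500$, which is exactly the claim. I do not expect any real obstacle here — it is a short two-line argument once one writes out $\norm{\theta-\theta^*}_2^2$ and uses that the entries of $\theta^*$ all have the same magnitude $1/\sqrt{d}$; the only mild subtlety is remembering that $\norm{\theta^*}_2 = 1$ so that $\theta^*$ itself lies in $B_2^d$ and the inner-product-to-distance conversion is tight.
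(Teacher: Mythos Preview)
Your proposal is correct and follows essentially the same argument as the paper: bound $\norm{\theta-\theta^*}_2^2 \leq 2 - 2\inProd{\theta}{\theta^*} \leq 0.002$, then observe that every coordinate with $\theta_j\theta^*_j\leq 0$ contributes at least $1/d$ to this squared distance since $|\theta^*_j|=1/\sqrt{d}$. The paper's proof is the same two-line computation.
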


\begin{proof}
If $\theta_j\cdot \theta_j^*\leq 0$ then $|\theta_j-\theta_j^*|\geq |\theta_j^*|=\frac{1}{\sqrt{d}}$, hence using Theorem~\ref{thm:weight_distinguisher_Ridge} we have
\begin{align*}
    \frac{1}{d} \#\{j\in\intg_d \mid \theta_j\cdot \theta^*_j \leq 0\}\leq \|\theta-\theta^*\|_2^2 = \|\theta\|^2_2+\|\theta^*\|^2_2-2\inProd{\theta}{\theta^*}\leq 2-2(1-0.001)=1/500.
\end{align*}
\end{proof}

We know $\theta^*=\sum\limits_{j\in\intg_d}\frac{e_j}{\sqrt{d}}(-1)^{[j\in\overline{W}]}$, so by looking at the signs of entries of $\theta$, we can find an index set $\tilde{W}=\{j\in\intg_d:\theta_j >0\}$ satisfying that $|W\Delta \tilde{W}|\leq d/500\leq w/200$ because $w=\lfloor d/2\rfloor$. Therefore, once we have an $\epsilon/1000$-minimizer for Ridge with respect to $\mathcal{D}'_{p,W}$, we can solve $\mbox{DSF}_{\mathcal{D}'_{p,W}}$.

With the reduction from $\mbox{DSF}_{\mathcal{D}'_{p,W}}$ to Ridge, we here show (similar to Lasso) a lower bound for the \emph{worst-case symmetric set-finding problem} $\mbox{WSSF}_{d,w,p,N}$: given a matrix $X\in\{-1/\sqrt{d},1/\sqrt{d}\}^{N\times d}$ where each column-sum is either $2pN/\sqrt{d}$ or $-2pN/\sqrt{d}$, the goal is to find a set $\tilde{W}\subset \intg_d$ such that $|\tilde{W}\Delta W|\leq w/200$, where $W$ is the set of indices for those columns whose entries add up to $2pN/\sqrt{d}$ and $w=|W|$. This problem is again a composition of the approximate set finding problem in Section~\ref{sec:q_lower_decomposed} and the Hamming-weight distinguisher problem $\mbox{HD}_{\ell,\ell'}$ with $\ell=\frac{N}{2}-pN$ and $\ell'=\frac{N}{2}+pN$ up to a scalar $1/\sqrt{d}$. Following the proof of Theorem~\ref{thm:nayak_wu}, we prove a lower bound of $\Omega(1/p)$ queries for this problem.

\begin{theorem}
Let $N\in 2\intg_+$, $z\in \{0,1\}^N$, and $p\in (0,0.5)$ be an integer multiple of $1/N$. Suppose we have query access to $z$. 
Then every bounded-error quantum algorithm that computes $\mbox{HD}_{\frac{N}{2}-pN,\frac{N}{2}+pN}$ makes $\Omega(1/p)$ queries.
\end{theorem}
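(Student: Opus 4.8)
The plan is to follow the proof of Theorem~\ref{thm:nayak_wu} essentially verbatim, replacing the one-sided gap $\{N/2,\,N/2+pN\}$ by the symmetric gap $\{N/2-pN,\,N/2+pN\}$ and recomputing the combinatorial quantities that feed into the adversary bound of Theorem~\ref{thm:adv_bound}.

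Concretely, I would take $\cX=\{x\in\{0,1\}^N : |x|=N/2-pN\}$ and $\cY=\{y\in\{0,1\}^N : |y|=N/2+pN\}$ (both nonempty, since $0<p<1/2$ and $pN\in\nat$), together with the relation $\cR=\{(x,y)\in\cX\times\cY : x\le y \text{ coordinatewise}\}$. Since $|x|\neq|y|$ we have $f(x)\neq f(y)$ for every pair in $\cR$. For a fixed $x\in\cX$, a valid partner $y$ is obtained by turning $2pN$ of the $N/2+pN$ zeros of $x$ into ones, so $m=\binom{N/2+pN}{2pN}$; symmetrically, for a fixed $y\in\cY$ a valid $x$ is obtained by turning $2pN$ of the $N/2+pN$ ones of $y$ into zeros, so $m'=\binom{N/2+pN}{2pN}$. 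For a coordinate $j$ with $x_j\neq y_j$ (necessarily $x_j=0$, $y_j=1$), the quantity $\ell_{x,j}$ counts the $y'\supseteq x$ with $y'_j=1$, i.e.\ we fix coordinate $j$ to $1$ and choose the remaining $2pN-1$ flipped zeros among the other $N/2+pN-1$ zeros of $x$; thus $\ell_{x,j}=\binom{N/2+pN-1}{2pN-1}$, and by the same count $\ell_{y,j}=\binom{N/2+pN-1}{2pN-1}$, giving $\ell_{\max}=\binom{N/2+pN-1}{2pN-1}^2$.

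Plugging into Theorem~\ref{thm:adv_bound} with success probability $9/10$ yields a lower bound of $\bigl(\tfrac12-\sqrt{9/100}\bigr)\sqrt{mm'/\ell_{\max}}=\tfrac15\cdot\binom{N/2+pN}{2pN}\big/\binom{N/2+pN-1}{2pN-1}$. Using the absorption identity $\binom{n}{k}/\binom{n-1}{k-1}=n/k$ with $n=N/2+pN$ and $k=2pN$, this ratio equals $(N/2+pN)/(2pN)=1/(4p)+1/2=\Omega(1/p)$, which is the claimed bound.

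There is essentially no hard step: the argument is a direct instantiation of the Ambainis adversary method, and the only things to verify are (i) the four combinatorial counts above, and (ii) that the binomials are well defined, i.e.\ $0\le 2pN\le N/2+pN$, which holds because $0<p<1/2$ and $pN$ is a nonnegative integer. The one mild bookkeeping point is to get the off-by-one in $\ell_{x,j},\ell_{y,j}$ right (fixing the coordinate $j$ before counting the remaining flips), exactly as in the proof of Theorem~\ref{thm:nayak_wu}.
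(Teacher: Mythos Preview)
Your proposal is correct and follows exactly the approach the paper intends: the paper itself simply says to repeat the proof of Theorem~\ref{thm:nayak_wu} with the symmetric gap, and your instantiation of $\cX,\cY,\cR$, the counts $m=m'=\binom{N/2+pN}{2pN}$, $\ell_{\max}=\binom{N/2+pN-1}{2pN-1}^2$, and the resulting ratio $(N/2+pN)/(2pN)=\Omega(1/p)$ are all right.
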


Again we think of the input bits as $\pm 1$ and abuse the notation $\mbox{HD}_{\frac{N}{2}-pN,\frac{N}{2}+pN}$ for the problem with $\pm 1$ input. Also, by the composition property of the adversary bound from Belovs and Lee~\cite{BL20} (Theorem~\ref{thm:composition}), we have a lower bound of $\Omega(\sqrt{dw}/p)$ for $\mbox{WSSF}_{d,w,p,N}$ from the $\Omega(\sqrt{dw})$ lower bound for $\mbox{ASF}_{d,w}$ and the $\Omega(1/p)$ lower bound for $\mbox{HD}_{\frac{N}{2}-pN,\frac{N}{2}+pN}$.

\begin{corollary}\label{cor:WSF_Ridge}
Let $N\in 2\intg_+$ and $p\in (0,0.5)$ be an integer multiple of $1/N$. Given a matrix $X \in \{-1/\sqrt{d},+1/\sqrt{d}\}^{N\times d}$ such that there exists a set $W\subseteq \intg_d$ with size $w$ and \begin{itemize}
    \item For every $j \in W$, $\sum\limits_{i\in\intg_N} X_{ij}=2p  N/\sqrt{d}$.
    \item For every $j' \in \overline{W}$, $\sum\limits_{i\in\intg_N} X_{ij'}= -2p  N/\sqrt{d}$.
\end{itemize}
Then every bounded-error quantum algorithm that computes $\tilde{W}$ such that $|W\Delta \tilde{W}|\leq w/200$, takes $\Omega(\sqrt{dw}/p)$ queries.
\end{corollary}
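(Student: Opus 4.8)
The plan is to mirror the proof of Corollary~\ref{cor:WSF} essentially verbatim, with $\mbox{HD}_{\frac N2-pN,\frac N2+pN}$ playing the role that $\mbox{HD}_{\frac N2,N(\frac12+p)}$ played there. Identify the bit $1$ with the entry value $+1/\sqrt d$ and the bit $0$ with $-1/\sqrt d$. Then a column of $X$ summing to $+2pN/\sqrt d$ has exactly $\frac N2+pN$ ones, a column summing to $-2pN/\sqrt d$ has exactly $\frac N2-pN$ ones, and, reading column $j$ of $X$ through this identification, $g(\text{column }j)=1$ precisely when $j\in W$, where $g:=\mbox{HD}_{\frac N2-pN,\frac N2+pN}$. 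Consequently the $d$-bit string $\bigl(g(\text{column }0),\ldots,g(\text{column }d-1)\bigr)$ has Hamming weight exactly $w$ and is a valid input for $\mbox{ASF}_{d,w}$, so $\mbox{WSSF}_{d,w,p,N}$ is exactly the composed relational problem $\mbox{ASF}_{d,w}\circ g^d$ restricted to inputs whose $g^d$-value lies in the valid-input set of $\mbox{ASF}_{d,w}$. The overall $1/\sqrt d$ normalization of the entries is immaterial: a query to the oracle $O_X$ for the $\{-1/\sqrt d,1/\sqrt d\}$-valued matrix and a query to the oracle for the corresponding $\{0,1\}$-valued matrix differ only by $O(1)$ post-processing of the returned value, with no change in the query count.

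Now invoke Theorem~\ref{thm:composition} (Belovs--Lee). The inner Boolean function $g$ has bounded-error quantum query complexity $Q=\Omega(1/p)$ by the Hamming-weight distinguisher bound stated just above the corollary, whose proof follows that of Theorem~\ref{thm:nayak_wu} with $\ell=\frac N2-pN$ and $\ell'=\frac N2+pN$. The outer relational problem $\mbox{ASF}_{d,w}$ has bounded-error quantum query complexity $L=\Omega(\sqrt{dw})$ by Theorem~\ref{thm:lower_approximate_multi_search}. The remaining hypothesis of Theorem~\ref{thm:composition} is that $\mbox{ASF}_{d,w}$ be efficiently verifiable; this holds exactly as in the Lasso case underlying Corollary~\ref{cor:WSF} --- given a candidate $\tilde W$ and oracle access to the valid input $x\in\{0,1\}^d$, one estimates $|W\Delta\tilde W|$ by quantum approximate counting, which certifies $(x,\tilde W)\in\mbox{ASF}_{d,w}$ up to the small promise gap inherited from the ``correction'' reduction of Theorem~\ref{thm:lower_approximate_multi_search}, using $o(\sqrt{dw})$ queries in the regime of interest (here $w=\lfloor d/2\rfloor$ is large). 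Plugging $L=\Omega(\sqrt{dw})$ and $Q=\Omega(1/p)$ into Theorem~\ref{thm:composition} yields quantum query complexity $\Omega(LQ)=\Omega(\sqrt{dw}/p)$ for $\mbox{WSSF}_{d,w,p,N}$, which is the claim.

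The real content is just the bookkeeping: pairing the two possible column sums $\pm 2pN/\sqrt d$ with the two Hamming weights $\frac N2\pm pN$ and with the $0/1$ output convention of $g$, so that ``a column with more $+1/\sqrt d$'s than $-1/\sqrt d$'s'' becomes ``$g=1$'' becomes ``index lies in $W$'', and tracking the harmless $1/\sqrt d$ scaling through the oracle. The one hypothesis genuinely worth double-checking is the efficient-verifiability condition on the outer problem $\mbox{ASF}_{d,w}$ demanded by Theorem~\ref{thm:composition}; but since the outer problem here is literally the same $\mbox{ASF}_{d,w}$ that appears in the Lasso lower bound, nothing new is needed beyond what Corollary~\ref{cor:WSF} already relies on, and the approximate-counting argument above makes it explicit.
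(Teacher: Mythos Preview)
Your proposal is correct and follows essentially the same approach as the paper: the paper simply states that the composition theorem of Belovs--Lee (Theorem~\ref{thm:composition}) applied to the $\Omega(\sqrt{dw})$ bound for $\mbox{ASF}_{d,w}$ and the $\Omega(1/p)$ bound for $\mbox{HD}_{\frac{N}{2}-pN,\frac{N}{2}+pN}$ yields $\Omega(\sqrt{dw}/p)$, without further elaboration. You supply the same argument with more explicit bookkeeping on the $\{0,1\}\leftrightarrow\{\pm 1/\sqrt d\}$ identification and on the efficient-verifiability hypothesis, which the paper leaves implicit (relying on the Lasso case, exactly as you note).
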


The final step for proving a lower bound for Ridge, using the same arguments as in Section~\ref{sec:worst_to_average}, is to provide a worst-case to average-case reduction for the symmetric set-finding problem. We follow the same proof in Theorem~\ref{thm:worst_to_average} and immediately get the following theorem: 

\begin{theorem}
Let $N\in 2\intg_+$, $p\in (0,0.5)$ be an integer multiple of $1/N$, $w$ be a natural number between $2$ to $d/2$, and $M$ be a natural number. Suppose $X\in \{-1/\sqrt{d},+1/\sqrt{d}\}^{N\times d}$ is a valid input for $WSSF_{d,w,p,N}$, and let $W\subset\intg_d$ be the set of the $w$ indices of the columns of $X$ whose entries add up to $2pN/\sqrt{d}$. Let $R\in \intg_N^{M\times d}$ be a matrix whose entries are i.i.d.\ samples from $\cU_N$, and define $X'\in \{-1/\sqrt{d},1/\sqrt{d}\}^{M\times d}$ as $X'_{ij}=X_{R_{ij}j}$. Then the vectors $(X'_i,1)$, where $X'_i$ is the $i$th row of $X'$ and $i\in \intg_M$, are i.i.d.\ samples from $\mathcal{D}'_{p,W}$.
\end{theorem}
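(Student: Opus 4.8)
The plan is to mirror the proof of Theorem~\ref{thm:worst_to_average} essentially verbatim, since the only changes here are the symmetric alphabet $\{-1/\sqrt{d},1/\sqrt{d}\}$ and the two-sided column-sum condition. First I would observe that each entry $R_{ij}$ is an independent sample from $\cU_N$, so $X'_{ij}=X_{R_{ij}j}$ is a uniformly random entry drawn from the $j$th column of $X$, and that these draws are mutually independent over all pairs $(i,j)$. Independence over $i$ (for fixed $j$) and over $j$ (for fixed $i$) then immediately gives that the rows $X'_0,\ldots,X'_{M-1}$ are i.i.d.\ and that within a single row the coordinates are independent, so it suffices to check that one coordinate $X'_{ij}$ has the correct marginal distribution.

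For the marginals I would use that $X$ is a valid input for $\mbox{WSSF}_{d,w,p,N}$. If $j\in W$, the $j$th column has entries in $\{-1/\sqrt{d},1/\sqrt{d}\}$ summing to $2pN/\sqrt{d}$; writing $a$ for the number of $+1/\sqrt{d}$ entries, the identity $(2a-N)/\sqrt{d}=2pN/\sqrt{d}$ forces $a=N(1/2+p)$, hence $\Pr_{i\sim\cU_N}[X_{ij}=1/\sqrt{d}]=a/N=1/2+p$, exactly as in $\mathcal{D}'_{p,W}$. If instead $j'\in\overline{W}$, the column sums to $-2pN/\sqrt{d}$, so the number of $+1/\sqrt{d}$ entries is $N(1/2-p)$ and therefore $\Pr_{i\sim\cU_N}[X_{ij'}=-1/\sqrt{d}]=1/2+p$, which again matches the definition of $\mathcal{D}'_{p,W}$ on the coordinates indexed by $\overline{W}$.

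Finally, since $y=1$ deterministically under $\mathcal{D}'_{p,W}$, appending the constant $1$ to each row $X'_i$ yields a pair $(X'_i,1)$ with exactly the law of $\mathcal{D}'_{p,W}$, and these pairs are i.i.d.\ by the independence noted above. There is no genuine obstacle in this argument — it is a direct bookkeeping exercise — and the only point requiring a line of care is that the ``valid input'' hypothesis pins the column sums down \emph{exactly} (rather than only approximately), which is precisely what forces the proportions of $\pm 1/\sqrt{d}$ entries to be exactly $1/2\pm p$ in each column.
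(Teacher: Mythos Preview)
Your proposal is correct and follows exactly the approach the paper indicates: the paper simply states that the result follows by the same proof as Theorem~\ref{thm:worst_to_average}, and your argument spells out precisely that proof with the obvious adjustments for the $\pm 1/\sqrt{d}$ alphabet and the two-sided column sums. If anything, you give more detail than the paper does.
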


By setting $M=10^{10}\cdot\lceil \log d\rceil\cdot \lfloor 1/\epsilon\rfloor^2=\mathcal{O}((\log d)/\epsilon^2)$ and letting  $S'=\{(X'_i,1)\}_{i=0}^{M-1}$ be a sample set with $M$ i.i.d.\ samples from $\mathcal{D}'_{p,W}$, with probability $\geq 9/10$, an $\epsilon/2000$-minimizer for Ridge with respect to $S'$ is also an $\epsilon/1000$-minimizer for Ridge with respect to distribution $\mathcal{D}'_{p,W}$ from Theorem~\ref{thm:Rademacher_Ridge}. 
By Theorem~\ref{thm:Ridge_inner_product} and Theorem~\ref{thm:weight_distinguisher_Ridge}, an $\epsilon/1000$-minimizer for Ridge with respect to distribution $\mathcal{D}'_{p,W}$ gives us a set $\tilde{W}\subset \intg_d$ such that $|\tilde{W}\Delta W|\leq w/200$, where $W$ is the set of indices for those columns of $X$ whose entries add up to $2pN/\sqrt{d}$. Hence we have a reduction from the worst-case symmetric set-finding problem to Ridge. By this reduction and by plugging $w=\lfloor d/2\rfloor$ and $p=1/\lfloor 1/\epsilon\rfloor$ in Corollary~\ref{cor:WSF_Ridge} (and $N$ an arbitrary natural number such that $pN \in \mathbb{N}$), we obtain a lower bound of $\Omega(d/\epsilon)$ queries for $\mbox{WSSF}_{d,w,p,N}$, and hence for Ridge as well, which is the main result of this section.

\begin{corollary}
Let $\epsilon\in (2/d, 1/1000)$, $w=\lfloor d/2\rfloor$, $p=1/\lfloor 1/\epsilon\rfloor$, and $W\subset \intg_d$ with size $w$. Every bounded-error quantum algorithm that computes an $\epsilon$-minimizer for Ridge with respect to $\mathcal{D}'_{p,W}$ uses $\Omega(d/\epsilon)$ queries.
\end{corollary}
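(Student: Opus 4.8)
The plan is to assemble, exactly as in the Lasso case, the three ingredients already built in Section~\ref{sec:distributed_HSF_to_Ridge}: the reduction from the distributional set-finding problem to Ridge, the worst-case-to-average-case reduction, and the worst-case query lower bound of Corollary~\ref{cor:WSF_Ridge}. Suppose, towards a contradiction, that there is a bounded-error quantum algorithm $\mathcal{A}$ that, given QROM access to a sample set drawn i.i.d.\ from $\mathcal{D}'_{p,W}$ (with $p=1/\lfloor 1/\epsilon\rfloor$ and $w=\lfloor d/2\rfloor$ as in the statement), outputs with probability $\ge 9/10$ a $\theta\in B^d_2$ that is an $\epsilon$-minimizer for Ridge with respect to $\mathcal{D}'_{p,W}$, using only $q$ queries. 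First I would fix $N$ to be any multiple of $\lfloor 1/\epsilon\rfloor$, so that $p$ is an integer multiple of $1/N$ and $pN\in\intg$; $N$ will not enter the query count.

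Next I would run the worst-case-to-average-case reduction. Take an arbitrary worst-case instance $X\in\{-1/\sqrt d,1/\sqrt d\}^{N\times d}$ of $\mbox{WSSF}_{d,w,p,N}$ with hidden column-set $W$, $|W|=w=\lfloor d/2\rfloor$. Sample $R\in\intg_N^{M\times d}$ \emph{offline} with $M=\mathcal{O}((\log d)/\epsilon^2)$ rows and set $X'_{ij}=X_{R_{ij}j}$; by the Ridge analogue of Theorem~\ref{thm:worst_to_average} the rows $(X'_i,1)$ are $M$ i.i.d.\ samples from $\mathcal{D}'_{p,W}$, and a query to $O_{X'}$ costs one query to $O_X$ (the generation of $R$ adds no queries). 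Feeding this sample set to $\mathcal{A}$ yields, with probability $\ge 9/10$, an $\epsilon$-minimizer $\theta$ for Ridge with respect to $\mathcal{D}'_{p,W}$; if $\mathcal{A}$ were instead phrased for the empirical loss $L_{S'}$, Theorem~\ref{thm:Rademacher_Ridge} together with the choice of $M$ converts an $\epsilon/2$-minimizer for $L_{S'}$ into an $\epsilon$-minimizer for $L_{\mathcal{D}'_{p,W}}$ with probability $\ge 9/10$, and the constant is irrelevant below. I would then invoke Theorem~\ref{thm:weight_distinguisher_Ridge} and Theorem~\ref{thm:Ridge_inner_product} — after the harmless rescaling $\epsilon\mapsto\epsilon/1000$ absorbed into the $\Omega(\cdot)$, since the hiding distribution depends on $\epsilon$ only through $p$ and $w$, which change by at most a constant factor — to conclude that $\tilde W:=\{j\in\intg_d:\theta_j>0\}$ satisfies $|W\Delta\tilde W|\le d/500\le w/200$, using $w=\lfloor d/2\rfloor$ and $\theta^*=\sum_{j\in\intg_d}\frac{e_j}{\sqrt d}(-1)^{[j\in\overline W]}$. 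Thus $\mathcal{A}$ plus this post-processing solves $\mbox{WSSF}_{d,w,p,N}$ with $q+\mathcal{O}(1)$ queries to $O_X$ and success probability bounded away from $1/2$.

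Finally I would apply Corollary~\ref{cor:WSF_Ridge}, giving $q+\mathcal{O}(1)=\Omega(\sqrt{dw}/p)$. Plugging in $w=\lfloor d/2\rfloor$ and $p=1/\lfloor 1/\epsilon\rfloor$ yields $\sqrt{dw}/p=\Theta(d\cdot\lfloor 1/\epsilon\rfloor)=\Omega(d/\epsilon)$, hence $q=\Omega(d/\epsilon)$, as claimed. I do not expect a genuine mathematical obstacle, since every component is already established; the only care needed is bookkeeping: checking that the $\epsilon$-range in the statement is contained in (or can be shrunk to) the range $(1000/d,1/10000)$ required by Theorem~\ref{thm:weight_distinguisher_Ridge}, confirming the offline generation of $R$ contributes no queries to $O_X$, and tracking the several constant-factor slacks between ``$\epsilon$-minimizer'', ``$\epsilon/1000$-minimizer'' and ``$\epsilon/2000$-minimizer'' so that they are all consistently absorbed into the final $\Omega(d/\epsilon)$.
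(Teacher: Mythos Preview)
Your proposal is correct and follows essentially the same approach as the paper: you combine the worst-case-to-average-case reduction for the symmetric set-finding problem, Theorems~\ref{thm:weight_distinguisher_Ridge} and~\ref{thm:Ridge_inner_product} to recover $\tilde W$ from an approximate Ridge minimizer, and Corollary~\ref{cor:WSF_Ridge} with $w=\lfloor d/2\rfloor$, $p=1/\lfloor 1/\epsilon\rfloor$ to get $\Omega(\sqrt{dw}/p)=\Omega(d/\epsilon)$. Your bookkeeping remarks about the $\epsilon$-range and the constant slack between $\epsilon$-, $\epsilon/1000$-, and $\epsilon/2000$-minimizers are exactly the points the paper handles implicitly.
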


\section{Future work}

We mention a few directions for future work:
\begin{itemize}
\item While the $d$-dependence of our quantum bounds for Lasso is essentially optimal, the $\eps$-dependence is not: upper bound $\sqrt{d}/\eps^2$ vs lower bound $\sqrt{d}/\eps^{1.5}$.
Can we shave off a $1/\sqrt{\eps}$ factor from our upper bound, maybe using a version of accelerated gradient descent~\cite{Nesterov1983AMF} with $O(1/\sqrt{\eps})$ iterations instead of Frank-Wolfe's $O(1/\eps)$ iterations? 
Or can we somehow improve our lower bound by embedding harder query problems into Lasso?
Recall from Table~\ref{tab:linear_regression} that even for classical algorithms the optimal $\eps$-dependence seems to be open; it might be possible to get a tight classical lower bound by a classical version of our quantum lower bound, but it remains to be worked out whether the required composition property (the classical analogue of Theorem~\ref{thm:composition}) holds.
\item Similar question for Ridge: the linear $d$-dependence of our quantum bounds is tight, but we should improve the $\eps$-dependence of our upper and/or lower bounds. The most interesting outcome would be a quantum algorithm for Ridge with better $\eps$-dependence than the optimal classical complexity of $\tilde{\Theta}(d/\eps^2)$; currently we do not know of any quantum speedup for Ridge.
    \item Can we speed up some other methods for (smooth) convex optimization? In particular, can we find a classical iterative method where quantum algorithms can significantly reduce the number of iterations, rather than just the cost per iteration as we did here?
    \item There are many connections between Lasso and Support Vector Machines~\cite{jaggi:lassosvm}, and there are recent quantum algorithms for optimizing SVMs~\cite{rebentrost2014QSVM,schuld&killoran:feature,saeedi&arodz:qsvm,allcock&hsieh:qsvm,SPA:semi}. We would like to understand this connection better.
\end{itemize}

\subsubsection*{Acknowledgements.}
We thank Yi-Shan Wu and Christian Majenz for useful discussions, and Armando Bellante for pointing us to~\cite{BZ:qmpursuit}.

\bibliographystyle{alpha}
\bibliography{ref}

\appendix
\section{Proof of Theorem~\ref{thm:min_finding_approx}}\label{app:proof_approx_min}

Our proof of  Theorem~\ref{thm:min_finding_approx}
relies on the following result.

\begin{theorem}[\cite{vAGGdW17}, Theorem~49]\label{thm:generalized_min_finding} 
Let $m\in\mathbb{R}$ and $\delta_1 \in (0,1)$. Suppose we have a unitary $U$ that maps $\ket{0}\rightarrow \sum\limits_{\ell\in \intg_M}\sqrt{p_\ell}\ket{\psi_\ell}\ket{x_\ell}$, where the $\ket{\psi_\ell}$ are normalized states and the $x_\ell$ are real numbers satisfying $x_0<x_1<\cdots<x_{M-1}$, and define $X$ a random variable with $\Pr[X=x_\ell]=p_\ell$. Let $K$ be a natural number $\geq\frac{1000}{\sqrt{\Pr[X\leq m]}}\cdot \log(1/\delta_1)$. Then there exists a quantum algorithm that obtains a state $\ket{\psi_i}\ket{x_i}$ where $x_i\leq m$ with probability $\geq 
1-\delta_1$, using $K$ applications of $U$ and $U^\dagger$, and $\mathcal{\tilde{O}}(K)$ elementary gates.
\end{theorem}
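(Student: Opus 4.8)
The strategy is to turn $\tilde A$ into a state-preparation unitary of the form required by Theorem~\ref{thm:generalized_min_finding}, run the generalized minimum-finding procedure on it, and absorb the $\delta_2$-unreliability of $\tilde A$ by comparing against a nearby ``honest'' input unitary. \emph{Step 1 (building the input unitary).} Let $U$ act on $\ket{0}$ by first preparing the uniform superposition $\frac{1}{\sqrt d}\sum_{j\in\intg_d}\ket{j}$ ($\tilde{\mathcal{O}}(1)$ gates), then applying $\tilde A$ to get $\frac{1}{\sqrt d}\sum_j\ket{j}\ket{\Lambda_j}$, and finally copying the estimate register of $\ket{\Lambda_j}$ together with $j$ into a fresh ``value register''. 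If $\ket{\Lambda_j}=\sum_\lambda\alpha_{j,\lambda}\ket{\lambda}\ket{g_{j,\lambda}}$, the state becomes $\sum_{(\lambda,j)}\sqrt{p_{(\lambda,j)}}\,\ket{\psi_{(\lambda,j)}}\ket{(\lambda,j)}$ with $p_{(\lambda,j)}=|\alpha_{j,\lambda}|^2/d$; ordering the distinct pairs lexicographically ($\lambda$ first, $j$ breaking ties) embeds them into $\mathbb{R}$ for our finite-precision numbers, so this is the form $\sum_\ell\sqrt{p_\ell}\ket{\psi_\ell}\ket{x_\ell}$ with $x_0<x_1<\cdots$ needed by Theorem~\ref{thm:generalized_min_finding}. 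One call to $U$ (resp.\ $U^\dagger$) uses one call to $\tilde A$ (resp.\ $\tilde A^\dagger$) and $\tilde{\mathcal{O}}(1)$ gates.

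\emph{Step 2 (a nearby honest unitary).} Call a pair $(\lambda,j)$ \emph{good} if $|\lambda-v_j|\le\epsilon$, and let $\Pi$ project the value register onto the non-good pairs. The hypothesis gives $\|\Pi\ket{\Lambda_j}\|^2\le\delta_2$ for all $j$, hence $q:=\|\Pi U\ket0\|^2=\tfrac1d\sum_j\|\Pi\ket{\Lambda_j}\|^2\le\delta_2$. Put $\ket{\Phi}:=(I-\Pi)U\ket0/\sqrt{1-q}$; this unit vector is supported only on good pairs, and $U\ket0$ has overlap $\sqrt{1-q}$ with it. Pick a unitary $U'$ with $U'\ket0=\ket{\Phi}$ that agrees with $U$ outside the (at most two-dimensional) span of $\ket0$ and $U^{-1}\ket{\Phi}$, e.g.\ $U'=UR$ with $R$ a rotation by the corresponding angle in that plane; then $\|U-U'\|=\|I-R\|\le\sqrt{2q}\le\sqrt{2\delta_2}$. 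Note that $U'$ is a valid input for Theorem~\ref{thm:generalized_min_finding} (its induced distribution is supported on good pairs) but need not be implementable; it is used only in the analysis.

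\emph{Step 3 (correctness, and transfer from $U'$ to $U$).} The actual algorithm runs the Theorem~\ref{thm:generalized_min_finding} minimum-finding procedure on the implementable $U$ with $K=1000\sqrt d\log(1/\delta_1)$ applications of $U,U^\dagger$ (the procedure needs no threshold as input), measures the value register, and outputs the $j$-component. For the analysis, fix $k^\star\in\arg\min_k v_k$ and a real $m$ with $\{x_\ell\le m\}=\{(\lambda,j):\lambda\le v_{k^\star}+\epsilon\}$. A good pair with $j=k^\star$ has $\lambda\le v_{k^\star}+\epsilon$, so the distribution $X'$ of $\ket{\Phi}$ satisfies $\Pr[X'\le m]\ge\Pr_{X'}[j=k^\star]\ge(1-\delta_2)/d$; in the only non-vacuous regime, where $1000\log(1/\delta_1)\sqrt{2d\delta_2}<1$ and hence $\delta_2$ is tiny, this is $(1-o(1))/d$, so $K$ meets the requirement of Theorem~\ref{thm:generalized_min_finding}. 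Thus running the procedure on $U'$ returns, with probability $\ge1-\delta_1$, a pair $(\lambda,j)\le m$; being good it satisfies $\lambda\le v_{k^\star}+\epsilon$, whence $v_j\le\lambda+\epsilon\le\min_k v_k+2\epsilon$. Since the procedure calls its input unitary $K$ times, a standard hybrid argument shows that the output states on $U$ and on $U'$ are within $K\|U-U'\|\le1000\sqrt d\log(1/\delta_1)\cdot\sqrt{2\delta_2}=1000\log(1/\delta_1)\sqrt{2d\delta_2}$ in Euclidean norm, hence that close in total variation of measurement outcomes. Therefore the actual algorithm succeeds with probability $\ge1-\delta_1-1000\log(1/\delta_1)\sqrt{2d\delta_2}$, using $\le1000\sqrt d\log(1/\delta_1)$ calls to $\tilde A,\tilde A^\dagger$ and $\tilde{\mathcal{O}}(\sqrt d)$ further gates. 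The ``in particular'' clause follows by substituting $\delta_2\le\delta_1^2/(2000000\,d\log(1/\delta_1))$.

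\emph{Main obstacle.} The delicate point is that Theorem~\ref{thm:generalized_min_finding} is used as a black box whose amplitude-amplification internals fold many copies of the input unitary into reflection operators, so the $\delta_2$-error cannot be post-selected away after the fact. One must instead produce a genuinely honest unitary $U'$ that is $\mathcal{O}(\sqrt{\delta_2})$-close to $U$ in operator norm, prove correctness for $U'$ exactly, and pay only $\mathcal{O}(K\sqrt{\delta_2})$ to carry the conclusion back to the implementable $U$; lining the constants up so that $K\|U-U'\|$ matches the stated $1000\log(1/\delta_1)\sqrt{2d\delta_2}$ (and hence the ``$1-2\delta_1$'' corollary comes out) is routine but fiddly.
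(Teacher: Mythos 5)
The statement you were given, Theorem~\ref{thm:generalized_min_finding}, is a cited black-box result from van Apeldoorn et al.\ that the paper itself does not prove; your write-up instead proves Theorem~\ref{thm:min_finding_approx}, which \emph{invokes} that theorem as a subroutine and which the paper does prove in Appendix~\ref{app:proof_approx_min}. Comparing your argument against that proof: the overall route is the same. Both you and the paper run the generalized min-finding procedure on the implementable unitary built from $\tilde A$ and a Hadamard layer, introduce an ``honest'' unitary that prepares a state supported only on $\epsilon$-good estimates, show the two are within $\sqrt{2\delta_2}$ in operator norm via a rotation in a two-dimensional plane, and transfer the success guarantee by a $K\cdot\|U-U'\|$ hybrid argument to get the $1000\log(1/\delta_1)\sqrt{2d\delta_2}$ loss.

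The one place you diverge is in the construction of the honest unitary's output, and it introduces a small but real gap. You take $\ket{\Phi}=(I-\Pi)\tilde U\ket{0}/\sqrt{1-q}$, i.e.\ you project $\tilde U\ket0$ onto the good-pair subspace and renormalize \emph{globally}. This distorts the marginal on $j$: the weight on index $j$ becomes $p_j^\epsilon/(d(1-q))$ rather than $1/d$, so your lower bound is only $\Pr[X'\le m]\ge (1-\delta_2)/d$, and Theorem~\ref{thm:generalized_min_finding} therefore requires $K\ge 1000\sqrt{d}\log(1/\delta_1)/\sqrt{1-\delta_2}$, which is strictly larger than the stated $K=1000\sqrt{d}\log(1/\delta_1)$ for any $\delta_2>0$. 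You flag this and argue ``$\delta_2$ is tiny in the non-vacuous regime,'' but that does not make the black-box hypothesis literally hold, and since Theorem~\ref{thm:generalized_min_finding} is being used opaquely you cannot appeal to slack in its constants. The paper sidesteps this entirely: it defines the honest unitary to map $\ket0\ket0\mapsto\frac{1}{\sqrt d}\sum_j\ket j\ket{\Lambda_j^\epsilon}$, normalizing \emph{each} $\ket{\Lambda_j^\epsilon}$ individually so the marginal on $j$ stays exactly uniform, giving $\Pr[X\le m]\ge 1/d$ with no slack needed. That state still has overlap $\ge\sqrt{1-\delta_2}$ with $\tilde U\ket0\ket0$ (each branch contributes $\sqrt{p_j^\epsilon}\ge\sqrt{1-\delta_2}$), so the $\sqrt{2\delta_2}$ operator-norm bound and the rest of the hybrid bookkeeping go through unchanged. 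Replacing your $\ket\Phi$ by the paper's per-branch-normalized $\ket\psi$ closes the gap and recovers the theorem exactly as stated.
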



\begin{proof}[Proof of Theorem~\ref{thm:min_finding_approx}]
Without loss of generality, we assume $\log d$ is a natural number. Let $v^*=\min_{k\in\intg_d}v_k$. For every $j\in\intg_d$, we let $\ket{\Lambda_j}=\sqrt{p_j^\epsilon}\ket{\Lambda^\epsilon_j}+\sqrt{1-p_j^\epsilon}\ket{\Lambda^{\epsilon^\perp}_j}$, where $\ket{\Lambda^\epsilon_j}$ is the superposition over numbers that are $\epsilon$-approximations of $v_j$, $\ket{\Lambda^{\epsilon^\perp}_j}$ is the  superposition over numbers that are not $\epsilon$-approximations of $v_j$, and $p^\epsilon_j \geq 1-\delta_2$ for every $j\in \intg_d$. Suppose we have a unitary $A$ that maps $\ket{j}\ket{0}\rightarrow\ket{j}\ket{\Lambda^\epsilon_j}$ and let $U=A(H^{\otimes \log d}\otimes I)$. Then one can see that if we apply the algorithm of Theorem~\ref{thm:generalized_min_finding} with the unitary $U$, then after using $K=1000\sqrt{d}\cdot\log(1/\delta_1)\geq1000/\sqrt{\Pr[X\leq v^*+\epsilon]}\cdot\log(1/\delta_1)$ applications of $A$ and $A^\dagger$, and $\mathcal{\tilde{O}}(\sqrt{d})$ elementary gates, with probability$\geq 1-\delta_1$, the first outcome $\lambda$ of the second register satisfies $\lambda\leq v^*+\epsilon$. Note that if $\lambda\leq v^*+\epsilon$, then the corresponding state $\ket{\Phi}$ satisfies that after measuring in the computational basis, the outcome $j$ satisfies $v_j\leq v^*+2\epsilon$. Therefore one can find a $j$ such that $v_j\leq v^*+2\epsilon$.

By the deferred measurement principle, one can consider the algorithm above as applying the unitary $\mathcal{A}=U_0E_0U_1E_2\cdots U_{K-1}E_{K-1}$ to the state $\ket{0}\ket{0}$ and measuring in the computational basis to get an outcome, where $U_i\in\{U,U^\dagger\}$ and $E_i$ is a circuit of elementary gates. Let us consider $\tilde{U}=\tilde{A}(H^{\otimes \log d}\otimes I)$, and let 
\[
\ket{\tilde{\psi}}=\tilde{U}\ket{0}\ket{0}=\frac{1}{\sqrt{d}}\sum\limits_{j\in\intg_d}\ket{j}\ket{\Lambda_j}\mbox{ and } \ket{{\psi}}=\frac{1}{\sqrt{d}}\sum\limits_{j\in\intg_d}\ket{j}\ket{\Lambda_j^\epsilon}=\alpha\ket{\tilde{\psi}}+\beta\ket{\tilde{\psi}^\perp}. 
\]
Where is a positive real, $\alpha \geq \sqrt{1-\delta_2}$ because $p^\epsilon_j\geq 1-\delta_2$ for every $j$, and $\beta=\sqrt{1-\alpha^2}$. 

\begin{claim}
There exists a unitary $U$ such that $U\ket{0}\ket{0}=\ket{\psi}$ and $\|U-\tilde{U}\|_{op}\leq \sqrt{2\delta_2}$.
\end{claim}
\begin{claimproof}
Define a unitary $V$ such that 
\begin{itemize}
    \item $V\ket{\tilde{\psi}}=\ket{\psi}$.
    \item $V\ket{\tilde{\psi}^\perp}=-\beta\ket{\tilde{\psi}}+\alpha\ket{\tilde{\psi}^\perp}$.
    \item For every $\ket{\phi}$ orthogonal to span$\{\ket{\psi},\ket{\tilde{\psi}}\}$, $V\ket{\phi}=\ket{\phi}$.
\end{itemize}
Let $U=V\tilde{U}$. One can see that $U\ket{0}\ket{0}=V\tilde{U}\ket{0}\ket{0}=V\ket{\tilde{\psi}}=\ket{\psi}$. Also, if we consider orthonormal basis $\{\ket{\tilde{\psi}},\ket{\tilde{\psi^\perp}},\ket{\phi_2},\ket{\phi_3},\ldots,\ket{\phi_{d-1}}\}$, then $V$ will be $\begin{pmatrix}
\alpha  & -\beta\\
\beta  & \alpha
\end{pmatrix} \oplus I_{d-2}$, and hence
\begin{align*}
    \|U-\tilde{U}\|_{op}&=\|I-V\|_{op}=\|\begin{pmatrix}
1-\alpha  & \beta\\
-\beta  & 1-\alpha
\end{pmatrix}\|_{op}\\
    &=\max\limits_{\substack{a,b\in \mathbb{C},\\ \text{s.t. } |a|^2+|b|^2=1}} \sqrt{|a(1-\alpha)+b\beta|^2+ |b(1-\alpha)-a\beta|^2}\\
    &= \sqrt{|1-\alpha|^2+|\beta|^2} \leq \sqrt{2\delta_2}.
\end{align*}
This proves the claim.
\end{claimproof}

\smallskip

Now let us consider the unitary $\tilde{\mathcal{A}}=\tilde{U}_0E_0\tilde{U}_1E_2\cdots \tilde{U}_{K-1}E_{K-1}$ applying to the state $\ket{0}\ket{0}$ and measuring in the computational basis to get an outcome, where $\tilde{U}_i\in\{\tilde{U},\tilde{U}^\dagger\}$. Because $\|\tilde{\mathcal{A}}-{\mathcal{A}}\|_{op}\leq K\cdot \|U-\tilde{U}\|_{op}\leq1000\log(1/\delta_1)\sqrt{2d\delta_2}$, with probability$\geq 1-\delta_1-1000\log(1/\delta_1)\sqrt{2d\delta_2}$, the first outcome $\lambda$ of the second register of $\tilde{\mathcal{A}}\ket{0}\ket{0}$ also satisfies $\lambda\leq v^*+\epsilon$, and hence we also find a $j$ such that $v_j\leq v^*+2\epsilon$ by the same arguments as above.
\end{proof}

\section{Classical lower bound for Lasso}\label{app:classicalLBforLasso}

In this section, we will give a classical lower bound for Lasso. Here we first introduce some tools we will use. The first one is the \emph{hypergeometric distribution} $Hyp(N,L,m)$ with parameters $N$, $L$, and $m$: the distribution of the number of marked balls drawn
when $m$ balls are drawn without replacement from a set of $n$ balls, $L$ of which are marked and $N-L$ are unmarked. On the other hand, the $\emph{binomial distribution}$ $Bin(m,L/N)$ is the distribution on the number of marked balls drawn with replacement. Holmes showed that when $m$ is small enough, the total variation distance between those two distribution will be very small.

\begin{theorem}[\cite{Holmes03}, Theorem~3.1]\label{thm:hyptobin}
Let $N,L,m \in \mathbb{N}$. If $N \geq L \geq m$, then
\[
d_{TV}( Hyp(N, L, m), Bin(m, L/N))\leq (m - 1)/(N - 1)
\]
\end{theorem}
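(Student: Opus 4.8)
The plan is to route through the classical fact that a hypergeometric variable is a conditioned binomial. Let $p=L/N$, $q=1-p$, take i.i.d.\ $\mathrm{Ber}(p)$ variables $Y_1,\dots,Y_N$, and set $U=\sum_{i\le m}Y_i$, $V=\sum_{i\le N}Y_i$; thus $U\sim Bin(m,p)$, $V\sim Bin(N,p)$, and $V-U$ is independent of $U$ with law $Bin(N-m,p)$. Conditioning on $V=L=\E V$ makes $(Y_1,\dots,Y_N)$ a uniformly random binary string with exactly $L$ ones, so the conditional law of $U$ is exactly $Hyp(N,L,m)$. Bayes' rule then gives $\Pr[U=k\mid V=L]=\Pr[U=k]\,r_k$ with $r_k:=\Pr[V=L\mid U=k]/\Pr[V=L]=\Pr[V-U=L-k]/\Pr[V=L]$, whence
\[
d_{TV}\bigl(Hyp(N,L,m),\,Bin(m,p)\bigr)=\tfrac12\sum_k\Pr[U=k]\,\bigl|r_k-1\bigr|=\tfrac12\,\E\bigl|r_U-1\bigr|,
\]
and automatically $\E[r_U]=1$ since both $\{\Pr[U=k]r_k\}_k$ and $\{\Pr[U=k]\}_k$ sum to $1$.

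So the task reduces to showing $\E|r_U-1|\le 2(m-1)/(N-1)$. The point is that $g(k):=\Pr[V-U=L-k]$ is a shifted $Bin(N-m,p)$ probability mass function: it is log-concave and unimodal with mode within $O(1)$ of $mp=\E U$, while $U$ itself is concentrated around $mp$ on a scale $\sqrt{mpq}$ that is small compared with the width $\sqrt{(N-m)pq}$ of $g$. From the explicit ratio $g(k+1)/g(k)=\tfrac{q}{p}\cdot\tfrac{L-k}{\,N-m-L+k+1\,}$ and the analogous ratio for $\Pr[U=k]$ one checks that $r_{k+1}/r_k$ is monotone, so $\{k:r_k\ge1\}$ is an interval $I^\ast$ and, by Scheffé's identity, $d_{TV}=\Pr[H\in I^\ast]-\Pr[U\in I^\ast]$ for $H\sim Hyp(N,L,m)$. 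The plan is then to expand $g$ around its mode to second order; after normalising by $\Pr[V=L]=\E[g(U)]$ this yields $r_U=1+\bigl(mpq-(U-mp)^2\bigr)/\bigl(2(N-m)pq\bigr)+(\text{lower order})$, and bounding $\E\bigl|mpq-(U-mp)^2\bigr|$ via the second and fourth central moments of $Bin(m,p)$ together with $pq\le\tfrac14$ collapses everything into the stated constant $(m-1)/(N-1)$.

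The step I expect to be the main obstacle is making this second-order expansion quantitative \emph{uniformly} over the whole range $m\le L\le N$: when $m$ is comparable to $N$ the Gaussian/Taylor picture degrades, and one must instead argue globally from log-concavity about the profile of $r_k$ and the width of the Scheffé interval. In the present paper the bound is only ever invoked with $m=\tilde O(1/\epsilon^2)$ while $N$ is a free parameter that may be taken arbitrarily large, so only the regime $m\ll N$ is needed and there the expansion is routine. As a cruder but fully self-contained alternative, one can couple the two sampling schemes directly — draw $m$ balls with replacement and re-draw each collision from the still-unused balls — so that the marked counts agree unless some collision flips a marked status; this occurs with probability at most $\binom m2/N$, giving the weaker bound $\binom m2/N$, which matches $(m-1)/(N-1)$ up to the factor $m/2$ and suffices for most applications.
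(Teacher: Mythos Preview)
The paper does not prove this statement at all: it is quoted verbatim as Theorem~3.1 of \cite{Holmes03} and used as a black box in the classical lower-bound appendix. So there is no ``paper's own proof'' to compare against.

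Your sketch is the standard conditioned-binomial route (essentially Diaconis--Freedman): write $Hyp(N,L,m)$ as the law of a partial sum of i.i.d.\ Bernoullis conditioned on the full sum, then control the likelihood ratio $r_U$. You correctly identify that turning this into the exact constant $(m-1)/(N-1)$ uniformly in the parameters is where the work lies, and you don't actually carry it out; the coupling fallback you offer gives only $\binom{m}{2}/N$, which is weaker by a factor of roughly $m/2$. For what it's worth, Holmes's proof goes by Stein's method for birth--death chains: one realises both $Hyp(N,L,m)$ and $Bin(m,L/N)$ as stationary distributions of explicit birth--death chains on $\{0,\dots,m\}$, and the Stein equation for the hypergeometric chain yields the total-variation bound with the clean constant directly, without any asymptotic expansion. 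That machinery is what buys the sharp $(m-1)/(N-1)$; your outline would need substantially more care to match it.
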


We also use another distance between probability distributions.

\begin{definition}
Given two discrete probability distributions $\mathcal{P}$, $\mathcal{Q}$ over $\intg_N$, the Hellinger distance $d_H$ between $\mathcal{P}$ and $\mathcal{Q}$ is defined as
\[
d_H(\mathcal{P}, \mathcal{Q})\coloneqq \sqrt{\frac{1}{2}\sum\limits_{i
\in \intg_N} (\sqrt{\mathcal{P}_i}- \sqrt{\mathcal{Q}_i})^2}=\sqrt{1-\sum\limits_{i
\in \intg_N}\sqrt{\mathcal{P}_i\mathcal{Q}_i}}.
\]
\end{definition}

From the definition above we also have the following property for product distributions:
\begin{equation}\label{eq:Hellingerprod}
d^2_H(\mathcal{P}^{\otimes m}, \mathcal{Q}^{\otimes m})=1-(\sum\limits_{i
\in \intg_N}\sqrt{\mathcal{P}_i\mathcal{Q}_i})^m=1-(1-d^2_H(\mathcal{P}, \mathcal{Q}))^m\leq m \cdot d^2_H(\mathcal{P}, \mathcal{Q}).
\end{equation}
The following lemma bridges the Hellinger distance and total variation distance.

\begin{lemma}\label{lem:dH_to_dTV}
For arbitrary discrete probability distributions $\mathcal{P}$, $\mathcal{Q}$ over $\intg_N$, we have
\[
d^2_H(\mathcal{P}, \mathcal{Q}) \leq d_{TV}(\mathcal{P}, \mathcal{Q}) \leq \sqrt{2}\,d_H(\mathcal{P}, \mathcal{Q}).
\]
\end{lemma}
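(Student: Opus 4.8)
The plan is to prove the two inequalities separately by elementary manipulations. Throughout I would abbreviate $a_i=\sqrt{\mathcal{P}_i}$ and $b_i=\sqrt{\mathcal{Q}_i}$, so that $\sum_{i\in\intg_N}a_i^2=\sum_{i\in\intg_N}b_i^2=1$ and $\sum_{i\in\intg_N}a_ib_i=1-d_H^2(\mathcal{P},\mathcal{Q})$. I would use the two equivalent forms of the Hellinger distance given in the definition, namely $d_H^2(\mathcal{P},\mathcal{Q})=\frac12\sum_i(a_i-b_i)^2$, and I would rewrite the total variation distance as
\[
d_{TV}(\mathcal{P},\mathcal{Q})=\frac12\sum_{i\in\intg_N}\lvert a_i^2-b_i^2\rvert=\frac12\sum_{i\in\intg_N}\lvert a_i-b_i\rvert\,(a_i+b_i),
\]
which is valid because $a_i+b_i\ge 0$.

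For the lower bound $d_H^2\le d_{TV}$, the key observation is that $a_i,b_i\ge0$ gives $\lvert a_i-b_i\rvert\le a_i+b_i$, hence $(a_i-b_i)^2\le\lvert a_i-b_i\rvert\,(a_i+b_i)=\lvert\mathcal{P}_i-\mathcal{Q}_i\rvert$ term by term. Summing over $i$ and dividing by $2$ yields $d_H^2(\mathcal{P},\mathcal{Q})\le d_{TV}(\mathcal{P},\mathcal{Q})$.

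For the upper bound $d_{TV}\le\sqrt2\,d_H$, I would apply the Cauchy--Schwarz inequality to the factored sum above:
\[
\sum_{i\in\intg_N}\lvert a_i-b_i\rvert\,(a_i+b_i)\le\Big(\sum_i(a_i-b_i)^2\Big)^{1/2}\Big(\sum_i(a_i+b_i)^2\Big)^{1/2}.
\]
The first factor equals $\sqrt2\,d_H(\mathcal{P},\mathcal{Q})$, and the second equals $\big(2+2\sum_i a_ib_i\big)^{1/2}=(4-2d_H^2)^{1/2}\le2$. Multiplying these and dividing by $2$ gives $d_{TV}\le\sqrt2\,d_H$. There is no real obstacle here — this is a classical fact — and the only point requiring care is the bookkeeping of the normalization constants and the identity $\sum_i\sqrt{\mathcal{P}_i\mathcal{Q}_i}=1-d_H^2$ used to bound the second Cauchy--Schwarz factor by $2$. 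I would present the lower bound first, as it is a one-line term-wise comparison, and then give the Cauchy--Schwarz argument for the upper bound.
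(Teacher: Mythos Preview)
Your proposal is correct and follows essentially the same route as the paper: the termwise bound $(a_i-b_i)^2\le\lvert a_i-b_i\rvert(a_i+b_i)$ for the lower inequality, and Cauchy--Schwarz on $\sum_i\lvert a_i-b_i\rvert(a_i+b_i)$ together with $\sum_i(a_i+b_i)^2=4-2d_H^2\le4$ for the upper inequality. The only cosmetic difference is that the paper carries the factor $\tfrac12$ through the Cauchy--Schwarz step to arrive at $d_H\sqrt{2-d_H^2}$ before bounding by $\sqrt2\,d_H$, whereas you bound the second factor by $2$ first.
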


\begin{proof}
First we prove the left inequality:
\begin{align*}
d^2_H(\mathcal{P}, \mathcal{Q}) &=\frac{1}{2}\sum\limits_{i
\in \intg_N} (\sqrt{\mathcal{P}_i}- \sqrt{\mathcal{Q}_i})^2\leq \frac{1}{2}\sum\limits_{i
\in \intg_N} (|\sqrt{\mathcal{P}_i}- \sqrt{\mathcal{Q}_i}|)(\sqrt{\mathcal{P}_i}+ \sqrt{\mathcal{Q}_i}) \\
&= \frac{1}{2}\sum\limits_{i
\in \intg_N} | \mathcal{P}_i-\mathcal{Q}_i | = d_{TV}(\mathcal{P}, \mathcal{Q}). 
\end{align*}
The right inequality follows using Cauchy-Schwarz:
\begin{align*}
    d_{TV}(\mathcal{P}, \mathcal{Q}) &=\frac{1}{2}\sum\limits_{i
\in \intg_N} |\sqrt{\mathcal{P}_i}- \sqrt{\mathcal{Q}_i}|(\sqrt{\mathcal{P}_i}+ \sqrt{\mathcal{Q}_i}) \\
&\leq \frac{1}{2}\sqrt{\sum\limits_{i\in\intg_N}|\sqrt{\mathcal{P}_i}- \sqrt{\mathcal{Q}_i}|^2}\sqrt{\sum\limits_{i\in\intg_N}|\sqrt{\mathcal{P}_i}+ \sqrt{\mathcal{Q}_i}|^2}\\
&=d_H(\mathcal{P}, \mathcal{Q})\cdot \sqrt{1+\sum\limits_{i\in\intg_N}\sqrt{\mathcal{P}_i\mathcal{Q}_i}}=d_H(\mathcal{P}, \mathcal{Q})\cdot\sqrt{2-d^2_H(\mathcal{P}, \mathcal{Q})}\leq\sqrt{2}\,d_H(\mathcal{P}, \mathcal{Q}).
\end{align*}
\end{proof}

After introducing the above tools, we are ready to prove the following result.

\begin{theorem}\label{thm:Ham_distinguisher}
Let $m\in \intg_+$, $p\in (0,0.5)$, and $\mathcal{P},\mathcal{Q}$ be two hypergeometric distributions $Hyp(N,N/2,m)$ and $Hyp(N,N/2+pN,m)$ respectively. Then we have $d_{TV}(\mathcal{P},\mathcal{Q}) \leq 2(m-1)/(N-1)+ p\sqrt{3m}$. 
\end{theorem}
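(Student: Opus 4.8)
The plan is to bound $d_{TV}(\mathcal{P},\mathcal{Q})$ by inserting the two corresponding binomial distributions as intermediate points and using the triangle inequality: $d_{TV}(\mathcal{P},\mathcal{Q}) \leq d_{TV}(\mathcal{P}, \Beta_1) + d_{TV}(\Beta_1,\Beta_2) + d_{TV}(\Beta_2,\mathcal{Q})$, where $\Beta_1 = Bin(m,1/2)$ and $\Beta_2 = Bin(m,1/2+p)$. The first and third terms are each hypergeometric-to-binomial distances, handled directly by Theorem~\ref{thm:hyptobin}: provided $N \geq N/2+pN \geq m$ (which holds in the regime of interest since $m$ is small), each is at most $(m-1)/(N-1)$, contributing $2(m-1)/(N-1)$ in total. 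So the whole problem reduces to bounding the middle term, the total variation distance between two binomials with the same number of trials $m$ but success probabilities differing by $p$.

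For the middle term I would route through the Hellinger distance, using Lemma~\ref{lem:dH_to_dTV} to get $d_{TV}(\Beta_1,\Beta_2) \leq \sqrt{2}\, d_H(\Beta_1,\Beta_2)$, and then the product-distribution tensorization bound~\eqref{eq:Hellingerprod}, which gives $d_H^2(\Beta_1,\Beta_2) \leq m\cdot d_H^2(\mathrm{Ber}(1/2),\mathrm{Ber}(1/2+p))$ since $Bin(m,q)$ is the $m$-fold product of $\mathrm{Ber}(q)$. Thus $d_{TV}(\Beta_1,\Beta_2) \leq \sqrt{2m}\, d_H(\mathrm{Ber}(1/2),\mathrm{Ber}(1/2+p))$. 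It then remains to compute the single-coin Hellinger distance: $d_H^2 = 1 - \sqrt{\tfrac12\cdot(\tfrac12+p)} - \sqrt{\tfrac12\cdot(\tfrac12-p)} = 1 - \tfrac{1}{\sqrt2}\left(\sqrt{\tfrac12+p}+\sqrt{\tfrac12-p}\right)$. Writing $\sqrt{1/2+p}+\sqrt{1/2-p} = \sqrt{1 + 2\sqrt{1/4-p^2}}$ and expanding, one finds $d_H^2(\mathrm{Ber}(1/2),\mathrm{Ber}(1/2+p))$ is of order $p^2$; a clean elementary estimate (e.g.\ via $\sqrt{1/4-p^2} \geq 1/4 - \text{something}\cdot p^2$ and $\sqrt{1+x}\geq 1+x/2-x^2/2$, or simply a concavity argument) should yield $d_H^2 \leq \tfrac32 p^2$ in the range $p\in(0,1/2)$. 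Plugging this in gives $d_{TV}(\Beta_1,\Beta_2) \leq \sqrt{2m}\cdot\sqrt{3/2}\,p = p\sqrt{3m}$, and combining with the two hypergeometric-to-binomial terms yields the claimed bound $d_{TV}(\mathcal{P},\mathcal{Q}) \leq 2(m-1)/(N-1) + p\sqrt{3m}$.

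The main obstacle is the last step: getting the single-coin Hellinger distance bounded by exactly $\tfrac32 p^2$ (so that the final constant comes out as $\sqrt3$) rather than by some larger constant times $p^2$. This is a routine but slightly delicate elementary inequality — one has to be careful that the bound $d_H^2 \leq \tfrac32 p^2$ actually holds over the whole interval $p \in (0,1/2)$ and not just asymptotically as $p\to 0$, since near $p=1/2$ the function $1 - \tfrac{1}{\sqrt2}\sqrt{1+2\sqrt{1/4-p^2}}$ approaches $1 - 1/\sqrt2 \approx 0.293$ while $\tfrac32 p^2$ at $p=1/2$ is $3/8 = 0.375$, so the inequality is comfortably true at the endpoint; checking monotonicity or convexity of the ratio in between makes it rigorous. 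Everything else (the triangle inequality, the hypothesis check $N/2+pN\le m$ fails for large $N$ — actually in the relevant regime $m$ is small so we need to double-check the direction of Theorem~\ref{thm:hyptobin}'s hypothesis, but that is bookkeeping) is immediate from the tools already assembled in this appendix.
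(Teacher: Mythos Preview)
Your proposal is correct and follows essentially the same route as the paper: triangle inequality through the two binomials, Theorem~\ref{thm:hyptobin} for the hypergeometric-to-binomial terms, then Lemma~\ref{lem:dH_to_dTV} and the tensorization bound~\eqref{eq:Hellingerprod} for the binomial-to-binomial middle term. For the single-coin Hellinger inequality, the paper dispatches your ``main obstacle'' by directly asserting the elementary bound $\sqrt{1+2p}+\sqrt{1-2p}\geq 2-3p^2$ for $p\in(0,1/2)$, which is equivalent to your $d_H^2\leq\tfrac32 p^2$ and can be verified by squaring.
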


\begin{proof}
Let $\mathcal{P'},\mathcal{Q'}$ be the binomial distributions $Bin(m,1/2)$ and $Bin(m,1/2+p)$ respectively. By the triangle inequality, Theorem~\ref{thm:hyptobin}, and Lemma~\ref{lem:dH_to_dTV}, we have
\begin{align*}
    d_{TV}(\mathcal{P},\mathcal{Q}) &\leq d_{TV}(\mathcal{P'},\mathcal{P})+ d_{TV}(\mathcal{P'},\mathcal{Q'})+ d_{TV}(\mathcal{Q'},\mathcal{Q}) \\
    &\leq (m-1)/(N-1) + d_{TV}(\mathcal{P'},\mathcal{Q'}) + (m-1)/(N-1)\\
    &\leq 2(m-1)/(N-1)+\sqrt{2}\,d_{H}(\mathcal{P'},\mathcal{Q'}).
\end{align*}
Suppose $\textbf{p}'$ and $\textbf{q}'$ are Bernoulli distributions with mean $1/2$ and $1/2+p$ respectively, then using Eq.~\eqref{eq:Hellingerprod}
we have
\begin{align*}
    d_{H}(\mathcal{P'},\mathcal{Q'})\leq d_H(\textbf{p}'^{\otimes m}, \textbf{q}'^{\otimes m}) \leq \sqrt{m}\cdot d_H(\textbf{p}', \textbf{q}') \leq \sqrt{m}\cdot \sqrt{1-\frac{1}{2}(\sqrt{1+2p}+\sqrt{1-2p})}\leq p\sqrt{\frac{3m}{2}},
\end{align*}
where the last inequality holds because $\sqrt{1+2p}+\sqrt{1-2p}\geq 2-3p^2$ for $p\in (0,0.5)$. 
Combining the above two results, we have
\[
d_{TV}(\mathcal{P},\mathcal{Q}) \leq 2(m-1)/(N-1)+\sqrt{2}d_{H}(\mathcal{P'},\mathcal{Q'}) \leq 2(m-1)/(N-1)+ p\sqrt{3m}. 
\]
\end{proof}



Using the above theorem, we can show a classical query lower bound for the \emph{exact set-finding problem} $\mbox{ESF}_{d,w,p,N}$, which is the following: given a matrix $X\in\{-1,1\}^{N\times d}$ where each column-sum is either $2pN$ or $0$, the goal is to find the set $W$ (of size $w$ or $w-1$) of indices of the columns whose entries add up to $2pN$. The following theorem gives a classical query lower bound for this problem.

\begin{theorem}\label{thm:ESF_LB}
Let $N\in 2\intg_+$, $p\in (1/\sqrt{N},0.5)$ be an integer multiple of $1/N$, and $w=1/(2p)$. Given a matrix $X \in \{-1,1\}^{N\times d}$ such that there exists a set $W\subset \intg_d$ with size $w$ or $w-1$ satisfying \begin{itemize}
    \item for every $j \in W$, $\sum\limits_{i\in\intg_N} X_{ij}=2pN$;
    \item for every $j' \in \overline{W}$, $\sum\limits_{i\in\intg_N} X_{ij'}= 0$.
\end{itemize}
Suppose we have classical query access to $X$. Then every classical algorithm that computes $W$ with success probability $\geq 1-1/100$ uses $\Omega((d-w)/p^2)$ queries.
\end{theorem}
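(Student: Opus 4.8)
The plan is to reduce the Hamming-weight distinguishing problem $\mbox{HD}_{\frac{N}{2},N(\frac{1}{2}+p)}$ (on a single column) to the exact set-finding problem, and then boost this to a lower bound on the full problem by an information-theoretic argument over the $d-w$ ``free'' columns. Concretely, I would consider a hard distribution on inputs $X$: fix the set $W$ to be, say, $\{0,1,\dots,w-1\}$ (these columns always have column-sum $2pN$), and for each index $j'\in\overline{W}$ independently flip a fair coin $b_{j'}\in\{0,1\}$; if $b_{j'}=1$ put column $j'$ in the hidden set (column-sum $2pN$, so effectively $|W|$ grows) and if $b_{j'}=0$ leave its column-sum $0$. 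Wait — more cleanly: the set $W$ to be found has size $w$ or $w-1$, so I would instead randomize \emph{which} of the $d-w$ candidate columns is the ``special'' one (or use the two allowed sizes to encode one extra bit per free column up to the size budget). The key point: an algorithm that finds $W$ exactly must, in particular, correctly decide for each column $j'$ whether its column-sum is $0$ or $2pN$, i.e.\ solve a Hamming-weight distinguishing instance on that column.

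The main technical step is the per-column indistinguishability bound. If a classical algorithm makes only $q$ queries total into the $N\times d$ matrix, then on a ``typical'' free column it makes few queries, and by Theorem~\ref{thm:Ham_distinguisher} the distributions of what it sees in the two cases (column-sum $0$ vs $2pN$) are close in total variation. Here I would use that querying $m$ entries of a column with a fixed number of $+1$'s yields a sample from a hypergeometric distribution $Hyp(N,N/2,m)$ or $Hyp(N,N/2+pN,m)$, and Theorem~\ref{thm:Ham_distinguisher} gives
\[
d_{TV}\big(Hyp(N,N/2,m),\,Hyp(N,N/2+pN,m)\big)\leq \frac{2(m-1)}{N-1}+p\sqrt{3m}.
\]
For this to be bounded below $1/3$ (say), using $p>1/\sqrt N$ so the first term is dominated, we need $m=\Omega(1/p^2)$ queries into that single column. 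Since there are $d-w$ free columns, each of which must be correctly classified with good probability, a standard averaging/union argument (or a direct hybrid argument over columns, as in Yao's lemma) forces the total number of queries to be $\Omega((d-w)/p^2)$: if $q\ll (d-w)/p^2$, then for an average free column the algorithm queries $o(1/p^2)$ entries, hence misclassifies that column with constant probability, hence fails to output $W$ exactly with high probability.

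To make the averaging rigorous I would fix the algorithm's (w.l.o.g.\ deterministic, by Yao) query pattern and observe that the expected number of columns receiving $\geq c/p^2$ queries is at most $q p^2/c$; for all other columns the output-bit for that column is wrong with probability bounded below by a constant, and these events are independent across columns under the product hard distribution, so the probability of getting \emph{every} column right is exponentially small unless $q=\Omega((d-w)/p^2)$. The one subtlety to handle carefully is that the algorithm adaptively chooses which columns to query, and it is allowed to ``spend'' its budget unevenly — but this only helps it on a vanishing fraction of columns, which is exactly what the counting bound $q p^2/c$ controls; the bulk of the columns remain near-uniformly uninformed. I expect this adaptivity bookkeeping, together with correctly juggling the two allowed sizes $w$ and $w-1$ of $W$ in the definition of the hard distribution, to be the main obstacle, though it is a routine-if-fiddly hybrid argument rather than a conceptual difficulty; the conditional on $p>1/\sqrt N$ is precisely what keeps the $2(m-1)/(N-1)$ term from spoiling the bound.
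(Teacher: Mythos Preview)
Your overall strategy---averaging over the $d-w$ non-$W$ columns and invoking the hypergeometric distinguishing bound of Theorem~\ref{thm:Ham_distinguisher}---is the same as the paper's. But the execution in your last paragraph has a real gap. You write ``these events are independent across columns under the product hard distribution,'' yet there is no valid product hard distribution here: the promise $|W|\in\{w-1,w\}$ lets you toggle the status of at most \emph{one} free column, not all $d-w$ of them independently. So the argument ``if most columns receive $o(1/p^2)$ queries then each is misclassified with constant probability, and by independence the success probability is exponentially small'' does not go through. You noticed the size constraint earlier and switched to ``randomize which single column is special,'' but then reverted to the product picture in the analysis.

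The paper resolves this by a single-column reduction rather than a direct-sum argument. Fix $W=\{1,\ldots,w-1\}$ and put the uniform distribution $\mathcal{D}_W$ on column contents subject to the prescribed sums. If the algorithm has worst-case query complexity $T$, then averaging gives some $k\notin W$ with expected number of queries $T_k\le T/(d-w+1)$ to column~$k$ under $\mathcal{D}_W$. The step you are missing is a \emph{Markov truncation}: run $\mathcal{A}$ but stop after $100T_k$ queries to column~$k$; under $\mathcal{D}_W$ this early-stop happens with probability $\le 1/100$, so the truncated algorithm still outputs $W$ with probability $\ge 0.98$, while under $\mathcal{D}_{W\cup\{k\}}$ it outputs $W\cup\{k\}$ with probability $\ge 0.99$. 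This truncated algorithm therefore distinguishes column-sum $0$ from $2pN$ in column~$k$ using at most $100T/(d-w+1)$ queries to that column, and now Theorem~\ref{thm:Ham_distinguisher} forces $T=\Omega((d-w)/p^2)$. Note this uses the size flexibility $\{w-1,w\}$ exactly once (for the single column $k$), and the Markov step is precisely what handles the adaptivity issue you flagged---you do not need a hybrid over all columns.
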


\begin{proof}
Let $W=\{1,\ldots,w-1\}$ and let $\mathcal{D}_W$ be the distribution on the input~$X$ where each column of $X$ is chosen uniformly at random subject to the column sums as specified in the theorem: if $j\in W$ then the $j$th column-sum is $2pN$, and if $j\not \in W$ then the $j$th column-sum is~0. 


Let $\mathcal{A}$ be a randomized classical algorithm with worst-case query complexity~$T$ that computes the hidden set with error probability $\leq 1/100$ for  every input. 
Let random variable $t_j$ be the number of queries that algorithm~$\mathcal{A}$ makes in the $j$th $N$-bit string (i.e., to entries in the $j$th column of $X$) under~$\mathcal{D}_W$, and $T_j$ be the expectation of $t_j$ under~$\mathcal{D}_W$. Because $\sum_{j\in\intg_d}T_j\leq T$, there 
must be a $k\in \intg_d\setminus{W}$ such that $T_{k}\leq T/(d-w+1)$. We use algorithm $\mathcal{A}$ to prove the following claim:

\bigskip

\begin{claim}
There exists a classical randomized algorithm with worst-case query complexity $\leq 100T/(d-w+1):=\tau$ in the $k$-th column that distinguishes $\mathcal{D}_W$ from $\mathcal{D}_{W\cup \{k\}}$ with success probability $\geq 0.98$.
\end{claim}

\medskip

\begin{claimproof}
Let $\mathcal{B}$ be the following randomized classical algorithm: run $\mathcal{A}$ until the number of queries in $k$-th column is $\geq \tau$. If $\mathcal{A}$ outputs $W$, then we output that; if $\mathcal{A}$ does not output $W$ or did not terminate within $\tau$ queries, then we output $W\cup \{k\}$.

We here prove the correctness of algorithm~$\mathcal{B}$. If we run $\cal B$ on input distribution $\mathcal{D}_W$, then the probability that $\cal A$ (run all the way until it terminates) does not output~$W$, is $\leq 1/100$. By Markov's inequality, the probability (still under distribution $\mathcal{D}_W$) that $\cal A$ did not terminate within $\tau$ queries, $\leq 1/100$. Hence by the union bound, the probability that $\cal B$ does not output $W$ is $\leq 2/100$. 

If, on the other hand, we run $\cal B$ on input distribution is $\mathcal{D}_{W\cup \{k\}}$, then the probability that $\cal B$ outputs the correct set $W\cup\{k\}$ is lower bounded by the probability that $\cal A$ outputs  $W\cup\{k\}$, because we defined $\cal B$ to output $W\cup\{k\}$ when $\cal A$ did not already terminate. 
Since $\cal A$ has success probability $\geq 99/100$, the probability (under $\mathcal{D}_{W\cup \{k\}}$) that $\cal B$ outputs $W\cup\{k\}$ is  $\geq 99/100$.
\end{claimproof}

\bigskip


Because algorithm $\mathcal{B}$ decides with success probability $\geq 98/100$ whether $k$ is in the hidden set or not, it has to distinguish (in the $k$th column of $X$) a uniformly random column with column-sum 0 from a uniformly random column with column-sum $2pN$ with success probability $\geq 98/100$. Hence we must have
\begin{align*}
    \Omega(1)  
    \leq \frac{2(100T_{k}-1)}{N-1}+p\sqrt{300T_k}
    \leq \frac{200T}{(d-w+1)(N-1)}+p\sqrt{\frac{300T}{(d-w+1)}},
\end{align*}
where 
the first inequality follows by Theorem~\ref{thm:Ham_distinguisher} 
 and the last one follows because 
$T_{k}\leq T/(d-w+1)$.
Rearranging implies $T =\Omega((d-w)/p^2)$.
\end{proof}

The last step towards our lower bound for classical Lasso-solvers is to show that one can solve the exact set-finding problem using an approximate Lasso-solver, as follows.

\begin{algorithm}[hbt]
\SetKwData{Left}{left}\SetKwData{This}{this}\SetKwData{Up}{up}
\SetKwFunction{Union}{Union}\SetKwFunction{FindCompress}{FindCompress}
\SetKwInOut{Input}{input}\SetKwInOut{Output}{output}
\Input{Algorithm $\mathcal{A}$ that outputs (with probability $\geq 1-1/(20000\log d)$) a $p/4000$-minimizer  for Lasso with respect to $\mathcal{D}_{p,W}$ for every possible $W\subset \intg_d$ of size $w$ or $w-1$, using $M$ samples; $X\in \{-1,1\}^{N\times d}$ a valid input for $\mbox{ESF}_{d,w,p,N}$;} 
\SetAlgoLined
      \For{$u=0$ to $U-1=\lceil 100\log d\rceil-1$}{
      GENERATE a permutation $\pi\in S_d$ uniformly at random\;
      GENERATE $R\in \intg_N^{M\times d}$ such that all its entries are i.i.d.\ samples from $\cU_N$\;
      Let $X'\in\{-1,1\}^{M \times d}$ be $X_{ij}'=X_{R_{i\pi(j)}\pi(j)}$; let $X'_m$ denote its $m$th row\;
  RUN Algorithm~$\mathcal{A}$ with inputs $(X'_1,1), (X'_2,1),\ldots, (X'_M,1)$ and let $W'_u\subseteq\intg_d$ be the set of indices of entries of the output of Algorithm~$\mathcal{A}$ whose absolute value is $\geq 2p/3$\;
  STORE $W_u=\pi^{-1}(W'_u)$\;
  }
  \Output {$\tilde{W}=\{j\in \intg_d : j \mbox{ is included in at least half of the sets } W_0,\ldots, W_{U-1}\}$;}
 \caption{Solve $\mbox{ESF}_{d,w,p,N}$ using a $p/4000$-Lasso solver $\mathcal{A}$}
 \label{Alg:ESFtoLasso}
\end{algorithm}

\begin{theorem}
Let $N\in 2\intg_+$, $p\in (1/\sqrt{N},0.25)$ be an integer multiple of $1/N$, and $w=1/(2p)$. Suppose $\mathcal{A}$ is an algorithm that finds a $p/4000$-minimizer for Lasso with respect to $\mathcal{D}_{p,W}$ with probability $\geq 1-1/(20000\log d)$ for every $W\subset \intg_d$ with size $w$ or $w-1$. Then Algorithm~\ref{Alg:ESFtoLasso} outputs the correct answer for $\mbox{ESF}_{d,w,p,N}$ with success probability $\geq 99/100$. 
\end{theorem}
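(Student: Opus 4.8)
The plan is to show that, with probability $\ge 99/100$ over all the internal randomness of Algorithm~\ref{Alg:ESFtoLasso} (the permutations $\pi$, the resampling matrices $R$, and the coins of $\mathcal A$), the majority vote produces $\tilde W = W$: namely, for \emph{every} coordinate $j\in\intg_d$ the sets $W_0,\ldots,W_{U-1}$ agree with the truth value of ``$j\in W$'' in more than half of the $U=\lceil 100\log d\rceil$ rounds. A union bound over the $d$ coordinates reduces this to a per-coordinate statement, and a Chernoff bound over the $U$ independent rounds reduces that further to bounding, for a fixed $j_0$, the probability that a single round misclassifies $j_0$.

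First I would analyze one round. Conditioned on the permutation $\pi$ drawn in that round, the freshly sampled matrix $R$ turns the $j$-th column of $X'$ into a with-replacement resample of the $\pi(j)$-th column of $X$, whose empirical bias towards $+1$ is exactly $1/2+p$ if $\pi(j)\in W$ and exactly $1/2$ otherwise; hence the rows $(X'_m,1)$ are i.i.d.\ samples from $\mathcal D_{p,\pi^{-1}(W)}$ (this is the permuted form of the worst-case to average-case reduction of Theorem~\ref{thm:worst_to_average}). Since $|\pi^{-1}(W)|=|W|\in\{w,w-1\}$, choosing $\epsilon$ with $\lfloor 1/\epsilon\rfloor=w=1/(2p)$ makes $p=1/(2\lfloor 1/\epsilon\rfloor)$ as required, so $\mathcal A$ applies: with probability $\ge 1-1/(20000\log d)$ it returns a $p/4000$-minimizer $\theta$ of $L_{\mathcal D_{p,\pi^{-1}(W)}}$. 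By Theorem~\ref{thm:average_HSF_to_Lasso} — whose proof only needs the cutoff to separate the small entries of $\theta$ on $\overline{\pi^{-1}(W)}$ from the large entries on $\pi^{-1}(W)$, which the cutoff $2p/3$ does in this regime — the set $W'_u=\{j:|\theta_j|\ge 2p/3\}$ satisfies $|\pi^{-1}(W)\,\Delta\,W'_u|\le w/200$, and un-permuting gives $|W\,\Delta\,W_u|\le w/200$.

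The step I expect to be the main obstacle is upgrading this ``small symmetric difference'' guarantee — which on its own permits $\mathcal A$ to make the \emph{same} $w/200$ mistakes every round — into a per-coordinate bound: for fixed $j_0$, the probability that round $u$ misclassifies $j_0$ is at most $\frac1{100}+\frac1{20000\log d}$. I would prove this by conditioning on the value $S=\pi^{-1}(W)$. Because $X'$ is obtained by resampling \emph{within each column} of $X$, its conditional law given $\{\pi^{-1}(W)=S\}$ is exactly $(\mathcal D_{p,S})^{\otimes M}$ with no further dependence on $\pi$; combined with the fact that $\mathcal A$'s coins are fresh, this makes $\theta=\mathcal A(X')$, hence $W'_u$, conditionally independent of the location of $j_0$, which given $S$ is a uniformly random element of $S$ if $j_0\in W$ and of $\overline S$ if $j_0\notin W$. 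On the success event of $\mathcal A$ we have $|S\setminus W'_u|\le w/200$ and $|W'_u\setminus S|\le w/200$; since $|S|\ge w-1\ge w/2$ and $|\overline S|\ge d-w\ge d/2$ in the relevant parameter regime, a uniformly random point of $S$ misses $W'_u$ with probability $\le \frac{w/200}{w-1}\le\frac1{100}$, and a uniformly random point of $\overline S$ hits $W'_u$ with probability $\le\frac{w/200}{d/2}\le\frac1{100}$. Adding the failure probability of $\mathcal A$ gives the claimed per-round bound, comfortably below $1/2$.

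Finally, the $U$ rounds use independent randomness, so for fixed $j_0$ the number of rounds that misclassify it is stochastically dominated by $\mathrm{Bin}(U,\tfrac1{50})$; a Chernoff bound then shows that more than half the rounds misclassify $j_0$ with probability at most $e^{-\Omega(U)}\le 1/(100d)$ once $U=\lceil 100\log d\rceil$. A union bound over all $d$ coordinates gives $\tilde W=W$ with probability $\ge 99/100$, i.e.\ Algorithm~\ref{Alg:ESFtoLasso} solves $\mbox{ESF}_{d,w,p,N}$.
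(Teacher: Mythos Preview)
Your proposal is correct and follows essentially the same approach as the paper: both arguments use the random permutation to symmetrize, so that the per-round misclassification probability of any fixed coordinate $j_0$ is bounded by (roughly) $\frac{w/200}{|W|}$ on the success event of $\mathcal A$, then apply a Chernoff/Hoeffding bound over the $U$ independent rounds and a union bound over the $d$ coordinates. The only cosmetic difference is how the symmetry is invoked—the paper shows $p_{j_1}=p_{j_2}$ for $j_1,j_2\in W$ via a swap permutation $\sigma$ and then bounds each $p_j$ by the average, whereas you condition on $S=\pi^{-1}(W)$ and use that $X'$ (hence $W'_u$) depends on $\pi$ only through $S$, making $\pi^{-1}(j_0)$ uniform on $S$ and independent of $W'_u$; these are two phrasings of the same exchangeability.
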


\begin{proof} Let $u\in\intg_U$ and define $p_{j}=\Pr[j\mbox{ is in  }W_u]$.
Note that $p_j$ is independent of~$u$ because all iterations do the same thing. 
First we show that $\forall u\in\intg_U$ and $\forall j_1,j_2\in W$, $p_{j_1}=p_{j_2}$. Let $p_{j,\pi}=\Pr[j\in \pi^{-1}(W'_u)\mid \pi]$ be the probability of the event that index $j$ is in $\pi^{-1}(W'_u)=W_u$ if the $u$th iteration used the permutation $\pi$. We can see that $p_{j}=\frac{1}{d!}\sum\limits_{\pi\in S_d} p_{j,\pi}$. Consider a permutation $\sigma\in S_d$ satisfying that
\[
\sigma(j)=
    \begin{dcases}
    j_2, & \text{if } j=j_1,\\
    j_1, & \text{if } j=j_2,\\
    j, &\text{otherwise}.
\end{dcases}
\]
We have for every $\pi\in S_d$, that $p_{j_1,\pi}=p_{j_2,\sigma\pi}$ because both $j_1$ and $j_2$ are in $W$ and each of them is drawn from the corresponding columns with replacement. Therefore, we have 
\[
p_{j_1}=\frac{1}{d!}\sum\limits_{\pi\in S_d} p_{j_1,\pi}= \frac{1}{d!}\sum\limits_{\pi\in S_d} p_{j_2,\sigma\pi}=p_{j_2},
\]
and using a similar argument, we can also show that for arbitrary $j_1',j_2'\in \overline{W}$, we have $p_{j'_1}=p_{j'_2}$.

Combining the above argument and Theorem~\ref{thm:average_HSF_to_Lasso},
we have that if algorithm~$\mathcal{A}$ succeeds, then for every $j\in W$, $p_j\geq 0.995$ and for every $j'\in \overline{W}$, $p_{j'}\leq 0.005$. This implies that for every $j\in W$, $\Pr[j\notin \tilde{W}]\leq 1/200d$ and for every $j' \in \intg_d\setminus W$, $\Pr[j'\in \tilde{W}]\leq 1/200d$ by Hoeffding bound. Also because algorithm~$\mathcal{A}$ outputs a $p/4000$-minimizer with error probability at most $1/(20000\log d)$, by union bound the probability that $\cal A$ fails in at least one of the $U$ inner loops is at most $1/200$. Hence $\tilde{W}$ is the correct answer for $\mbox{ESF}_{d,w,p,N}$ with probability $\geq 1-1/200-d/200d= 99/100$.
\end{proof}

Similar to the arguments in Section~\ref{sec:worst_to_average}, the above theorem tells us that we can convert an instance of $\mbox{ESF}_{d,w,p,N}$ to an instance of $\mbox{DSF}_{\mathcal{D}_{p,W}}$ (and hence an instance for Lasso). Again the matrix $R$ is produced offline and therefore we do not use extra queries in Algorithm~\ref{Alg:ESFtoLasso} apart from the runs of~$\cal A$. Now suppose we have a $T$-query classical algorithm that outputs (with success probability $\geq 1-1/(20000\log d)$) a $p/4000$-minimizer for Lasso with respect to $\mathcal{D}_{p,W}$ for arbitrary $W\subset \intg_d$ of size $w$ or $w-1$, then using $T\cdot 100 \log d$ queries, we can solve $\mbox{ESF}_{d,w,p,N}$ with probability $\geq 99/100$. 
Note that we can construct such a high-success-probability Lasso solver by applying a Lasso solver with success probability $\geq2/3$ for $\mathcal{O}(\log\log d)$ times and then outputting the output vector with the smallest objective value (estimating objective values with additive error $p$ with success probability $\geq 1-\delta$ uses only $\mathcal{\tilde{O}}(\log (1/\delta)/p^2)$ queries). 
Also, Theorem~\ref{thm:ESF_LB} gives a $\Omega((d-w)/p^2)$ lower bound for $\mbox{ESF}_{d,w,p,N}$, and hence we obtain a classical lower bound of $\tilde{\Omega}(d/\epsilon^2)$ queries for Lasso for $\epsilon \in (1/\sqrt{d},1/200)$ by letting $p=\epsilon/2$ and by the fact $\lfloor 1/\epsilon \rfloor\geq w \geq\lfloor 1/\epsilon \rfloor-1$:

\begin{corollary}
Let $\epsilon\in (1/\sqrt{d}, 1/200)$, $w$ be either $\lfloor 1/\epsilon\rfloor$ or $\lfloor 1/\epsilon\rfloor-1$, $p=1/(2\lfloor 1/\epsilon\rfloor)$, and $W\subset \intg_d$ with size $w$. Every bounded-error classical algorithm that computes an $\epsilon$-minimizer for Lasso with respect to $\mathcal{D}_{p,W}$ uses $\tilde{\Omega}(d/\epsilon^{2})$ queries.
\end{corollary}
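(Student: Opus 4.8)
The plan is to finish the chain of reductions assembled in this appendix and argue by contradiction: a cheap classical algorithm for an $\epsilon$-scale Lasso minimizer would give a cheap classical algorithm for the exact set-finding problem $\mbox{ESF}_{d,w,p,N}$, contradicting the $\Omega((d-w)/p^2)$ lower bound of Theorem~\ref{thm:ESF_LB}. First I would fix the instance: take $p=\Theta(\epsilon)$ an integer multiple of $1/N$, with the hidden constant chosen so that an $\epsilon$-accurate minimizer for $\mathcal D_{p,W}$ is in particular a $p/4000$-minimizer (the accuracy required by Theorem~\ref{thm:average_HSF_to_Lasso}); set $w=1/(2p)=\Theta(1/\epsilon)$; and choose $N=\Theta(1/\epsilon^2)$ with $p>1/\sqrt N$ and $pN\in\mathbb N$. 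Since $\epsilon>1/\sqrt d$ we have $w=O(\sqrt d)=o(d)$, so $d-w=\Theta(d)$, and after substituting $1/p^2=\Theta(1/\epsilon^2)$ the ESF lower bound reads $\Omega(d/\epsilon^2)$.

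Next, starting from any bounded-error ($\geq 2/3$ success) classical Lasso solver $\mathcal A_0$ making $T$ queries, I would amplify its success probability to $\geq 1-1/(20000\log d)$: run $\mathcal A_0$ independently $O(\log\log d)$ times, estimate the objective value of each output vector to a small constant fraction of $p$ with failure probability $1/\mathrm{poly}(\log d)$ using the cheap $\tilde{\mathcal{O}}(\log(1/\delta)/p^2)$-query objective estimator, and output the candidate with smallest estimate. This gives a solver $\mathcal A$ that, for every hidden set $W$ of size $w$ or $w-1$, outputs a $p/4000$-minimizer for $\mathcal D_{p,W}$ with probability $\geq 1-1/(20000\log d)$ using $\tilde{\mathcal{O}}(T)+\tilde{\mathcal{O}}(1/\epsilon^2)$ queries. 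Feeding $\mathcal A$ into Algorithm~\ref{Alg:ESFtoLasso}, its correctness theorem shows that $\mbox{ESF}_{d,w,p,N}$ is solved with probability $\geq 99/100$ by running $\mathcal A$ on the $U=\lceil 100\log d\rceil$ permuted-and-resampled copies of the instance $X$; the permutations and the resampling matrices are produced offline, so no extra queries to $X$ are incurred and each query to the resampled matrix translates to one query to $X$. Hence the whole procedure uses $\tilde{\mathcal{O}}(T)+\tilde{\mathcal{O}}(1/\epsilon^2)$ queries, and comparing with the $\Omega(d/\epsilon^2)$ bound of the first paragraph — while noting $1/\epsilon^2=o(d/\epsilon^2)$ — forces $T=\tilde\Omega(d/\epsilon^2)$.

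The real weight of the argument sits in the two ingredients I would treat as already established. Theorem~\ref{thm:ESF_LB} is the harder of the two: its proof combines a direct-sum/Markov step (some column of $X$ is queried at most $T/(d-w+1)$ times in expectation under the planted distribution, and deciding whether that column is the planted one is a two-hypothesis problem) with the total-variation estimate $d_{TV}(Hyp(N,N/2,m),Hyp(N,N/2{+}pN,m))=O(m/N+p\sqrt m)$ from Theorem~\ref{thm:Ham_distinguisher}, which in turn relies on Holmes's hypergeometric-to-binomial bound and Hellinger tensorization. The correctness of Algorithm~\ref{Alg:ESFtoLasso} is the other piece: it uses coordinate-permutation symmetry to equalize the recovery probability across all indices of $W$ (and all indices of $\overline W$), then a Hoeffding/union bound so that the majority-vote set equals $W$ exactly with high probability. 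I expect the only genuine obstacle at the level of this corollary to be bookkeeping — verifying that amplification ($O(\log\log d)$ repetitions plus an additive $\tilde{\mathcal{O}}(1/\epsilon^2)$ for objective estimation) and the $O(\log d)$ inner loops of Algorithm~\ref{Alg:ESFtoLasso} cost only polylogarithmic (plus negligible) overhead, and pinning the constant in $p=\Theta(\epsilon)$ so that the parameter ranges of Theorem~\ref{thm:average_HSF_to_Lasso}, Theorem~\ref{thm:ESF_LB} and Theorem~\ref{thm:Rademacher_Lasso} hold simultaneously for $\epsilon\in(1/\sqrt d,1/200)$.
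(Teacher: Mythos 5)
Your proposal is correct and follows essentially the same route as the paper: the corollary is obtained by composing Theorem~\ref{thm:ESF_LB} with Algorithm~\ref{Alg:ESFtoLasso} and its correctness theorem, plus a standard $O(\log\log d)$-repetition amplification (paying an extra $\tilde{\mathcal{O}}(1/p^2)$ for objective estimation) to boost the bounded-error solver to the $1-1/(20000\log d)$ success probability the reduction requires. One small point where you are arguably more careful than the paper: the reduction needs a $p/4000$-minimizer, so you rescale $p=\Theta(\epsilon)$ with $p\geq 4000\epsilon$ so that an $\epsilon$-minimizer already has the required accuracy, whereas the paper sets $p=\epsilon/2$ and implicitly absorbs the factor $8000$ into the $\tilde\Omega$ via a reparameterization of the accuracy target; both give $\tilde\Omega(d/\epsilon^2)$ and the difference is only constant-factor bookkeeping in the choice of $\mathcal{D}_{p,W}$.
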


\end{document}